\title{Trajectories and Platoon-forming Algorithm\\for Intersections with Heterogeneous Autonomous Traffic}
\author{P.C.~Joshi, M.A.A.~Boon and~S.C.~Borst}
\date{}
\begin{document}
\maketitle
\blfootnote{\includegraphics[height=0.8\baselineskip]{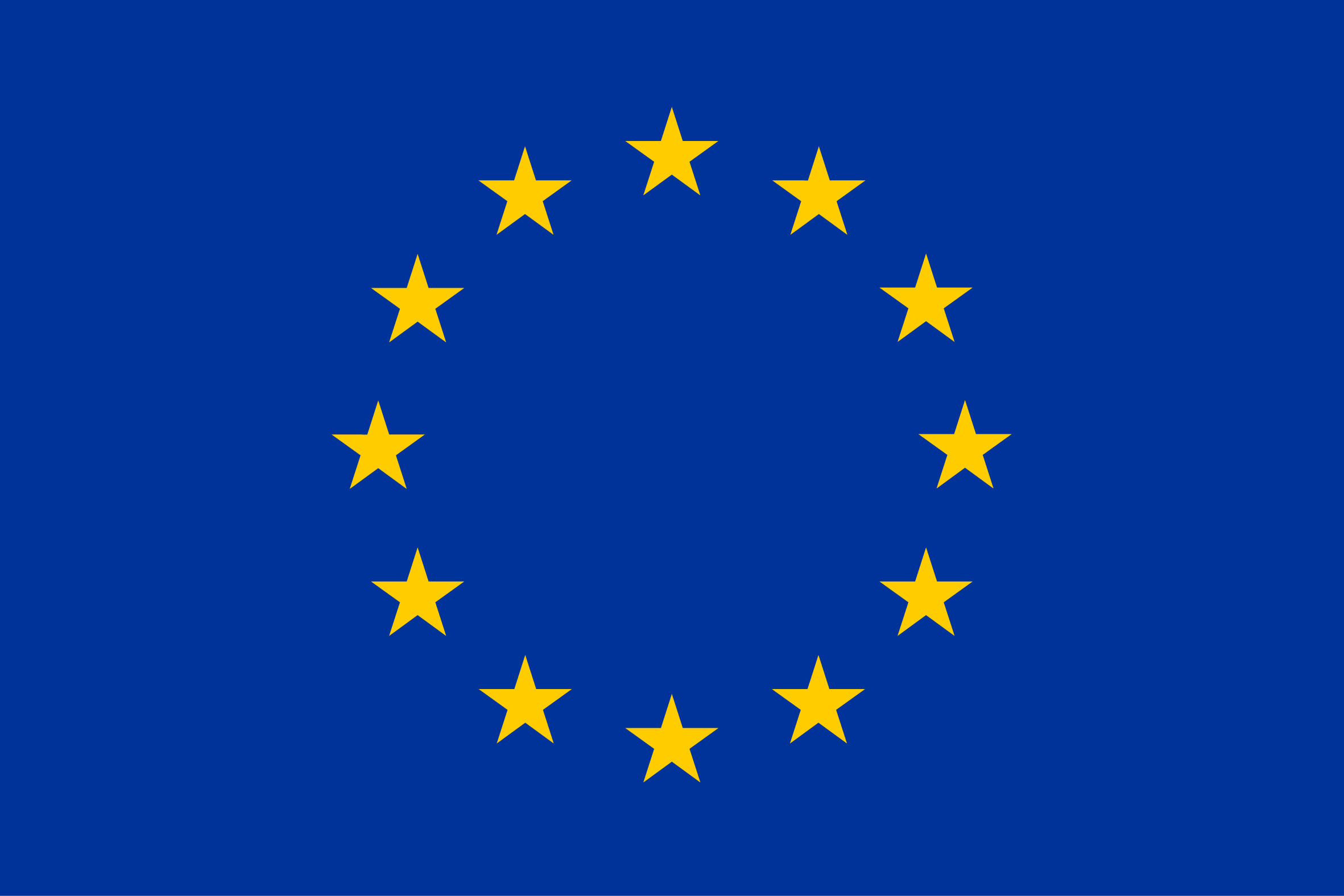} : This research was supported by the European Union’s Horizon 2020 research and innovation programme under the Marie Skłodowska-Curie grant agreement no.~945045, and by the NWO Gravitation project NETWORKS under grant no.~024.002.003.}
\blfootnote{The authors are with the Department of Mathematics \& Computer Science, Eindhoven University of Technology, P.O. Box 513, 5600 MB Eindhoven, The Netherlands (Email: \href{mailto:p.c.joshi@tue.nl}{p.c.joshi@tue.nl}, \href{mailto:m.a.a.boon@tue.nl}{m.a.a.boon@tue.nl}, \href{mailto:s.c.borst@tue.nl}{s.c.borst@tue.nl}).}
\vspace*{-1cm}

\begin{abstract}
    The anticipated launch of fully autonomous vehicles presents an opportunity to develop and implement novel traffic management systems. Intersections are one of the bottlenecks for urban traffic, and thus offer tremendous potential for performance improvements of traffic flow if managed efficiently. Platoon-forming algorithms, in which vehicles are grouped together with short inter-vehicular distances just before arriving at an intersection at high speed, seem particularly promising in this aspect. In this work, we present an intersection access control system based on platoon-forming for heterogeneous autonomous traffic. The heterogeneity of traffic arises from vehicles with different acceleration capabilities and safety constraints. We focus on obtaining computationally fast and interpretable closed-form expressions for safe and efficient vehicle trajectories that lead to platoon formation, and show that these trajectories are solutions to certain classes of optimisation problems. Additionally, we conduct a numerical study to obtain approximations for intersection capacity as a result of such platoon formation. 
\end{abstract}

\section{Introduction}
Fully autonomous vehicles are touted to revolutionise future road traffic operations, and provide an opportunity to improve the efficiency and performance of current traffic flow management systems. Some first working prototypes for autonomous trucking have already been launched \cite{Waabi2022IntroducingDriver}, signalling that this future might be closer than previously thought. Increasing flow capacity and reducing traffic congestion is then especially important to minimise travel time delays in the vicinity of natural bottlenecks such as urban intersections \cite{Lioris2016DoublingPlatooning}.

The traffic management ecosystem at intersections is currently largely comprised of traffic lights. They are used to control the flow of traffic in an efficient manner, under the restrictions of the present technology. With advances in automotive technology, however, new strategies for managing traffic flows can in fact prove to be more efficient. One such class of strategies is platoon-forming algorithms (or platooning). At its core, a platoon-forming algorithm relies on vehicular communication techniques to arrange vehicles in platoons with extremely short inter-vehicular distances. A platoon then travels forth as a single entity, and increases the road capacity since more vehicles can now fit in the same amount of space, while maintaining high speed. Autonomous vehicles are natural candidates to implement platooning because of their communication capabilities.

Platooning has been widely studied in the context of highways. For a survey of the literature on highway platooning, we refer to \textcite{Kavathekar2011VehicleCategorization}. On highways, vehicles travel at higher speeds, and if platooned, will typically experience reduced air resistance. This effect is even more pronounced with heavy-duty vehicles such as trucks, leading to several pilot projects on highway truck platooning \cite{Mcauliffe2017Fuel-EconomySystem, Deutschle2010UseMotorways, Tsugawa2013AnProject, Daems2022PlatooningAnalysis}. Results reveal significant energy savings \cite{Tsugawa2016ASavings, Mcauliffe2017Fuel-EconomySystem, Tsugawa2013AnProject, Daems2022PlatooningAnalysis}, reduction in emissions \cite{Tsugawa2013AnProject}, increase in road capacity \cite{Deutschle2010UseMotorways, Daems2022PlatooningAnalysis} and avoidance of dangerous situations \cite{Daems2022PlatooningAnalysis}. 

More recently, platoon-forming algorithms for fully autonomous vehicles have also been shown to be useful in managing intersections, providing benefits such as an increase in capacity \cite{Lioris2016DoublingPlatooning}, reducing travel times \cite{Bashiri2017AVehicles, Bashiri2018, Jin2013} as well as fuel emissions \cite{Jin2013, Kumaravel2022}. The conflicting paths of the traffic flows meeting at an intersection, however, means that process of platoon formation and crossing of vehicles needs to be scheduled carefully to maintain safety while simultaneously improving efficiency. Several approaches such as multi-agent systems \cite{Jin2013}, reinforcement learning \cite{Prathiba2021}, virtual platooning \cite{Medina2018}, model predictive control \cite{Wang2020ACAVPlatoon} and mixed-integer programming \cite{Lu2022OptimizationBased} have been proposed in the literature for tackling this challenge. Such approaches discuss and combine two aspects -- the scheduling of safe crossing times for vehicles, and trajectory planning to enforce these schedules and satisfy safety constraints. The authors in \cite{Bashiri2017AVehicles, Bashiri2018, Tachet2016RevisitingSystems} focus exclusively on approaches to tackle the scheduling problem, whereas \textcite{Teng2023MotionPlanning} survey motion planning techniques for autonomous vehicles. 

\textcite{Tachet2016RevisitingSystems} propose algorithms that schedule intersection access times for vehicles by forming platoons with a maximum batch size, while also focusing on safety of vehicles. \textcite{Miculescu2020Polling-Systems-BasedSignals} use polling models from queueing theory to schedule safe arrival times of vehicles at an intersection, in platoons. Additionally, the authors determine safe vehicle trajectories in order to facilitate formation of platoons, by solving linear optimisation problems. \textcite{Timmerman2021PlatoonIntersections} show that the linear optimisation problems used to generate vehicle trajectories have closed-form solutions. They present these closed-form trajectory expressions, which reduce computation time and provide explicit insight into the nature of the trajectories. Further, they propose a new optimisation objective for vehicle trajectories focusing on passenger comfort. \textcite{Kumaravel2022} develop a two-level optimal framework consisting of jobshop scheduling to schedule crossing times for platoons (first level) and optimisation problems for vehicle trajectories (second level). Analytical solutions for the trajectory optimisation problems are presented as well. 

Although the literature on platoon-forming algorithms has been rapidly growing, there remain significant challenges still. These include making the algorithms more flexible and suitable for realistic scenarios. In this paper we address one of the major limitations which the current algorithms do not account for - traffic heterogeneity, while maintaining efficiency and safety. The different characteristics of vehicles play a more significant role in urban areas as compared to highways, for instance. This is due to the frequency of accelerating and decelerating manoeuvres necessary in urban areas, especially at intersections.

With heterogeneity in traffic, ensuring safety of vehicles also becomes a more complex issue. While this could still be handled by for e.g.\ suitably extending the optimisation procedure proposed in \cite{Miculescu2020Polling-Systems-BasedSignals}, it is important to obtain solutions that can be computed fast, as in \cite{Timmerman2021PlatoonIntersections}, which is now more challenging due to the differing characteristics of vehicles. Thus, our goal is to come up with an efficient approach for controlling access at an unsignalised intersection, taking the heterogeneity of autonomous vehicular traffic into account, subject to safety constraints.

In this work, we use a platooning framework for the intersection access control system \footnote{3D animations of our control system are available as ancillary files, along with supporting text in Appendix \ref{app:animation}.}, consisting of two parts - a platoon-forming algorithm and a speed-profiling algorithm, as proposed by \cite{Miculescu2020Polling-Systems-BasedSignals, Timmerman2021PlatoonIntersections}, which we extend to account for heterogeneity in traffic. The platoon-forming algorithm uses the earliest arrival times of incoming vehicles to the intersection to compute the earliest possible safe crossing times, through a suitably adapted polling model from queueing theory. Using these safe vehicle crossing times as input, the speed-profiling algorithm generates vehicle trajectories that enable platoon formation by the time the vehicles reach the intersection, while maintaining safety and efficiency.

For the speed-profiling algorithm we choose an optimality criterion that helps to achieve the potential increase in capacity offered by the platoon-forming algorithm. Vehicle trajectories are then obtained as the solution to certain classes of optimisation problems where trajectories are optimised simultaneously. We demonstrate that it is possible to obtain these expressions for the trajectories in closed-form, and in fact show that these trajectories are optimal in a stronger sense as well. This is followed by a numerical study to approximate the maximum capacity of the intersection via the access control system, revealing an interplay between queueing primitives and the fundamental parameters of the speed-profiling algorithm which affects the stability of the entire system. 

With our novel analysis for management of an unsignalised intersection with heterogeneous autonomous traffic, the contribution to the literature is four-fold. Firstly, we generalise the platoon-forming algorithms from \cite{Miculescu2020Polling-Systems-BasedSignals, Timmerman2021PlatoonIntersections} to accommodate the more complex safety requirements that arise with heterogeneous traffic. Secondly, we provide an alternative formulation of linear optimisation problems which can be used to simultaneously determine the optimal trajectories for all vehicles in a platoon, and show that solutions to such problems are also optimal for the multi-step process of obtaining the trajectory of each vehicle individually, as proposed in \cite{Miculescu2020Polling-Systems-BasedSignals}. Thirdly, we obtain closed-form trajectory solutions to the previously mentioned optimisation problems, and analytically show that the trajectories are actually optimal in a much stronger sense. These closed-form expressions can be computed very efficiently, and they help obtain better insights into the nature and feasibility of these optimal trajectories. Lastly, we present a numerical method to analyse the capacity of an intersection controlled by such a mechanism, and uncover how the parameters of the speed-profiling algorithm also affect the overall stability of the system.

The rest of the paper is organised as follows: in Section~\ref{sec:framework}, we outline the framework for the access control system and explain the inter-connection between the platoon-forming algorithm and the speed-profiling algorithm. Section~\ref{sec:arrival_scheduler} deals with the platoon-forming algorithm, while the speed-profiling algorithm is covered in Section~\ref{sec:SPA}. The discussion on the integration of the two algorithms is presented in Section~\ref{sec:analysis}, along with the numerical study on the capacity of the intersection under this access control system.

\section{Intersection Access Control System}\label{sec:framework}
\noindent As stated before, in this paper, we develop a novel framework for an intersection access control system with multiple vehicle types. In our framework, building upon the framework introduced in \cite{Timmerman2021PlatoonIntersections}, we assume that the vehicular traffic in an urban area consists of two kinds of autonomous vehicles, say, cars and trucks. We assume that both cars and trucks can attain the same maximum allowed speed $v_{\textit{max}}$, because the restriction on speeds of a vehicle is not due to its capabilities but rather due to the existing legal speed limits. On the other hand, maximum acceleration and/or deceleration is significantly different depending on the type of the vehicles - cars have a significantly lower weight-to-power ratio than trucks and therefore, can accelerate and decelerate at greater magnitude than trucks, which is reflected in the assumption below: 
\begin{assumption}\label{assump:acc_relation}
The maximum acceleration of trucks is lower than that of cars, that is:
\begin{equation}\label{eq:acc_relation_car_truck}
    0 < a_{\textit{max},tr} < a_{\textit{max},c},
\end{equation}
where $a_{\textit{max},tr}$ and $a_{\textit{max},c}$ are the maximum feasible accelerations for a truck and car, respectively. The maximum feasible deceleration for cars and trucks is assumed to be $-a_{\textit{max},c}$ and $-a_{\textit{max},tr}$ respectively.
\end{assumption}

\begin{remark}
The assumption that the maximum feasible decelerations are equal in magnitude to the maximum feasible accelerations is made for convenience and simplification of notation - the framework can be suitably extended to accommodate different magnitudes for the maximum acceleration and deceleration. In fact, the ideas in this paper can easily be extended to the case with more than two vehicle types for the platoon-forming algorithm (Section~\ref{sec:arrival_scheduler}) and the joint optimisation procedure (Section~\ref{sec:SPA_optimisation}).
\end{remark}

The goal is to develop an intersection access mechanism which can dictate how vehicles from various traffic flows access the intersection. As we observed in the introduction, the intersection is a resource shared between conflicting traffic flows, and hence it is important to ensure that safety is maintained while optimising efficiency. 
\begin{figure}[!t]
    \centering
    \includegraphics[width = 0.75\linewidth]{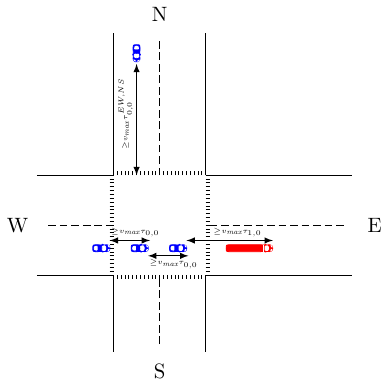}
    \caption{Capturing safety conditions for a two-lane intersection. Type $0$ vehicles are cars and type $1$ vehicles are trucks.}
    \label{fig:intersection_schematic}
\end{figure}

Safety constraints are two-fold, to account for dynamics within each lane and also across lanes. For two vehicles in a lane, the corresponding time separation is comprised of two components - the length of the first vehicle, and the actual time headway, which is calculated considering the rear end of the first vehicle and the front tip of the second. It is also important to ensure safe conditions between certain vehicles from different lanes. Here, the time separation depends on the geometry of the intersection as well as the characteristics of the vehicles involved. Appendix~\ref{app:time_separations} contains expressions for the time separation $\tau^{\ell_i, \ell_j}_{i,j}$ between two vehicles: $i$ travelling in lane $\ell_i$ and $j$ travelling in lane $\ell_j$, which can be different from lane $\ell_i$.  Then, safety can be defined as:

\begin{definition}[Safety]\label{def:safety}
A pair of vehicles -- vehicle A of type $i$ travelling in lane $\ell_i$ followed by vehicle B of type $j$ travelling in the same lane -- is safe at time $s$, if the distance between the front tips of both vehicles is at least $v_{\textit{max}} \tau^{\ell_i,\ell_i}_{i,j}$, where $\tau^{\ell_i,\ell_i}_{i,j}$ is the corresponding minimum time separation. That is, if the position functions (w.r.t.\ time) of the bumpers of two consecutive vehicles are denoted by $x_A$ and $x_B$ respectively, then the vehicles are safe at time $s$ if:
\begin{equation*}
    |x_A(s) - x_B(s)| \geq v_{\textit{max}} \tau^{\ell_i,\ell_j}_{i,j}.
\end{equation*}
If the vehicles are travelling in separate lanes, then the distance between the front tip of vehicle B (travelling in lane $\ell_j$) and the beginning of the intersection should be at least $v_{\textit{max}} \tau^{\ell_i,\ell_j}_{i,j}$, when vehicle A is entering the intersection area, as demonstrated in Figure \ref{fig:intersection_schematic} via the lone car on the NS lane.
\end{definition}
A visual interpretation of the safety constraints is provided in Figure \ref{fig:intersection_schematic}. From now on, we will omit the word `bumper', and assume that the position function of a vehicle refers to the position function of the tip of its bumper. Within a lane, the time separation depends on the characteristics of the corresponding vehicles. Thus, for ease of notation, we will drop the superscripts when referring to a pair of vehicles in the same lane, and denote the time separation by $\tau_{i,j}$.

In this paper, we will be looking at trajectories of vehicles, which represent paths taken by vehicles throughout given time intervals. Or, more accurately, a trajectory represents the function of position of a vehicle w.r.t.\ time for a certain time interval. We will discuss trajectories in greater detail in Section~\ref{sec:SPA}.

\paragraph{Possible implementation considerations}\leavevmode

Before we proceed, we briefly indicate how our proposed mechanism might be implemented in practice. It should be noted that the following discussion is on a conceptual basis -- a detailed analysis of the implementation is beyond the scope of this paper. We consider a setting where all of the vehicles are autonomous, and these would be Level 4 or 5 autonomous vehicles, based on the 6-level classification by \cite{2021SurfaceVehicles}. It is assumed that a central controller receives information about the imminent arrivals of these vehicles at some time before they actually reach the intersection, by utilising vehicle communication techniques. The functions of this central controller could possibly be served by Unmanned Aerial Vehicles (UAVs), as proposed by \textcite{Bouassida2023}. Vehicles have no cause for slowing down when they are still far away from the intersection area, and hence the vehicles are travelling at speed $v_{\textit{max}}$. Thus, the central controller has a dynamic list consisting of the expected time of arrival for each vehicle, along with its type, that is, the time at which each vehicle would reach the intersection area if it continued at speed $v_{\textit{max}}$.

With this dynamic list, a scheduling algorithm employed by the central controller (platoon-forming algorithm) computes a list of safe crossing times for each vehicle, that is, times at which it is safe for the vehicle to enter the intersection area without conflict. We will see in Section~\ref{sec:arrival_scheduler} that this can be achieved with the help of a polling model from queueing theory. With this list of safe crossing times, a speed-profiling algorithm determines trajectories for all the vehicles, so that they can be present at the intersection at their scheduled crossing time, at maximum speed $v_{\textit{max}}$, as constrained. The specific trajectory can then be communicated by the central controller to each vehicle. Note that with the constraint on vehicle trajectories, there is now no room for error - the trajectories have to be determined perfectly in order to maintain safety.

\section{Platoon-forming Algorithm}\label{sec:arrival_scheduler}
\noindent In this section we discuss the mechanism behind the platoon-forming algorithm. In principle, the platoon-forming algorithm takes as input expected arrival times of vehicles at an intersection, and generates a safe schedule of crossing times, i.e., a list of earliest possible times for each vehicle at which it is safe to enter the intersection area. This is achieved by describing the queueing dynamics near an intersection in terms of a polling model.

Polling models are broadly used to analyse situations in which a single server is responsible for serving several queues. Queues are formed by customers/jobs arriving at random times, all requiring service from the server. As soon as its service is completed, a job leaves the system and the server starts serving the next job. The server can switch from serving jobs from one queue to another based on its service discipline; commonly analysed service disciplines include exhaustive, gated, $k$-limited, Global First-Come First-Served and random. However, during the switching process, an additional amount of time, known as the switchover time, might be necessary before the server can serve the next queue. In a standard polling model, jobs arrive to queues according to Poisson processes, and inter-arrival and services times are i.i.d random variables within each of the queues. A comprehensive overview of polling models and their analyses is provided by \textcite{Takagi2000AnalysisModels} and \textcite{Borst2018Polling:Perspective}.

At an intersection, the fact that several road traffic flows have to share the same space (and also that conflicting flows cannot access the intersection space at the same time) makes polling systems a natural way to model this situation, as explained by \textcite{Boon2011ApplicationsSystems}. The server of the polling model represents access to the intersection in our framework, and each lane $\ell \in \{1, \cdots, n\}$ is represented by a queue of its own. Each vehicle forms a `job' or a `customer' in a queue. When the server in the polling model starts serving a particular customer, the corresponding vehicle starts crossing the intersection area. The arrival of a vehicle in a lane in the polling model is its time of arrival at the intersection, had it been the only vehicle in the system, that is, if it experienced no delay due to other vehicles accessing the intersection area. The service time of a vehicle is modelled to be exactly the time needed to pass before the next vehicle can safely enter the intersection area. In other words, a vehicle enters the intersection area at the beginning of its service period, and at the end of the service period, the next vehicle from the same lane enters the intersection area, if the service discipline allows it.

The capacity of the intersection is inversely proportional to the mean service time, hence it is desirable to have quick service. Of course, our safety constraints dictate that the minimum distance between two consecutive vehicles of types $i$ and $j$ in the same lane must be at least $v_{\textit{max}} \tau_{i,j}$, where $\tau_{i,j}$ is the minimum time separation between them. From the time it enters the intersection area onward, every vehicle is assumed to possess speed $v_{\textit{max}}$. Hence the service time of a vehicle of type $i$ will then be $\tau_{i,j}$, where $j$ is the type of the following vehicle. Within each queue, vehicles are served on a First-Come, First-Served basis. The waiting time of a vehicle in the queueing model is equal to the delay incurred by the vehicle. The mean delay incurred by a vehicle in a platoon-forming algorithm has been analysed in \cite{Timmerman2021PlatoonIntersections}. The delay or the waiting time is an important quantity while determining the nature of trajectories, as we will see later in Section~\ref{sec:SPA_closed_form}. Finally, clearance times, which are the times when the access to the intersection is being switched from one lane to another, are represented by switchover times in the polling model. Table~\ref{tab:polling_model} summarises the discussion above, to provide a link between the polling model and our traffic system. 

\begin{table}[!t]
\centering
\begin{tabular}{|c|c|}
    \hline
    \textbf{Polling Model}  &  \textbf{Traffic System} \\
    \hline
        Queues & Lanes \\
        Server & Intersection controller \\
        Jobs/Customers & Vehicles \\
        Arrival time & Free-flow arrival time at intersection\\
        Service time & Time headway\\
        & (depends on current and following vehicle)\\
         Waiting time & Delay\\
         Switchover times & Clearance times\\
         \hline
    \end{tabular}
    \caption{Correspondence between the polling model and our traffic system.}
    \label{tab:polling_model}
\end{table}

Although we have specified how elements from the polling model relate to our system, we have not yet discussed the service discipline, which dictates how the server switches between queues in a polling model. In the traffic system, the service discipline will determine when access to the intersection is cut off for one flow and opened up to the next flow. On the one hand, it is globally fair to let vehicles access the intersection in order of their arrival across all lanes. However, if lanes are switched frequently, the clearance/switchover times add up and the rate of vehicles crossing the intersection per unit time decreases, thus compromising on efficiency. This was also observed by \textcite{Tachet2016RevisitingSystems}, when comparing the mean delay between the global First-Come First-Served service discipline (FAIR) with a $k$-limited service discipline (BATCH). On the other hand, in an exhaustive service discipline, the server continues serving a queue until it empties, and only then moves on to the next queue. \textcite{Timmerman2021PlatoonIntersections} performed a three-way comparison of the mean delay experienced by vehicles by focusing on the BATCH strategy, gated and exhaustive service discipline, and the exhaustive service discipline was observed to have the lowest mean delay. Additionally, as we will see in Section~\ref{sec:SPA}, the exhaustive service discipline leads to simpler trajectories with at most one stop, thus possibly reducing computational complexity, leading us to focus on exhaustive service. 

An interesting point of difference between a traditional polling model and the polling model used here arises during the switchover process. In a traditional polling model, the server visits all the queues in a cyclic order, even if they are empty. In this version of the polling model, however, the server skips the next queue in the cyclic order during the switchover process if it is empty, and visits the next non-empty queue in order. This saves on switchover times and thus reduces delays.

\begin{algorithm}[!t]
\caption{Pseudocode for the main function in the platoon-forming algorithm.}\label{alg:polling_sim}

\hspace*{\algorithmicindent} \textbf{Input:}
\begin{itemize}
    \item For each lane $\ell$:
\begin{itemize}
    \item Free-flow vehicle arrival times $t_{a}[\ell] = [\cdots, t_a[\ell][i] \cdots] $
    \item Types of arriving vehicles $\Psi[\ell] = [\cdots, \Psi[\ell][i], \cdots]$
\end{itemize}
    \item Service times between vehicles in the same lane $\tau_{i,j}$ (same lane) and $\tau^{\ell_i, \ell_j}_{i,j}$ (when changing lanes)
\end{itemize}

\hspace*{\algorithmicindent} \textbf{Output:} Intersection crossing times $t_{f}[\ell] = [\cdots,t_{f}[\ell][i], \cdots]$ of vehicles arriving in lane $\ell$
\begin{algorithmic}[1]
    \Function{Simulate}{$T$} \Comment{$T$ is the time horizon}
        \State $FutureEventSet \gets \{Event(\mathcal{A}, \Psi[\ell][i], \ell, t_{a}[\ell][i]) \ \forall i, \ell \} $ \Comment{Upcoming events}
        \State $Queues \gets [\ \underbrace{[\ ], \cdots, [\ ]}_{n \text{ many}}\ ] $ \Comment{Vehicles awaiting service}
        \State $t \gets 0$ \Comment{Current time}
        \State $t_f \gets [\ \underbrace{[\ ], \cdots, [\ ]}_{n \text{ many}}\ ]$ \Comment{Crossing times of vehicles}
        \State $count \gets [\ \underbrace{0, \cdots, 0}_{n \text{ many}}\ ]$ \Comment{Number of vehicles served}
        \Statex    
        \While{$t < T$}
            \State $e \gets FutureEventSet.next()$ \Comment{Earliest upcoming event}
            \State $t,\ \psi,\ j \gets e.time,\ e.vehicleType,\ e.lane$
            \If{ $e.kind$ is $\mathcal{A}$}
                \State $v \gets Vehicle(t, \psi)$
                \State $Queues[j].\text{append}(v)$ \Comment{Add vehicle to queue}
            \Else \Comment{$e$ is a Beginning-of-Service event}
                \State $count[j] \gets count[j] + 1$
                \State $v \gets Queues[j].\text{pop}()$ \Comment{First vehicle in queue $j$ gets service}
                \State $t_{f}[j].\text{append}(t)$ \Comment{Crossing time of vehicle $v$}
                \If{$\textsc{Length}(Queues[j]) \neq 0$ } \Comment{ Can add another vehicle to the platoon}
                    \State $v_{new} \gets Queues[j].\text{pop}()$
                    \State $FutureEventSet.\text{append}(Event(\mathcal{BS},v_{new}.type, j, t + \tau_{\psi, v_{new}.type}))$
                    \State $Queues[j].\text{appendleft}(v_{new})$ \Comment{Add back to the head of the queue}
                \ElsIf{$t_a[j][count[j]]
                \leq t + \tau_{\psi, \Psi[j][count[j]]}$} \Comment{Can extend platoon}
                    \State $FutureEventSet.\text{append}(Event(\mathcal{BS},\Psi[j][count[j]], j,t + \tau_{\psi, \Psi[j][count[j]]}))$   
                \Else 
                    \If{$\textsc{Length}(Queues[i]) = 0 \ \forall i$} \Comment{No vehicles awaiting service in system}
                        \State $ nextQueue \gets \argmin\limits_{\ell } \left(\max(t_a[\ell][count[\ell]], t + \tau^{j, \ell}_{\psi, \Psi[\ell][count[\ell]]}) \right) $
                        \State $\psi_{new} \gets \Psi[nextQueue][count[nextQueue]]$
                        \State $t_1 \gets \max(t_a[nextQueue][count[nextQueue]], t + \tau^{j, nextQueue}_{\psi, \psi_{new}})$
                        \State $FutureEventSet.\text{append}(Event(\mathcal{BS},\psi_{new}, nextQueue, t_1))$
                    \Else \Comment{A queue exists with vehicles waiting for service}
                        \State $k \gets \textsc{Next\_NonEmpty\_Queue}(Queues, j, t, \Psi, t_a)$
                        \State $v_{new} \gets Queues[k].\text{pop}()$
                        \State $FutureEventSet.\text{append}(Event(\mathcal{BS},v_{new}.type, k, t + \tau^{j, k}_{\psi, v_{new}.type}))$
                        \State $Queues[k].\text{appendleft}(v_{new})$
                    \EndIf
                \EndIf
            \EndIf
        \EndWhile
        \State \Return $t_f$
    \EndFunction
\end{algorithmic}
\end{algorithm} 

\paragraph{Implementation}\leavevmode

The platoon-forming algorithm is implemented by means of two categories of objects -- Vehicles and Events. Associated with each vehicle are two Events -- an Arrival ($\mathcal{A}$) Event, and a Beginning-of-Service ($\mathcal{BS}$) Event. A Vehicle object also possesses characteristics such as time of arrival to the intersection in free flow and type (car/truck), whereas an Event is identified by the time of its occurrence, its kind ($\mathcal{A}$/$\mathcal{BS}$), along with the vehicle type (car/truck) and the lane. The function \textsc{Simulate} described in Algorithm~\ref{alg:polling_sim} yields the beginning-of-service times for each vehicle, which are the earliest possible times at which a vehicle can safely enter the intersection area, as discussed before. Since the beginning-of-service times for two consecutive vehicles are separated by at least the appropriate time headway, the schedule is safe for vehicles from the moment they enter the intersection area onward.  

\section{Speed-profiling algorithm} \label{sec:SPA}

In this section, we turn our attention to the other half of the framework. Specifically, we now have the task of ensuring that the crossing times for each lane $\ell$ as received from the platoon-forming algorithm have feasible trajectories associated with them, leading up to the intersection. Such algorithms are named as speed-profiling algorithms in \cite{Timmerman2021PlatoonIntersections}. It is worth observing that the nature of the arrival process and service requirements, as well the exhaustive service discipline, all together imply that the waiting times of vehicles from the head to the tail in a platoon are non-increasing - for a pair of vehicles in a platoon, the difference between their crossing times is exactly the corresponding time separation, but the associated inter-arrival time is equal to at least the same time separation service time. If not handled carefully, this could lead to inefficient trajectories at best and unsafe situations at worst, thus motivating the need for a speed-profiling algorithm. Such a speed-profiling algorithm would take in as input the arrival and crossing times from the platoon-forming algorithm, and then produces trajectories satisfying the following conditions:
\begin{itemize}\label{list:trajectory_conditions}
    \item The maximum allowed speed of all vehicles is $v_{\textit{max}}$.
    \item Each vehicle enters the intersection area at its crossing time, and at speed $v_{\textit{max}}$.
    \item Cars can change their speed at rates belonging to the interval $[-a_{\textit{max},c}, a_{\textit{max},c}]$.
    \item Trucks can change their speed at rates belonging to the interval $[-a_{\textit{max},tr}, a_{\textit{max},tr}]$.
\end{itemize}
To proceed further, we make the following assumption:
\begin{assumption}
There exists a `control region' of length $x_0(\ell)$ in lane $\ell$ just before the intersection, in which vehicle speeds can be dictated by the central traffic controller, so as to ensure that the vehicles follow pre-determined trajectories.
\end{assumption}
The notion of a control region has been introduced by \textcite{Miculescu2020Polling-Systems-BasedSignals}, and its existence ensures that the trajectory of each vehicle is strictly controlled when near the intersection, so that efficiency and safety can be guaranteed. The platoon-forming algorithm ensures that vehicles have feasible trajectories that are safe with respect to the other vehicles, given that the control region is long enough. This is because when vehicles are grouped into platoons, the delays they incur are non-increasing (from the head to the tail of the platoon) as we have observed previously. However, it is not enough to know about the existence of feasible trajectories, given arrival and crossing times; it is also important to determine these trajectories, which is what the speed-profiling algorithm addresses. 

Since a speed-profiling algorithm concerns only the dynamics internal to each lane, we drop all lane-related notation for ease, from now on. If the arrival time of vehicle $i$ (denoted as $t_{a,i}$) in the polling model is known, then the time $t_{0,i}$ at which the vehicle enters the control region can be derived - it is exactly $t_{a,i} - \frac{x_0}{v_{\textit{max}}}$, because the arrival time of a vehicle in the polling model is the time at which that vehicle would have arrived at the intersection if it travelled at maximum speed $v_{\textit{max}}$ all the way to the intersection, i.e., if it incurred no delay. For convenience, we assume that the control region is long enough that when a vehicle enters the control region, it is far away from the intersection area and hence possesses speed $v_{\textit{max}}$ and does not have to change its trajectory to account for the upcoming intersection yet. 

Let us model each lane forming the intersection in its own one-dimensional space, such that the origin represents the intersection and the control region occupies the interval $[-x_0, 0)$, thus simplifying notation. We then arrive at the following set of initial and terminal conditions:
\begin{align}
x_i(t_{0,i}) &= -x_0,  &\dot{x_i}(t_{0,i})= v_{\textit{max}},\ i \in \{1, \cdots, m \}, \label{eq:initial_cond}\\    
x_i(t_{f,i}) &= 0,  &\dot{x_i}(t_{f,i})= v_{\textit{max}}, \ i \in \{1, \cdots, m \}, \label{eq:terminal_cond}
\end{align}
where $t_{f,i}$ is the crossing time of vehicle $i$ (as determined by the platoon-forming algorithm). The position function $x_i$ for vehicle $i$ is interpreted as the distance to the intersection area. In our model, the position functions of vehicles while in the control region have the range $[-x_0, 0]$, to be consistent with \cite{Miculescu2020Polling-Systems-BasedSignals} and \cite{Timmerman2021PlatoonIntersections}.

The list of conditions above, along with the conditions in Equations~\eqref{eq:initial_cond}-\eqref{eq:terminal_cond}, and the relation between accelerations (\thref{assump:acc_relation}) and safety (\thref{def:safety}), do not fully determine the trajectories in general, and still leave some room for these to be constructed so as to optimise some performance criterion. By specifying the criterion, it is possible to set up optimisation problems to obtain the optimal trajectories for each vehicle. Let us first introduce a type function $\psi$ for each lane, that identifies the type of the vehicle:
\begin{equation}\label{eq:type_function}
    \begin{aligned}
    &\psi: \{1, \cdots, m \} \to \{ 0, 1\}\\
    &\psi(i) = 
    \begin{dcases}
        0, & \text{ if vehicle } i \text{ is a car} \\
        1, & \text{ if vehicle } i \text{ is a truck.}
    \end{dcases}
\end{aligned}
\end{equation}

We also define classes of compatible trajectories based on their type:
\begin{equation}\label{eq:pi_c}
    \Pi_j = \Set*{x_i \given
    \begin{aligned}
    &x_i \in C^1([t_{0,i} , t_{f,i}] \to \R), \\ &\dot{x}_i \textrm{ is differentiable a.e.}  \\
    & 0 \leq \dot{x}_i(s) \leq v_{\textit{max}} \textrm{ for } s \in [t_{0,i} , t_{f,i}]\\
    & |\ddot{x}_i(s)| \leq a_{\textit{max},j} \quad \textrm{ for } s \in [t_{0,i} , t_{f,i}] \\
    & x_i(t_{0,i}) = -x_0, \quad \dot{x}_i(t_{0,i}) = v_{\textit{max}} \\
    & x_i(t_{f,i}) = 0, \quad \dot{x}_i(t_{f,i}) = v_{\textit{max}} 
    \end{aligned}
    },
\end{equation}
where $ j \in \{c, tr\}$. These classes $\Pi_c$ and $\Pi_{\textit{tr}}$ contain exactly those trajectories that satisfy all assumptions and constraints. In particular, it is important for trajectories to be continuously differentiable, so that there are no jumps in the corresponding velocity function. Velocity and acceleration constraints on the trajectories are also included, along with the initial and terminal conditions from Equations~\eqref{eq:initial_cond}-\eqref{eq:terminal_cond}.

\subsection{Joint Optimisation}\label{sec:SPA_optimisation}
We will examine a general case: suppose that the entering and crossing times, along with the types are given by the platoon-forming algorithm for a platoon of $m$ vehicles in a lane. For each vehicle $i$, $t_{0,i}$, and $\psi(i)$ are already known, and $t_{f,i}$ can be obtained as output from the platoon-forming algorithm. We use this to set up the following optimisation problem with a general objective function:
\begin{equation}
\begin{aligned}\label{eq:opt_problem_general_form}
\argmin_{\substack{x_i:[t_{0,i} , t_{f,i}] \to \R\\  i = 1, \cdots, m}} & f(x_1, \cdots, x_m) \\
\textrm{subject to: } & x_i \in \begin{cases}
    \Pi_c, & \text{if } \psi(i) = 0,\\
    \Pi_{\textit{tr}}, & \text{if } \psi(i) = 1,
\end{cases}\\
 & x_i(s) - x_{i+1}(s) \geq v_{\textit{max}} \tau_{\psi(i),\psi(i+1)} \\
 & \:\textrm{for } s \in [t_{0,i+1} , t_{f,i-1}],\\
 &\: i = 1, \cdots, m-1.
\end{aligned}
\end{equation}
The objective function is represented here by a general function $f$ of trajectories of the vehicles in the platoon. The first set of constraints ensures that the vehicle trajectories are from the appropriate class of compatible trajectories (either $\Pi_c$ or $\Pi_{\textit{tr}}$), and then all pertinent constraints are imposed on the trajectories by being a member of that class. The second set of constraints dictates that the distance between two consecutive vehicles should always be at least the safety distance (\thref{def:safety}).

The optimality criterion that we will examine in this paper minimises the sum of areas under the position curve for each vehicle:
\begin{equation}\label{eq:optimization_criterion_capacity}
    f(x_1, \cdots, x_m) = \sum_{i=1}^{m}\int_{t_{0,i}}^{t_{f,i}} |x_i(s)|\ ds.
\end{equation}
\begin{figure}[t!]
    \centering
    \includegraphics[width = 0.8\linewidth]{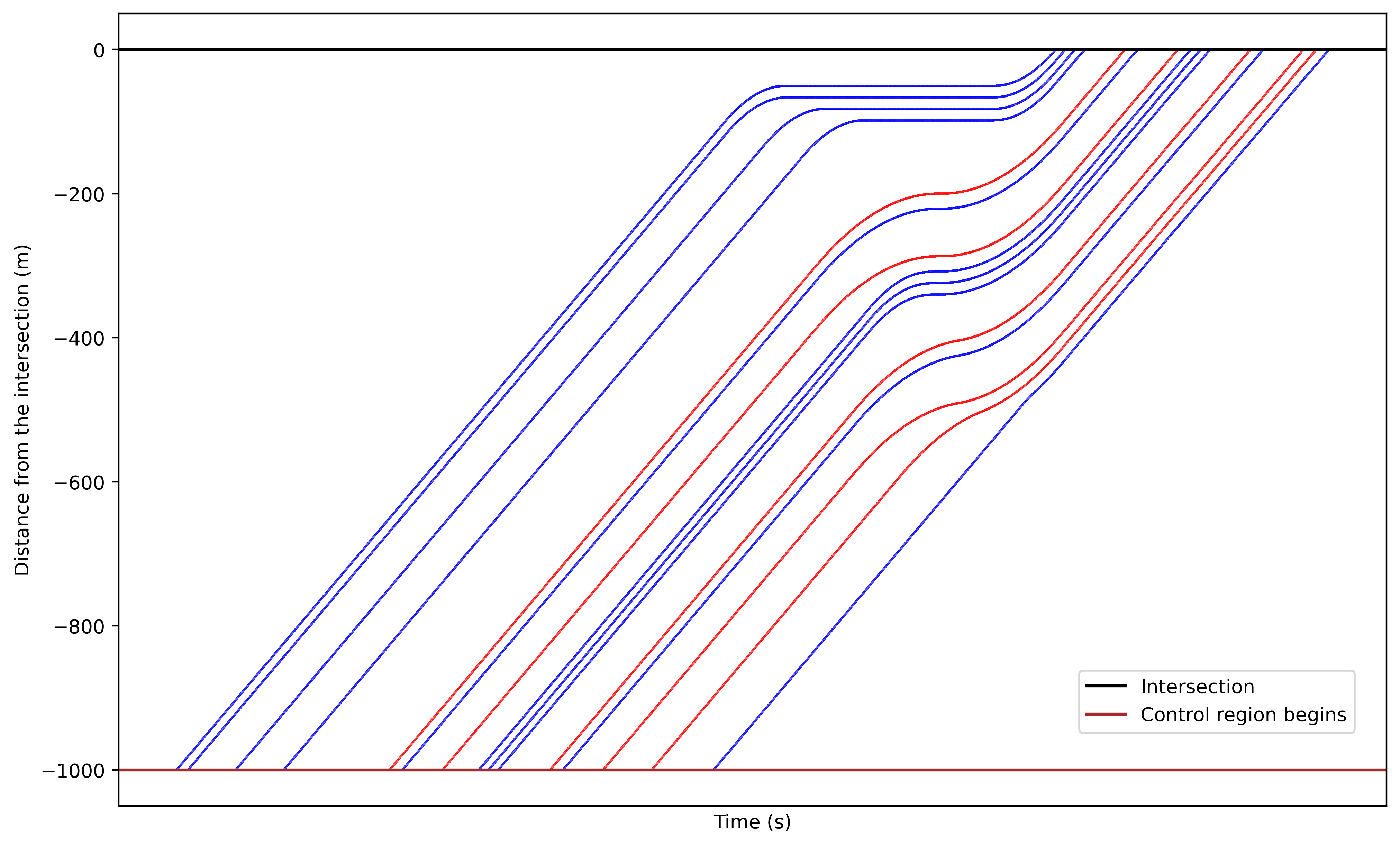}
    \caption{Trajectories obtained by solving the joint optimisation problem for a platoon. Blue and red curves represent cars and trucks, respectively.}
    \label{fig:joint_opt_capacity}
\end{figure}
This optimality criterion jointly ensures that all vehicles in the platoon are as close to the intersection as possible during their time in the control region, which in turn frees up the rest of the control region for other (following) vehicles. This is important, since the trajectories of vehicles can only be controlled in the control region. Thus, it is possible that the above optimisation problem along with the objective function \eqref{eq:optimization_criterion_capacity} might not have a feasible solution for a given length of the control region. However, there always exists a control region length (say $x_0^*$) (depending on the platoon) for which the problem \eqref{eq:opt_problem_general_form} with the objective function \eqref{eq:optimization_criterion_capacity} has a feasible solution. Furthermore, it is easy to see that for all control region lengths larger than $x^*_0$, the associated optimisation problem is always feasible.

This criterion is inspired by the MotionSynthesize procedure proposed in \cite{Miculescu2020Polling-Systems-BasedSignals}, which computes the trajectory for each vehicle individually to be as close to the intersection as possible, while being safe w.r.t.\ the preceding vehicle. In our \emph{joint} optimisation approach, however, we compute the trajectories of all vehicles in a platoon at the same time, resulting in the equivalent trajectories, as shown in \thref{prop:jointly_optimal}.

The trajectories obtained by solving the resulting optimisation problem for a sample platoon are illustrated in Figure~\ref{fig:joint_opt_capacity}. We can make a few interesting observations about the resulting optimal trajectories:
\begin{enumerate}
    \item Firstly, the trajectories are such that the vehicles are travelling at maximum speed $v_{\textit{max}}$ for as long as possible after entering the control region, due to the choice of the optimality criterion, which leads to vehicles being as close to the intersection as possible while traversing the control region, as also observed in \cite{Miculescu2020Polling-Systems-BasedSignals} and \cite{Timmerman2021PlatoonIntersections}. 
    \item There are various possible trajectory shapes depending on the delay of the vehicle. If the vehicle faces no delay, i.e., if $t_{f,i} = t_{a,i}$, then the required average speed is $v_{\textit{max}}$. However, if a vehicle is delayed, then its average speed has to be less than $v_{\textit{max}}$, meaning that it has to decelerate at some point in the control region. It can decelerate to a full stop in case of a `long' delay, and if the delay is `short', then the vehicle does not decelerate to a full stop, but to some strictly positive minimum speed. The required average velocity for each vehicle can also be computed from the entering and crossing times - it is exactly $\frac{x_0}{t_{f,i} - t_{0,i}}$ for vehicle~$i$. Or, in terms of arrival time, the required average speed for vehicle $i$ is $\frac{x_0}{t_{f,i} - t_{a,i} + \frac{x_0}{v_{\textit{max}}}}$.
    \item If a vehicle needs to decelerate and accelerate, i.e., if it is delayed, then the manoeuvre takes place at maximum deceleration/acceleration, unless a car is preceded by a truck (see point~\ref{item:behaviour_preceded_truck}).
    \item Furthermore, if the head of a platoon is delayed, then any change in its speed is delayed as long as possible. The distance between the intersection and the position at which the head of a platoon starts decelerating is just long enough for the deceleration and acceleration manoeuvres. 
    \item All vehicles in a platoon that need to decelerate and accelerate reach maximum speed $v_{\textit{max}}$ again at the same time, i.e., when the head of the platoon is exiting the control region.
    \item \label{item:behaviour_preceded_truck} If a platoon of vehicles contains a truck, then the behaviour of the vehicles following the truck is influenced by the truck, because the maximum acceleration and deceleration for the truck is less than that of a car. The optimal trajectory for a car that enters the control region with the minimum time separation from the truck is to follow the truck exactly, and as the inter-arrival time between the truck and the following car increases, the car has more flexibility to revert back to its original behaviour (i.e., decelerating and accelerating maximally), as can be observed from Figure~\ref{fig:joint_opt_capacity}.
\end{enumerate}

\subsection{Closed-form Expressions}\label{sec:SPA_closed_form}

By solving linear optimisation problems, we can obtain optimal trajectories that ensure safety and efficiency. However, solving these problems is cost-intensive, in terms of time as well as computing resources, even with a state of the art optimisation solver \cite{GurobiOptimizationLLC2022GurobiManual}. Moreover, these solutions have limited interpretability; it is difficult to obtain more insight into the shape and nature of trajectories or predict accurately how a trajectory might change with a slightly different arrival or crossing time or even know if the optimal solution is unique. For these reasons, \textcite{Timmerman2021PlatoonIntersections} solved the optimisation problem corresponding to the MotionSynthesize procedure for two choices of the optimality criterion and computed closed-form expressions for trajectories for homogeneous traffic. In this section, we will motivate the closed-form expressions for vehicle trajectories by inspecting the nature and characteristics of the trajectories obtained from the joint optimisation algorithm, and then prove analytically that such closed-form expressions indeed give rise to safe and optimal trajectories in Appendix~\ref{app:spa_optimality}. These closed-form expressions allow us to better understand the nature of the optimal trajectories and allow for much faster implementation (which we refer to again at the end of this section).

As we noted before in Section~\ref{sec:SPA_optimisation}, the trajectory of a vehicle is not only dependent on its arrival and crossing time, but also on the preceding vehicle(s), which is why it is important to know certain information about the preceding vehicle(s) in the platoon as well. This gives rise to three main cases:
\begin{enumerate}
    \item Car preceded only by cars in its platoon
    \item Car preceded by a truck in its platoon, and
    \item Truck. 
\end{enumerate}
The trajectory of a truck is governed by its entering and crossing times as well as the crossing time of the head of the platoon. The type of vehicles preceding a truck does not affect the trajectory of that truck, which we also remarked upon in Section~\ref{sec:SPA_optimisation}. Only the crossing time of a truck is determined by taking into account the type of its preceding vehicle. If a truck is not delayed, then the optimal trajectory is to drive at speed $v_{\textit{max}}$ throughout the control region, and if it is delayed, then the optimal trajectory is to delay deceleration as long as possible and then change velocity at the maximum rate possible, which is lower than that of a car. Neither behaviour is affected by the type of the preceding vehicle; however, if the current vehicle is a car, then we no longer can make such a simple distinction.

For the above reasons, the trajectories in cases 1 and 3 above can be obtained using the expressions that have already been derived in \cite{Timmerman2021PlatoonIntersections}, by inserting the appropriate acceleration and deceleration parameters. Thus, we focus on case 2, i.e., on obtaining the trajectory for a car given that it is preceded by a truck at some position in the platoon. We emphasise here that this truck can occupy any position from the head of the platoon up to the immediately preceding vehicle, but this truck needs to be the last truck that precedes the car. It is trivial to examine the case where the closest preceding truck is not delayed, because that means that the current car is not delayed either, and hence the only trajectory is to travel at speed $v_{\textit{max}}$ throughout the control region. However, a delayed truck can have two kinds of trajectories: one where it decelerates to a full stop, and the other one where the delay is low, and so the truck decelerates to a certain non-zero speed, and later accelerates to attain speed $v_{\textit{max}}$ again. This indicates the need to consider two sub-cases, and determine the trajectory for the car in each sub-case separately.

\paragraph{Preceding truck comes to a stop}

We first look at the case when the closest preceding truck w.r.t.\ the current car decelerates to a full stop. A necessary (but not sufficient) condition for the car to be able to decelerate to a full stop is for the closest preceding truck to also stop. Suppose that the closest preceding truck for the current car occupies the $j^{th}$ position, and the car occupies the $i^{th}$ position in the platoon. There are $i-j$ $(\geq 0)$ cars between the truck and the current car. Based on the scheduled delay for the car, we can distinguish between the following four situations. The trajectories obtained in these four situations are illustrated in Figure~\ref{fig:SPA_CS_Traj}. Note that the delay for the car can be at most equal to the delay incurred by the truck, since the waiting time or the delay of a vehicle is non-increasing in terms of its position in the platoon as observed before. 

\begin{figure}[!t]
    \centering
    \includegraphics[width = \linewidth]{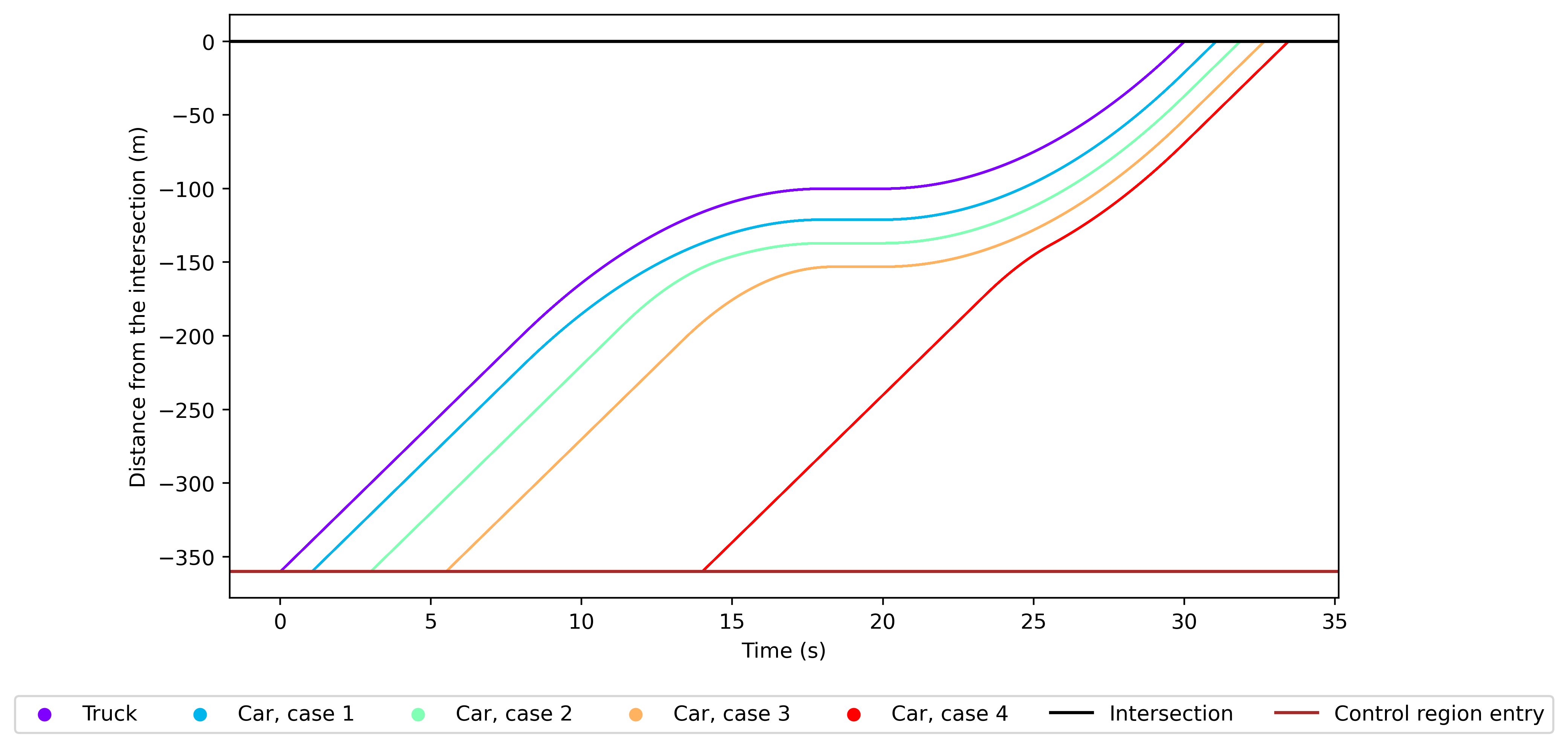}
    \caption{Possible trajectories of a car when preceded by a truck.}
    \label{fig:SPA_CS_Traj}
\end{figure}

\textbf{1) Car follows truck exactly} 
If the delay of the car is exactly equal to the delay of the truck, then the car and the truck have very similar trajectories. This occurs when the vehicles following the truck (up to the current car) all arrive separated exactly by the corresponding minimum inter-arrival times. One possible reason for this phenomenon could be that they were part of a platoon formed at an upstream intersection. Since the delay of the truck is large enough so as to decelerate to a complete stop, the car is also forced to emulate this behaviour. In this case, the acceleration, velocity and position functions for the truck are exactly like the corresponding equations in the case of a car stopping with no trucks preceding, but with $a_{\textit{max},c}$ replaced by $a_{\textit{max},tr}$. Further, the time at which the truck comes to a full stop is given by (see \cite{Timmerman2021PlatoonIntersections}):
\begin{equation*}
    t_{\textit{stop},j} = \frac{x_0}{v_{\textit{max}}} - (t_{f,j} - t_{f,1}) + t_{0,j},
\end{equation*}
where $t_{f,1}$ is the crossing time of the head of the platoon. The position of the truck at that instant is:
\begin{equation*}
    x_j(t_{\textit{stop},j}) = -v_{\textit{max}}(t_{f,j} - t_{f,1}) - \frac{v_{\textit{max}}^2}{2a_{\textit{max},tr}}.
\end{equation*}
Since the delay of the car is equal to that of its closest preceding truck, the acceleration functions of the two vehicles are observed to coincide in the interval $[t_{0,i}, t_{f,j}]$, leading to:
\begin{equation*}
    t_{*,i} = t_{*,j}, \textrm{ for } * \in \{dec, stop, acc \},
\end{equation*}
where $t_{\textit{dec},i}$, $t_{\textit{stop},i}$ and $t_{\textit{acc},i}$ denote the instants at which vehicle $i$ starts its deceleration phase, comes to a full stop and starts its acceleration phase, respectively.

The acceleration profile of the car (below) is motivated by the shape of the first velocity plot in Figure~\ref{fig:SPA_CS_velo}, and the velocity function and the trajectory are obtained by integrating $\ddot{x}_i$ once and twice respectively, with the appropriate limits:\\
\begin{minipage}[t!]{0.45\linewidth}
\begin{equation*}
    \ddot{x}_i (s) = 
    \begin{dcases}
    \ddot{x}_j (s) & s \in [t_{0,i}, t_{f,j}),\\
    0 & s \in [t_{f,j}, t_{f,i}),
    \end{dcases}
\end{equation*}
\end{minipage}
\begin{minipage}[!t]{0.45\linewidth}
\vspace*{-\baselineskip}
\begin{equation*}
    \dot{x}_i (s) = 
    \begin{dcases}
    \dot{x}_j (s) & s \in [t_{0,i}, t_{f,j}],\\
    v_{\textit{max}} & s \in [t_{f,j}, t_{f,i}],
    \end{dcases}
\end{equation*}
\end{minipage}

\begin{equation}\label{eq:spa_cs_truck_stop_1}
    x_i(s) = 
    \begin{dcases}
    x_j (s) -  v_{\textit{max}} \sum_{k=j}^{i-1} \tau_{
    \psi(k), \psi(k+1)
    } & s \in [t_{0,i}, t_{f,j}],\\
    -v_{\textit{max}} \sum_{k=j}^{i-1} \tau_{\psi(k), \psi(k+1)} + v_{\textit{max}} (t - t_{f,j}) & s \in [t_{f,j}, t_{f,i}].
    \end{dcases}
\end{equation}
In other words, the trajectory of the car is just a shifted version of that of the closest preceding truck. It is shifted by a distance of $v_{\textit{max}} \sum_{k=j}^{i-1} \tau_{\psi(k), \psi(k+1)}$ throughout the control region, which is exactly equal to the required minimum safety distance between them, based on their relative positions. This case is represented by the blue curve in Figure \ref{fig:SPA_CS_Traj}.

We also introduce the notion of feasibility of a trajectory, which outlines the condition(s) necessary for a vehicle trajectory to be \emph{compatible} (see Equation~\eqref{eq:pi_c}) and satisfy all constraints. In this case, the trajectory is feasible as long as the deceleration time-point of the car is after it enters the control region, because we require that vehicles enter the control region with speed $v_{\textit{max}}$. Thus, $t_{\textit{dec},i} \geq t_{0,i}$ must hold, and hence:
\begin{equation*}
     \frac{x_0}{v_{\textit{max}}} - (t_{f,j} - t_{f,1}) + t_{0,j} - \frac{v_{\textit{max}}}{a_{\textit{max},tr}} - t_{0,i} = \frac{x_0}{v_{\textit{max}}} - \sum_{k=0}^{j-1} \tau_{\psi(k), \psi(k+1)} + t_{0,j} - \frac{v_{\textit{max}}}{a_{\textit{max},tr}} - t_{0,i}  \geq 0.
\end{equation*}
Recall that the delay of the last preceding truck and the current car are equal in this situation, and hence the above statement can be simplified to:
\begin{equation*}
     \boxed{\frac{x_0}{v_{\textit{max}}} - \frac{v_{\textit{max}}}{a_{\textit{max},tr}} \geq t_{f,i} - t_{f,1} = \sum_{k=0}^{i -1} \tau_{\psi(k), \psi(k+1)} .}
\end{equation*} 
Finally, let us define $T^*_i$ in the following way:
\begin{equation}\label{eq:t_star}
    T^*_i \coloneqq \inf_{s \in [t_{0,i},\ t_{f,i}] } \left\{ \ddot{x}_i(s) = \ddot{x}_j(s) \textrm{ and } \dot{x}_i(s) = \dot{x}_j(s) \textrm{ and } x_i(s) = x_j(s) - v_{\textit{max}} \sum_{k=j}^{i-1} \tau_{\psi(k), \psi(k+1)} \right\},
\end{equation}
which is the first instant when a car catches up with the closest preceding truck, i.e.\ the distance between them is exactly equal to the required minimum distance for safety. Then in case 1, $T^*_i = t_{0,i}$, and the speed and the acceleration profile of the car and the truck coincide after $T^*_i$.

\begin{figure}[!t]
    \centering
    \includegraphics[width = \linewidth]{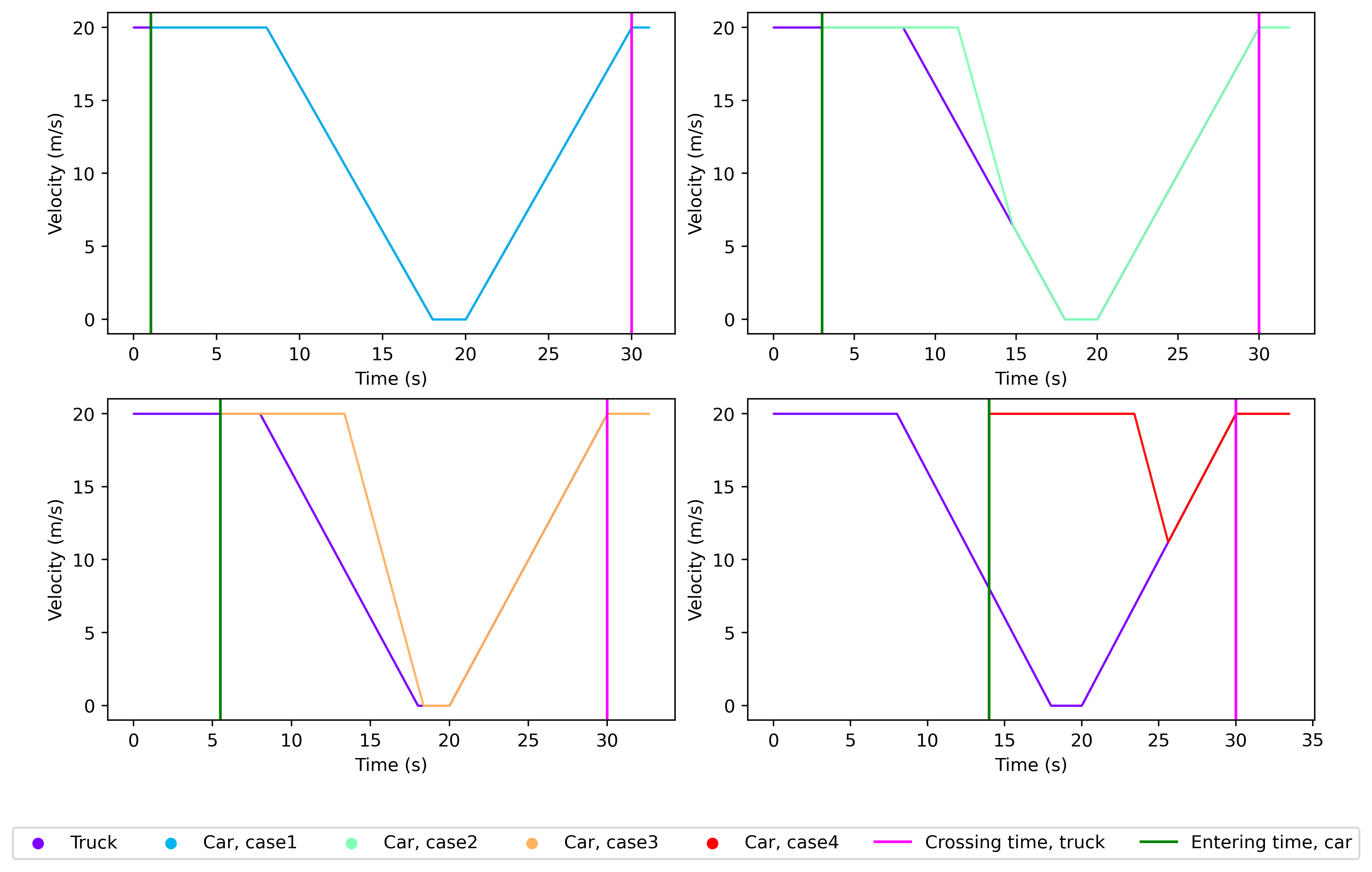}
    \caption{Velocity profile of the car in these four cases.}
    \label{fig:SPA_CS_velo}
\end{figure}

\textbf{2) Car switches decelerations}
If the delay of the car is slightly lower than the delay of the truck, i.e., if the arrival time of the car is slightly later than the required minimum, then the observed optimal trajectory for the car involves two deceleration phases. In the first phase, the car decelerates at $-a_{\textit{max},c}$, and in the second phase the deceleration is equal to that of the truck ($-a_{\textit{max},tr}$). This can be explained intuitively as follows: since the entering time of the car is later than the first case, the car is trailing behind and needs to catch up with the truck in order to exit the control region at its scheduled time. With higher deceleration, less time is required to achieve lower speeds, which means that the first part of the trajectory (travelling at speed $v_{\textit{max}}$) can continue for longer, satisfying the criterion of minimising the distance to the intersection. Alternatively, as the car enters later, there is more flexibility for the car to follow the trajectory which is optimal when there are no trucks, since a truck forces non-optimal behaviour on the cars in the trailing platoon. At the switching time $t_{\textit{sw},i}$ the car has caught up with the all the vehicles in between it and the closest preceding truck (i.e.\ the distance between them is exactly the minimum required safety distance), and from that time onward, it follows the trajectory of its closest preceding truck. Again, we write down an expression for the acceleration profile of the car by inspecting the plot for the corresponding case in Figure~\ref{fig:SPA_CS_velo}, and obtain the velocity and position functions from the acceleration function as before:\\
\begin{minipage}[t!]{0.48\linewidth}
\begin{equation*}
    \ddot{x}_i(s) = 
    \begin{dcases}
    0 & s \in [t_{0,i}, t_{\textit{dec},i}),\\
    -a_{\textit{max},c} & s \in [t_{\textit{dec},i}, t_{\textit{sw},i}),\\
    -a_{\textit{max},tr} & s \in [t_{\textit{sw},i}, t_{\textit{stop},i}),\\
    0 & s \in [t_{\textit{stop},i}, t_{\textit{acc},i}),\\
    a_{\textit{max},tr} & s \in [t_{\textit{acc},i}, t_{f,1}),\\
    0 & s \in [t_{f,1}, t_{f,i}],
    \end{dcases}
\end{equation*}
\end{minipage}
\begin{minipage}[!t]{0.48\linewidth}
\vspace*{-\baselineskip}
\begin{equation*}
    \dot{x}_i(s) = 
    \begin{dcases}
    v_{\textit{max}} & s \in [t_{0,i}, t_{\textit{dec},i}),\\
    v_{\textit{max}} - a_{\textit{max},c}(t - t_{\textit{dec},i}) & s \in [t_{\textit{dec},i}, t_{\textit{sw},i}),\\
    u - a_{\textit{max},tr}(t - t_{\textit{sw},i}) & s \in [t_{\textit{sw},i}, t_{\textit{stop},i}),\\
    0 & s \in [t_{\textit{stop},i}, t_{\textit{acc},i}),\\
    a_{\textit{max},tr} & s \in [t_{\textit{acc},i}, t_{f,1}),\\
    v_{\textit{max}} & s \in [t_{f,1}, t_{f,i}],
    \end{dcases}
\end{equation*}
\end{minipage}

\begin{equation*}
    x_i(s) = 
    \begin{dcases}
    -x_0 + v_{\textit{max}} (t - t_{0,i}) & s \in [t_{0,i}, t_{\textit{dec},i}],\\
    -x_0 + v_{\textit{max}} (t - t_{0,i}) - \frac{1}{2} a_{\textit{max},c}(t - t_{\textit{dec},i})^2 & s \in [t_{\textit{dec},i}, t_{\textit{sw},i}],\\
    -x_0 + v_{\textit{max}}(t_{\textit{dec},i} - t_{0,i}) + \frac{v_{\textit{max}}^2 - u^2}{2a_{\textit{max},c}} + u(t - t_{\textit{sw},i}) - \frac{1}{2} a_{\textit{max},tr}(t - t_{\textit{sw},i})^2 & s \in [t_{\textit{sw},i}, t_{\textit{stop},i}],\\
    -x_0 + v_{\textit{max}}(t_{\textit{dec},i} - t_{0,i}) + \frac{v_{\textit{max}}^2 - u^2}{2a_{\textit{max},c}} + \frac{u^2}{2a_{\textit{max},tr}} & s \in [t_{\textit{stop},i}, t_{\textit{acc},i}],\\
    -x_0 + v_{\textit{max}}(t_{\textit{dec},i} - t_{0,i}) + \frac{v_{\textit{max}}^2 - u^2}{2a_{\textit{max},c}} + \frac{u^2}{2a_{\textit{max},tr}} + \frac{1}{2} a_{\textit{max},tr}(t - t_{\textit{acc},i})^2 & s \in [t_{\textit{acc},i}, t_{f,1}],\\
    -x_0 + v_{\textit{max}}(t_{\textit{dec},i} - t_{0,i}) + \frac{v_{\textit{max}}^2 - u^2}{2a_{\textit{max},c}} + \frac{u^2 + v_{\textit{max}}^2}{2a_{\textit{max},tr}} + v_{\textit{max}}(t - t_{f,1}) & s \in [t_{f,1}, t_{f,i}].
    \end{dcases}
\end{equation*}
Here $u$ is the velocity of the car at time $t_{\textit{sw},i}$, i.e., when it switches decelerations. Using the relations $t_{\textit{dec},i} = t_{\textit{stop},i} - \frac{u}{a_{\textit{max},tr}} - \frac{v_{\textit{max}} - u}{a_{\textit{max},c}}$ and $t_{\textit{stop},i} = t_{\textit{stop},j}$ along with the positional terminal condition of the car, we get:
\begin{equation*}
    u = v_{\textit{max}} - \sqrt{ \frac{2 a_{\textit{max},c} a_{\textit{max},tr} (v_{\textit{max}} [ (t_{f,j} - t_{0,j}) - (t_{f,i} - t_{0,i})  ])}{a_{\textit{max},c} - a_{\textit{max},tr}} },
\end{equation*}
with which we can entirely determine the trajectory of the car. The condition required for this situation to occur can be obtain by realising that $v_{\textit{max}} > u > 0$, leading to:
\begin{equation*}
    \boxed{t_{f,j} - t_{0,j} - \frac{v_{\textit{max}}}{2} \left( \frac{1}{a_{\textit{max},tr}} - \frac{1}{a_{\textit{max},c}} \right) < t_{f,i} - t_{0,i} < t_{f,j} - t_{0,j}.}
\end{equation*}
Further, in case 2, $T^*_i = t_{\textit{sw},i}$.

\textbf{3) Car catches up with preceding truck while truck is at rest}\label{sec:mul_car_preceding_truck_rest}
By now, we observe a pattern - as the difference between the entering and crossing times of the car keeps getting smaller and smaller, the car enters relatively later than the truck, but there is some catching up to be done in order to satisfy its scheduled crossing time, leading to more `flexibility' for the car to decelerate maximally. In order to be as close to the intersection as possible at all times in the control region, it is ideal for the car to delay the deceleration and subsequent acceleration as long as possible to get closer to the intersection. In this case, the car is able to decelerate maximally, so that the start of its deceleration is delayed as much as possible, thus enabling the car to catch up with its closest preceding truck (and all of the cars in between) as soon as the car comes to rest, as seen in Figure~\ref{fig:SPA_CS_velo}. As a result, the trajectory of the car can be constructed as follows:\\
\begin{minipage}[t!]{0.48\linewidth}
\begin{equation*}
    \ddot{x}_i(s) = 
    \begin{dcases}
    0 & s \in [t_{0,i}, t_{\textit{dec},i}),\\
    -a_{\textit{max},c} & s \in [t_{\textit{dec},i}, t_{\textit{stop},i}),\\
    0 & s \in [t_{\textit{stop},i}, t_{\textit{acc},i}),\\
    a_{\textit{max},tr} & s \in [t_{\textit{acc},i}, t_{f,1}),\\
    0 & s \in [t_{f,1}, t_{f,i}],
    \end{dcases}
\end{equation*}
\end{minipage}
\begin{minipage}[!t]{0.48\linewidth}
\vspace*{-\baselineskip}
\begin{equation*}
    \dot{x}_i(s) = 
    \begin{dcases}
    v_{\textit{max}} & s \in [t_{0,i}, t_{\textit{dec},i}],\\
    v_{\textit{max}} - a_{\textit{max},c}(t - t_{\textit{dec},i}) & s \in [t_{\textit{dec},i}, t_{\textit{stop},i}],\\
    0 & s \in [t_{\textit{stop},i}, t_{\textit{acc},i}],\\
    a_{\textit{max},tr}(t - t_{\textit{acc},i}) & s \in [t_{\textit{acc},i}, t_{f,1}],\\
    v_{\textit{max}} & s \in [t_{f,1}, t_{f,i}],
    \end{dcases}
\end{equation*}
\end{minipage}

\begin{equation*}
    x_i(s) = 
    \begin{dcases}
    -x_0 + v_{\textit{max}} (t - t_{0,i}) & s \in [t_{0,i}, t_{\textit{dec},i}],\\
    -x_0 + v_{\textit{max}} (t - t_{0,i}) - \frac{1}{2} a_{\textit{max},c}(t - t_{\textit{dec},i})^2 & s \in [t_{\textit{dec},i}, t_{\textit{stop},i}],\\
    -x_0 + v_{\textit{max}} (t_{\textit{dec},i} - t_{0,i}) + \frac{v_{\textit{max}}^2}{2a_{\textit{max},c}} & s \in [t_{\textit{stop},i}, t_{\textit{acc},i}],\\
    -x_0 + v_{\textit{max}} (t_{\textit{dec},i} - t_{0,i}) + \frac{v_{\textit{max}}^2}{2a_{\textit{max},c}} + \frac{1}{2} a_{\textit{max},tr}(t - t_{\textit{acc},i})^2 & s \in [t_{\textit{acc},i}, t_{f,1}],\\
    -x_0 + v_{\textit{max}} (t_{\textit{dec},i} - t_{0,i}) + \frac{v_{\textit{max}}^2}{2a_{\textit{max},c}} + \frac{v_{\textit{max}}^2}{2a_{\textit{max},tr}} + v_{\textit{max}}(t - t_{f,1}) & s \in [t_{f,1}, t_{f,i}],
    \end{dcases}
\end{equation*}
with the relation $T^*_i = t_{\textit{stop},i}$. The terminal condition for the position of the car at time $t_{f,i}$ leads to:
\begin{equation*}
    t_{\textit{dec},i} = \frac{x_0}{v_{\textit{max}}} - \frac{v_{\textit{max}}}{2} \left( \frac{1}{a_{\textit{max},c}} + \frac{1}{a_{\textit{max},tr}} \right) - (t_{f,i} - t_{0,i}) + t_{f,1}.
\end{equation*}
In this situation, it must hold that $t_{\textit{stop},i} \leq t_{\textit{acc},i}$ and also $t_{\textit{dec},i} \geq t_{\textit{dec},j}$, yielding:
\begin{equation*}
    \boxed{ \frac{x_0}{v_{\textit{max}}} + \frac{v_{\textit{max}}}{2}\left( \frac{1}{a_{\textit{max},tr}} + \frac{1}{a_{\textit{max},c}} \right) \leq  t_{f,i} - t_{0,i} \leq t_{f,j} - t_{0,j} - \frac{v_{\textit{max}}}{2} \left( \frac{1}{a_{\textit{max},tr}} - \frac{1}{a_{\textit{max},c}} \right) .}
\end{equation*}

\textbf{4) Car catches up with truck during the acceleration phase}
The last possible case in this situation is when the car does not come to a full stop, even though its nearest preceding truck does. The delay for the car is not enough to decelerate to a full stop, and hence, we can write the acceleration function $\ddot{x}_i$ as below, and obtain the velocity and position functions by integrating $\ddot{x}_i$ once and twice, respectively:\\
\begin{minipage}[t!]{0.45\linewidth}
\begin{equation*}
    \ddot{x}_i(s) = 
    \begin{dcases}
    0 & s \in [t_{0,i}, t_{\textit{dec},i}),\\
    -a_{\textit{max},c} & s \in [t_{\textit{dec},i}, t_{\textit{acc},i}),\\
    a_{\textit{max},tr} & s \in [t_{\textit{acc},i}, t_{f,1}),\\
    0 & s \in [t_{f,1}, t_{f,i}],
    \end{dcases}
\end{equation*}
\end{minipage}
\begin{minipage}[!t]{0.45\linewidth}
\vspace*{-\baselineskip}
\begin{equation*}
    \dot{x}_i(s) = 
    \begin{dcases}
    v_{\textit{max}} & s \in [t_{0,i}, t_{\textit{dec},i}],\\
    v_{\textit{max}} - a_{\textit{max},c}(t - t_{\textit{dec},i}) & s \in [t_{\textit{dec},i}, t_{\textit{acc},i}],\\
    u + a_{\textit{max},tr}(t - t_{\textit{acc},i}) & s \in [t_{\textit{acc},i}, t_{f,1}],\\
    v_{\textit{max}} & s \in [t_{f,1}, t_{f,i}],
    \end{dcases}
\end{equation*}
\end{minipage}

\begin{equation*}
    x_i(s) = 
    \begin{dcases}
    -x_0 + v_{\textit{max}} (t - t_{0,i}) & s \in [t_{0,i}, t_{\textit{dec},i}],\\
    -x_0 + v_{\textit{max}} (t - t_{0,i}) - \frac{1}{2} a_{\textit{max},c}(t - t_{\textit{dec},i})^2 & s \in [t_{\textit{dec},i}, t_{\textit{acc},i}],\\
    -x_0 + v_{\textit{max}} (t_{\textit{dec},i} - t_{0,i}) + \frac{v_{\textit{max}}^2 - u^2}{2 a_{\textit{max},c}} + u(t - t_{\textit{acc},i}) + \frac{1}{2} a_{\textit{max},tr}(t - t_{\textit{acc},i})^2 & s \in [t_{\textit{acc},i}, t_{f,1}],\\
    -x_0 + v_{\textit{max}} (t_{\textit{dec},i} - t_{0,i}) +  \frac{v_{\textit{max}}^2 - u^2}{2} \left( \frac{1}{a_{\textit{max},tr}} + \frac{1}{a_{\textit{max},c}} \right) + v_{\textit{max}}(t - t_{f,1}) & s \in [t_{f,1}, t_{f,i}],
    \end{dcases}
\end{equation*}
where $u$ is the lowest speed attained by the car while in the control region. Note that in this case the car under consideration has caught up with the closest preceding truck and all the vehicles in between at the time $t_{\textit{acc},i}$. This is also confirmed by the closed-form expressions, which mimic the trajectory of a truck from then onward, thus leading to $T^*_i = t_{\textit{acc},i}$.

By substituting $t_{\textit{dec},i} = t_{f,1} - \frac{v_{\textit{max}} - u}{a_{\textit{max},tr}} - \frac{v_{\textit{max}} - u}{a_{\textit{max},c}}$, and utilising $x_i(t_{f,i}) = 0$, we get:
\begin{equation*}
    u = v_{\textit{max}} - \sqrt{ \frac{2 a_{\textit{max},tr} a_{\textit{max},c} (v_{\textit{max}} (t_{f,i} - t_{0,i}) - x_0) }{a_{\textit{max},tr} + a_{\textit{max},c}} }.
\end{equation*}
Furthermore, this situation is possible if $v_{\textit{max}} \geq u > 0$:
\begin{equation*}
    \boxed{\frac{x_0}{v_{\textit{max}}} \leq t_{f,i} - t_{0,i} < \frac{x_0}{v_{\textit{max}}} + \frac{v_{\textit{max}}}{2}\left( \frac{1}{a_{\textit{max},tr}} + \frac{1}{a_{\textit{max},c}} \right).}
\end{equation*}

The analysis above covers all possible situations that can arise when a car is preceded somewhere in the platoon by a truck that decelerates to a full stop. We notice that the delay of the car determines its trajectory. Alternatively, we can define $\Delta_i \coloneqq t_{f,i} - t_{0,i}$ to be the amount of time that the $i^{th}$ vehicle in the platoon is scheduled to spend in the control region. Then, if we look at all the four cases together, we can see the quantity $\Delta_i$ of the car holding the $i^{th}$ position can take values in 4 consecutive sub-intervals, and this can be used to apply the appropriate case to compute the closed-form expressions. This is illustrated in Figure \ref{fig:delay_cpbt}. If the $i^{th}$ vehicle in a platoon is a car, and if there is at least one truck in the first $i-1$ vehicles, then the quantity $\Delta_i$ can only lie in the interval $[\frac{x_0}{v_{\textit{max}}}, \Delta_j]$, where $j$ is the index of the closest preceding truck. More precisely, it lies in the interval $[\frac{x_0}{v_{\textit{max}}}, \min\{\Delta_{i-1}, \Delta_j\}]$, since delays in a platoon are non-increasing from the head to the tail. Obviously, $\Delta_i \geq \frac{x_0}{v_{\textit{max}}}$, because this is the minimum time required by any vehicle to traverse the control region.

\begin{figure}[t!]
    \centering
    \includegraphics[width = 0.9\linewidth]{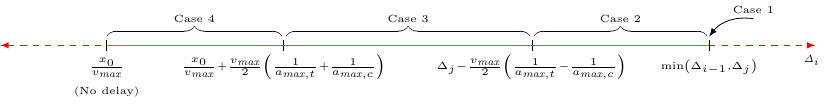}
    \caption{Case distinction for a car preceded by a truck in its platoon, given that the truck comes to a stop in the control region. The distinction depends on the $\Delta$ quantities for the car and the truck (at $i^{th}$ and $j^{th}$ positions respectively), where $\Delta_i = t_{f,i} - t_{0,i}$.}
    \label{fig:delay_cpbt}
\end{figure}

For the analysis of the trajectories of the car that arise when preceded by a truck that does not decelerate to a full stop, we refer the reader to Appendix~\ref{app:SPA_closed_form}.

\paragraph{Implementation}\leavevmode

We now have the closed-form expressions for a general car preceded by a truck, in addition to the previously obtained expressions for the trajectory of an arbitrary truck as well as a car (preceded only by cars in a platoon). A natural question that might arise is how the trajectories for an entire platoon can be obtained from the output of the platoon-forming algorithm. Some trajectory expressions can be expressed directly in terms of quantities such as the delays, acceleration parameters and so on. These include the trajectories of all trucks, and cars that are preceded only by cars. Thus for all such vehicles in a platoon, it is possible to compute their trajectories in a decentralised manner, requiring no input from other trajectories. The only required information is the crossing time $t_{f,1}$ of the head of the platoon, which is obtained as part of the output of the platoon-forming algorithm. We will discuss this more in Appendix \ref{app:spa_optimality}, where we also present the closed-form expressions for the trajectories of such vehicles.

The process of decentralisation would make the trajectory computations even faster. We will see in Appendix \ref{app:SPA_closed_form} that to obtain the trajectories for cars following a truck that does not make a stop in the control region, it is necessary to have information about the lowest speed attained by that truck in the control region. Thus, we propose a trajectory computing scheme as follows: 
\begin{enumerate}[label = (Step \arabic*)]
    \item Compute the trajectories of:
    \begin{enumerate}
        \item all cars until the first truck in the platoon, and
        \item all trucks.
    \end{enumerate}
    These are the vehicles whose trajectories can be computed without requiring any input from other trajectories (except for entering and crossing times of the head of the platoon, which are obtained from the platoon-forming algorithm).
    \item Using the information from the trajectory of each truck (lowest speed attained in the control -- can also be zero), generate the trajectories of all the cars between this truck and the next following truck, if any. The trajectories of these cars can be generated independently of each other.
\end{enumerate}

\section{Analysis}\label{sec:analysis}
In this section, we first combine the platoon-forming algorithm with a speed-profiling algorithm, leading to the complete intersection access control system. Interconnecting the platoon-forming algorithm with a speed-profiling algorithm naturally raises questions about capacity, i.e.\ the maximum traffic intensity that can be sustained by the intersection, and thus the rest of the section is devoted to understanding the capacity of a lane (and the intersection).
\subsection{Integrated mechanism}
\begin{table}[t!]
\centering
    \begin{tabular}{|c|c|c|}
    \hline
       Parameter & Symbol & Value\\
       \hline
       Length of control region & $x_0$ & 600\\
       Max. speed for both vehicle types & $v_{\textit{max}}$ & 20 \\
       Max. feasible acceleration, car & $a_{\textit{max},c}$ & 4 \\
       Max. feasible acceleration, truck & $a_{\textit{max},tr}$ & 2 \\
       Long-term fraction of trucks & $f_{\textit{tr}} $ & 0.4 \\
       Arrival rate parameter & $\lambda_0$, $\lambda_1$ & 0.35 \\
       \hline
    \end{tabular}
    \caption{Parameter values for working example in Section~\ref{sec:analysis}.}
    \label{tab:parameters}
\end{table}
Choosing one of the proposed speed-profiling algorithms along with the platoon-forming algorithm forms our complete access control system that manages traffic flow near an intersection. The working example for this paper is an intersection consisting of two lanes, as shown in Figure~\ref{fig:intersection_schematic}, but we remind the reader that the entire framework can be easily extended to accommodate an arbitrary number of lanes. Vehicles cross the intersection in straight paths, and no turns are allowed.

In (futuristic) principle, the arrival times should be ideally obtained from real-time, on-road vehicle data. Since that is not currently possible, for the purposes of this paper, we simulate arrival times as input to the platoon-forming algorithm. Since the arrival times represent the free-flow arrival times of vehicles to the intersection, the inter-arrival time cannot be arbitrarily small, even in probability, since that would mean that two vehicles from the same lane are arbitrarily close to one another, which is not safe. By the same arguments used to determine the service times, it follows that the inter-arrival time depends on the time separation. As opposed to the service time however, the inter-arrival time of a vehicle is not dependent on the \emph{following} vehicle type, but that of the immediately \emph{preceding} vehicle. Therefore, we set arrival rate parameters $\lambda_\ell$ for each lane, $\ell \in \{1, \cdots, n\}$, and assume the inter-arrival time between vehicle $i$ preceded in its lane $\ell$ by vehicle $i-1$ to be distributed as:
\begin{equation}\label{eq:inter-arrival_dist}
    A_\ell(i) =  \max\left(\tau_{\psi(i-1), \psi(i)} , \text{Exp}(\lambda_{\ell}) \right),
\end{equation}
where $\text{Exp}(\lambda_{\ell})$ refers to an exponential random variable with rate $\lambda_{\ell}$, and $\tau_{\psi(i-1),\psi(i)}$ refers to the minimum time separation between vehicles $i-1$ and $i$ (in a platoon). We see that with the choice of the arrival process, the inter-arrival and service times are dependent on (ordered) pairs of vehicles, and thus a single vehicle will influence two arrival and service times, thereby introducing a weak dependency.

The vehicle types in each lane are independently drawn from a Bernoulli distribution, with the long term probability of a vehicle being a truck denoted by $f_{\textit{tr}}$. 

In the implementation of the platoon-forming algorithm (see Algorithm \ref{alg:polling_sim}), we make a few changes for efficiency and convenience. Firstly, the arrival times of vehicles to the intersection (in free flow) are generated on a sequential basis, so that they can be combined with the platoon-forming algorithm. Secondly, the clearance/switchover times in our implementation are included in the service times for changing lanes.

The resulting trajectories for a certain instance can be found in Figure~\ref{fig:integrated_plot_spa}, with the parameters as given in Table \ref{tab:parameters} above, and Tables \ref{tab:time_separation_same_lane}, \ref{tab:time_separation_diff_lane} in Appendix \ref{app:time_separations}. We also have included 3D video animations of the access control system as ancillary files. A preview image from these animations is included for reference (Figure \ref{fig:animation_snapshot}).
\begin{figure}[!t]
\centering
\begin{tikzpicture}
    \node[anchor=south west,inner sep=0] (image) at (0,0) {\includegraphics[width=0.8\textwidth]{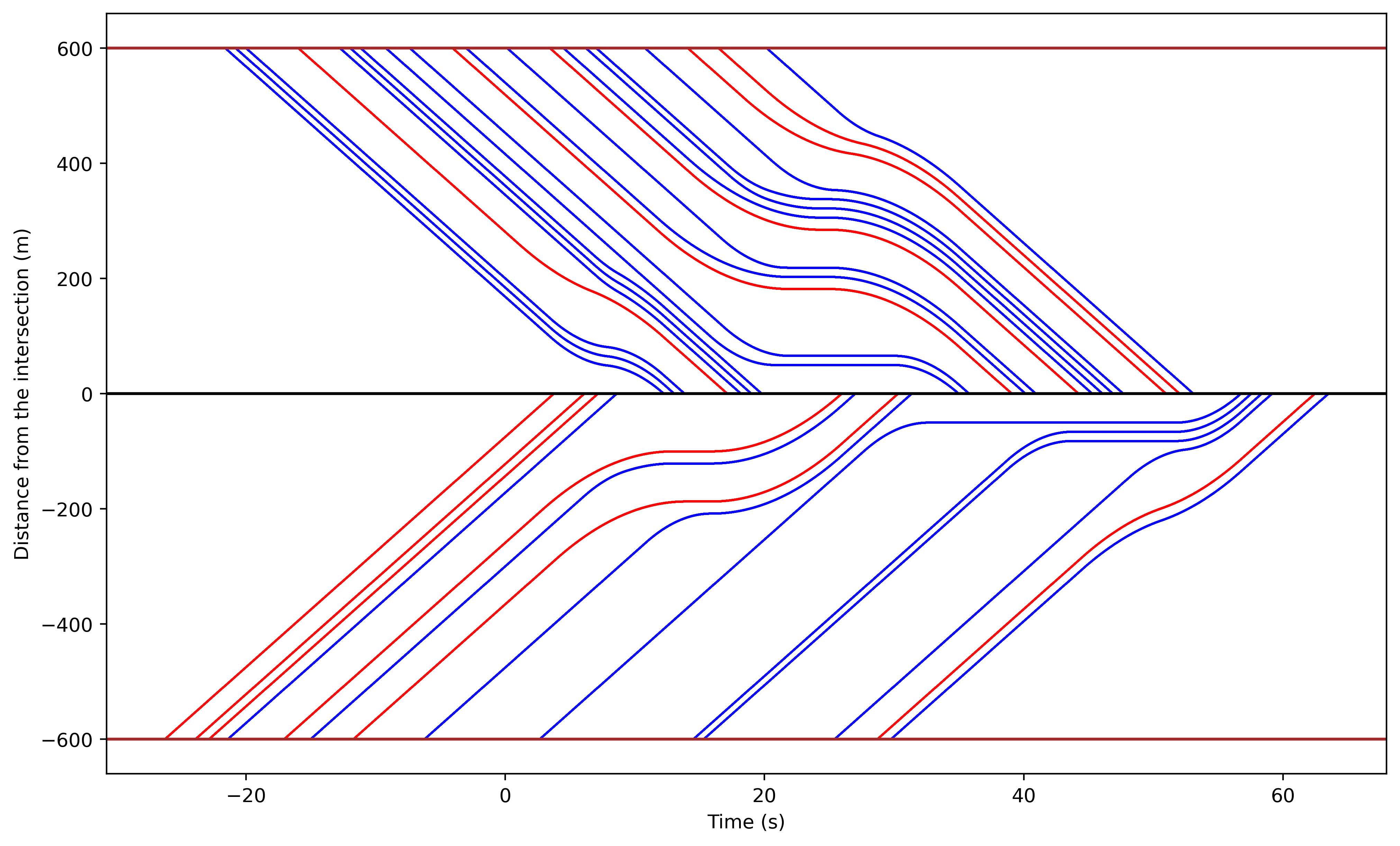}};
    \begin{scope}[x={(image.south east)},y={(image.north west)}]
        \draw[to reversed-to reversed] (0.385,0.547)--(0.4,0.547);
        \draw[to reversed-to reversed] (0.408,0.547)--(0.43,0.547);
        \draw[to reversed-to reversed] (0.435,0.547)--(0.445,0.547);
        \draw[to reversed-to reversed] (0.465,0.522)--(0.553,0.522);
        \draw[to reversed-to reversed] (0.59,0.547)--(0.658,0.547);
        \draw[to reversed-to reversed] (0.67,0.522)--(0.86,0.522);
        \draw[to reversed-to reversed] (0.873,0.547)--(0.953,0.547);
        \draw[to reversed-to reversed] (0.15,-0.05)--(0.2,-0.05) node[right] {Platoon};
        \draw[red, thick] (0.475,-0.05)--(0.525,-0.05) node[right]{\textcolor{black}{Truck}};
        \draw[blue, thick] (0.8,-0.05)--(0.85,-0.05) node[right]{\textcolor{black}{Car}};
    \end{scope}
\end{tikzpicture}
\caption{Visualisation of the optimal trajectories obtained by minimising distance to the intersection.}
\label{fig:integrated_plot_spa}
\end{figure}
\begin{figure}[!t]
    \centering
    \includegraphics[width = 0.8\linewidth]{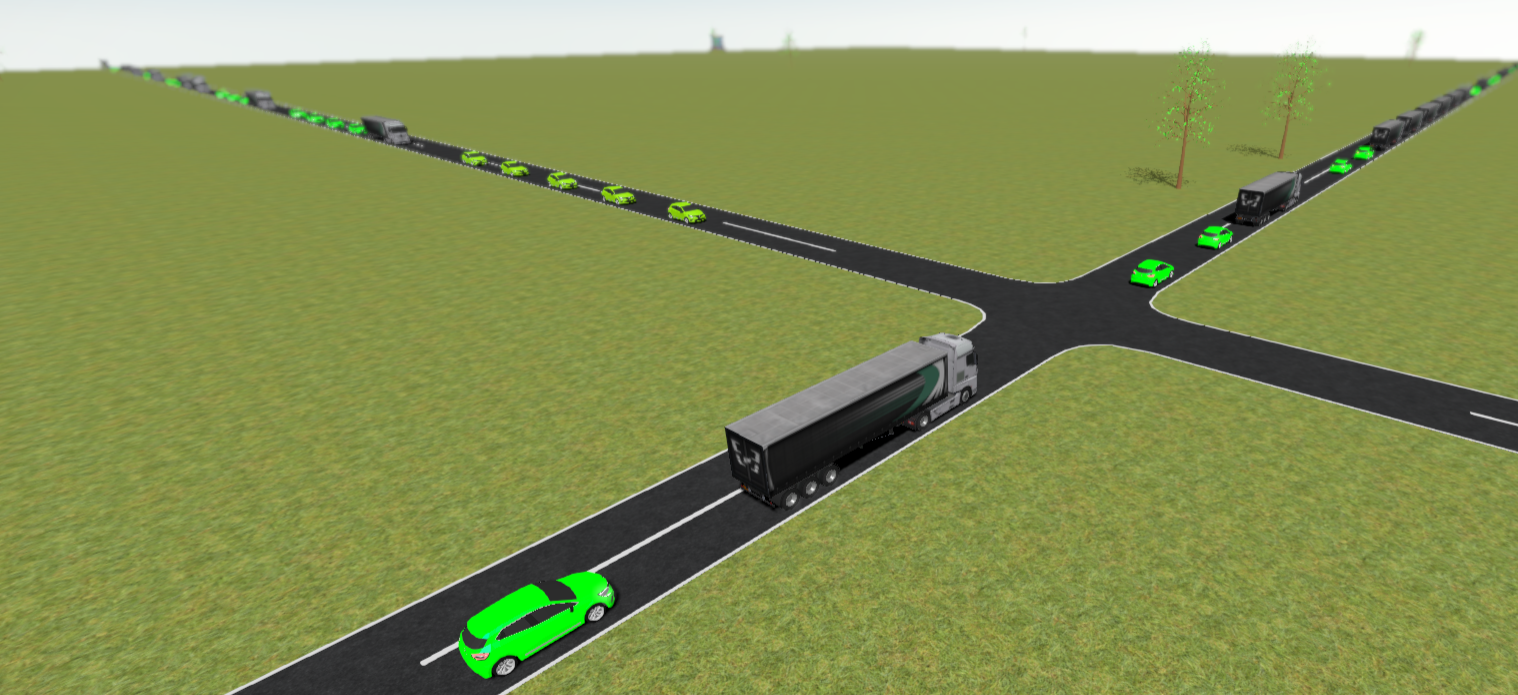}
    \caption{Snapshot from the 3D Animation}
    \label{fig:animation_snapshot}
\end{figure}

Platoons are formed in one of two ways -- vehicles arrive in a platoon structure (with minimum time separation in between them), or vehicles are delayed long enough to come within the minimum safety distance of the following vehicles. In other words, an incoming vehicle can join a platoon as long as its time of arrival is before the end of service of its predecessor. As we noted before in Section~\ref{sec:arrival_scheduler}, by choosing the service discipline to be exhaustive, vehicles are guaranteed to make at most one stop in the control region. This is because an incoming vehicle will either join the current platoon (where some of the member vehicles are already crossing the intersection) or will wait and start a new platoon from the same lane. This makes the trajectories of vehicles independent of other platoons, and hence we can compute trajectories on a platoon-to-platoon basis. This is not possible, for example, with $k$-limited or gated or time-limited service disciplines -- the trajectory of a vehicle can be influenced by its preceding platoon(s).

The speed-profiling algorithm needs input from the platoon-forming algorithm -- by the time a vehicle enters the control region, its trajectory in the control region should ideally be determined completely. In practice, this could be achieved by collecting information about vehicle locations and speeds by means of a second region prior to the control region, as in \cite{Timmerman2021PlatoonIntersections}, and ensuring that the platoons are formed and the trajectories determined before vehicles enter the control region. In-depth analysis of the exact mechanisms involved is beyond the scope of this paper, and for more information we refer to \cite{Timmerman2021PlatoonIntersections}.

\subsection{Capacity Analysis}

\subsubsection{Critical traffic load}
An important benefit of using a polling model (and queueing theory) in our framework is that it provides a natural basis for capacity and performance analysis. A key quantity for the capacity and performance analysis is the (traffic) load per lane. Here, $B_\ell(i)$ denotes the service time of vehicle $i$ in a platoon in lane $\ell$, and the inter-arrival time $A_\ell(i)$ between the arrivals of vehicles $i$ and $i+1$ of a platoon in lane $\ell$ is as defined in Equation ~\eqref{eq:inter-arrival_dist}. With our structure of the arrival process and service times, the joint stochastic process $\{(B_\ell(i), A_\ell(i+1)\}_i$ has a unique stationary distribution for every lane. More specifically, the process $\{(B_\ell(i), A_\ell(i+1)\}_i$ is strictly stationary and metrically transitive, as defined in the framework of \textcite{Loynes1962}. Following \cite{Loynes1962}, we define the process $\{U_\ell(i)\}_i$ for each lane $\ell$:
\begin{equation*}
    U_\ell(i) = B_\ell(i) - A_\ell(i+1),\quad i \in \mathbb{N}_{\geq 0}.
\end{equation*}
Under this setup, Loynes has shown in \cite{Loynes1962} that $E[U_\ell] < 0$ is a sufficient condition for the queue associated with lane $\ell$ to be stable, where $U_\ell$ is an arbitrarily chosen random variable from the sequence $\{U_\ell(i)\}_i$. The condition $E[U_\ell] < 0$ may be rewritten in the form $\frac{E[B_\ell]}{E[A_\ell]} < 1$, where again the random variables $B_\ell$ and $A_\ell$ refer to an arbitrarily chosen service time and inter-arrival time respectively. This makes it natural to use $\tilde{\rho}_\ell = \frac{E[B_\ell]}{E[A_\ell]}$ as the definition of the load (for lane $\ell$) in our setup, and to define the aggregate load (over all lanes) as $\tilde{\rho} = \sum_{\ell = 1}^n \tilde{\rho}_\ell$. Indeed, for polling systems with Poisson arrivals and a wide range of service disciplines, the stability condition is known to be $\tilde{\rho} < 1$ (see \cite{Altman1992, Fricker1994}).

In our specific context, we can write, using the law of total expectation:
\begin{subequations}
\begin{align}
    E[B_\ell] =&\ \sum_{\eta, \gamma} E[B_\ell(i)|\psi(i) = \eta, \psi(i+1) = \gamma] P(\psi(i) = \eta, \psi(i+1) = \gamma), \label{eq:rho_lane_e_b}\\
    E[A_\ell] =&\ \sum_{\kappa, \eta} E[A_\ell(i)|\psi(i-1) = \kappa, \psi(i) = \eta] P(\psi(i-1) = \kappa, \psi(i) = \eta), \label{eq:rho_lane_e_a}
\end{align}
\end{subequations}
where $\kappa, \eta, \gamma \in \{ 0,1\}$. The terms $P(\psi(i-1) = \kappa, \psi(i) = \eta)$ and $P(\psi(i) = \eta, \psi(i+1) = \gamma)$ denote the probability of the occurrence of such sequences of vehicles. It is possible to decompose these probabilities into products of the probabilities of individual vehicles being of a specific type, since vehicle types are independently drawn from a Bernoulli distribution for every vehicle. Recall that in our model, the service and inter-arrival times depend on pairs of consecutive vehicles and not just the current vehicle, and hence additional information is needed to compute the expected values, which is why we condition on the type of the preceding or succeeding vehicle.

The quantity $E[A_\ell]$ can be readily computed (see the definition in \eqref{eq:inter-arrival_dist}), using the conditional expectation below: 
\begin{equation*}
        E[A_\ell(i)|\psi(i-1) = \kappa, \psi(i) = \eta] = \tau_{\kappa,\eta} + \frac{e^{-\lambda_\ell \tau_{\kappa, \eta}}}{\lambda_\ell}, \quad \kappa, \eta, \gamma \in \{ 0,1\}.
\end{equation*}
By \emph{un}conditioning the above expression, we arrive at an expression for the mean inter-arrival time $E[A_\ell]$.

The mean service time $E[B_\ell]$ is obtained very easily via Equation~\eqref{eq:rho_lane_e_b}, since we know that:
\begin{equation*}
    E[B_\ell(i)|\psi(i) = \eta, \psi(i+1) = \gamma] = \tau_{\eta,\gamma}.
\end{equation*}

Finally, as mentioned before, it is possible to decompose terms of the form $P(\psi(i) = \eta, \psi(i+1) = \gamma)$ into probabilities for individual vehicles:
\begin{align*}
    P(\psi(i) = \eta, \psi(i+1) = \gamma) &=\ 
    P(\psi(i) = \eta) \cdot P(\psi(i+1) = \gamma), \text{ and}\\
    P(\psi(i) = \eta) &=\ \begin{dcases}
        f_{\textit{tr}} & \text{if vehicle } i \text{ is a truck, i.e.\ if }\eta = 1\\
        1 - f_{\textit{tr}} & \text{otherwise.}
    \end{dcases}
\end{align*}
Substituting in the definition of load $\tilde{\rho}_\ell$ for lane $\ell$, with our two vehicle types, we get:
\begin{align*}
    \tilde{\rho}_\ell =&\ \frac{\tau_{0,0}(1-f_{\textit{tr}})^2 + (\tau_{0,1}+\tau_{1,0})f_{\textit{tr}}(1-f_{\textit{tr}}) + \tau_{1,1}f_{\textit{tr}}^2}{\left(\tau_{0,0} + \frac{e^{-\lambda_\ell \tau_{0,0}}}{\lambda_\ell}\right)(1-f_{\textit{tr}})^2 + \left(\tau_{0,1} + \frac{e^{-\lambda_\ell \tau_{0,1}}}{\lambda_\ell} + \tau_{1,0} + \frac{e^{-\lambda_\ell \tau_{1,0}}}{\lambda_\ell}\right)f_{\textit{tr}}(1-f_{\textit{tr}}) + \left( \tau_{1,1} + \frac{e^{-\lambda_\ell \tau_{1,1}}}{\lambda_\ell}\right)f_{\textit{tr}}^2} .
\end{align*}
In the next subsection, we will use the quantity $\tilde{\rho}$ for setting the parameters $\lambda_\ell$ so as to achieve high load values, and look at how the length of the control region affects the entire mechanism for such high load conditions.

\subsubsection{Effect of the control region on the capacity}

Any deviation in speed required by the speed-profiling algorithm (from the initial speed $v_{\textit{max}}$) must occur within the control region where trajectories can be manipulated. As we can observe in Figures~\ref{fig:joint_opt_capacity} and~\ref{fig:integrated_plot_spa}, the location/instant at which a vehicle decelerates (i.e., changes speed for the first time) depends, among other factors, on its position in its platoon. Generally, the location at which deceleration begins for a vehicle is propagated from the head to the tail of a platoon such that it is closer and closer to the beginning of the control region. Thus, in case of high load, it could be possible that the proposed optimal trajectory for an incoming vehicle requires it to decelerate before the control region even begins, which violates the assumptions of our framework. This naturally leads to bounds on the capacity of each lane, and thereby also of the intersection for a given length of the control region. Recall that a trajectory is deemed to be feasible if any changes in the speed of the vehicle occur within the control region.

We conduct a numerical study to explore the relationship between the length of the control region and its effect on the feasibility of vehicle trajectories, in high-load situations. Furthermore, we analyse the underlying polling model, but now interpreted in a different manner. In this new polling model, arrivals are marked by vehicles entering the control region, and service times and departures have the same interpretation as before (see Section~\ref{sec:arrival_scheduler}). Thus, the new polling model is intrinsically linked to the length of the control region, which was not the case with the polling model introduced in Section \ref{sec:arrival_scheduler}. Further, at any given point in time, the number of waiting customers in this model (denoted by $Q$) corresponds to the number of vehicles in the control region.

In this numerical study, we simulate arrivals to an intersection for a very long period of time. Next, we use the speed-profiling algorithm to compute the positions at which each vehicle decelerates. The length of the control region is suitably chosen to be large enough so that all vehicles arriving to the intersections have feasible trajectories. Next, we (artificially) reduce the length of the control region in steps, and compute the proportion of trajectories (`unsuitable') for which the position at the start-of-deceleration instant is now beyond the (beginning of the) control region. Recall that the optimal trajectories obtained by the speed-profiling algorithm do not vary with the control region length, so we can continue to use the same optimal trajectories even if the length of the control region is reduced.

Every time we reduce the length of the control region, we also compute the proportion of delayed vehicles (as a proxy for $Q$) being greater than a certain, critical value (see the definition of $N_1(\cdot)$ in \eqref{eq:n_1} below), with help from the polling model. This acts as a standard of comparison for the proportion of unsuitable trajectories.

Finally, we repeat the entire process for the case when all vehicles have infinite acceleration and deceleration -- this changes the optimal trajectories because the head of the platoon now no longer stops at a distance away from the intersection, since it can accelerate to speed $v_{\textit{max}}$ from rest instantaneously.

The results of these experiments are plotted in Figures~\ref{fig:capacity_analysis_symm} and ~\ref{fig:capacity_analysis_asymm}, where the quantities $N_1(\cdot)$ and $N_2(\cdot)$ are the expected number of vehicles that would fit in the control region in the realistic and infinite acceleration regime respectively. These are given by:
\begin{align}
    N_1(x) &= \max \left\{0, \left\lfloor \frac{x - \frac{v_{\textit{max}}^2}{2}\left(\frac{f_{\textit{tr}}}{a_{\textit{max},tr}} + \frac{1 - f_{\textit{tr}}}{a_{\textit{max},c}} \right)}{D} \right\rfloor \right\}, \label{eq:n_1}\\
    N_2(x) &= \left\lfloor \frac{x}{D} \right\rfloor, \label{eq:n_2}\\
    \text{with } D &= v_{\textit{max}}\left(f_{\textit{tr}}^2\tau_{1,1} + f_{\textit{tr}}(1-f_{\textit{tr}})( \tau_{1,0} + \tau_{0,1}) + (1-f_{\textit{tr}})^2\tau_{0,0}\right) = v_{\textit{max}}E[B_\ell],
\end{align}
where $x$ is the length of the control region, and $\left\lfloor \cdot \right\rfloor$ denotes the floor function. The denominator $D$ of the term on the right-hand-side of Equation~\eqref{eq:n_2} also appears in Equation~\eqref{eq:n_1}, and it represents the minimum required separation between an arbitrary pair of (consecutive) vehicles on average. The term $\frac{v_{\textit{max}}^2}{2}\left(\frac{f_{\textit{tr}}}{a_{\textit{max},tr}} + \frac{1 - f_{\textit{tr}}}{a_{\textit{max},c}} \right)$ can be interpreted as the average distance from the intersection at which the head of an arbitrary platoon comes to a stop, if faced with a long delay, in the realistic acceleration-deceleration regime. 

\begin{figure}[!t]
    \begin{minipage}{0.49\linewidth}
        \centering
        \includegraphics[width=0.9\linewidth]{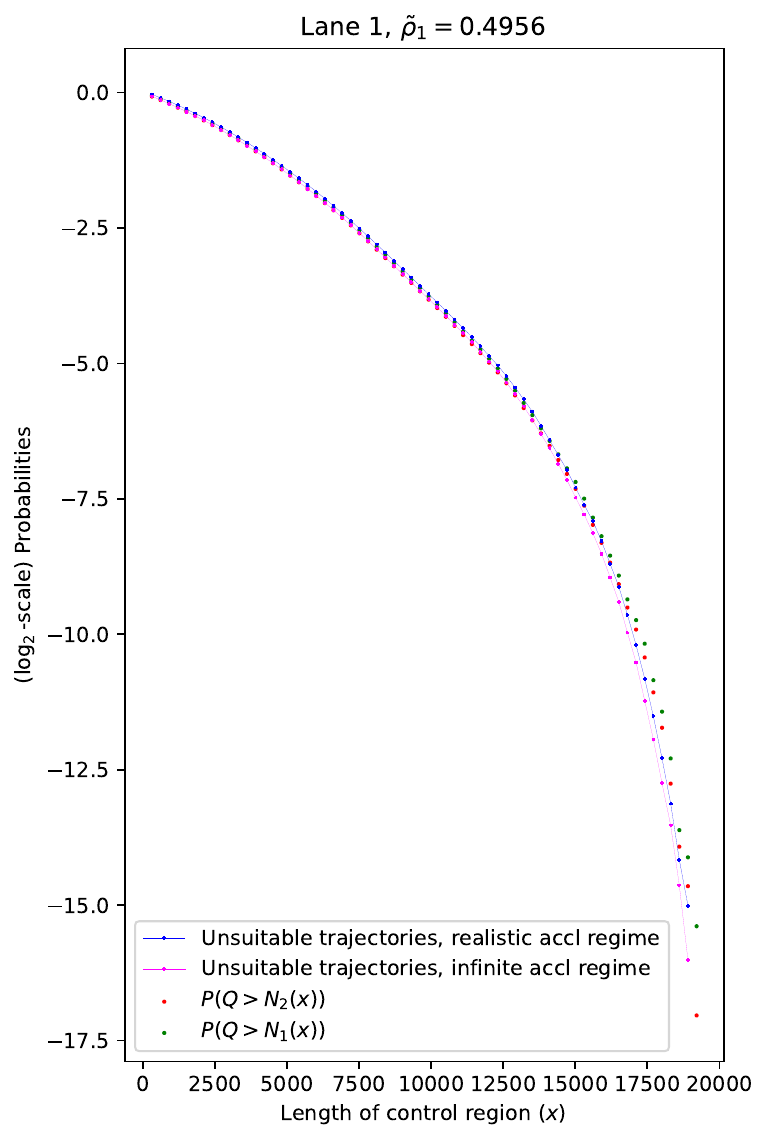}
    \end{minipage}%
    \hfill
    \begin{minipage}{0.49\linewidth}
        \centering
    \includegraphics[width=0.9\linewidth]{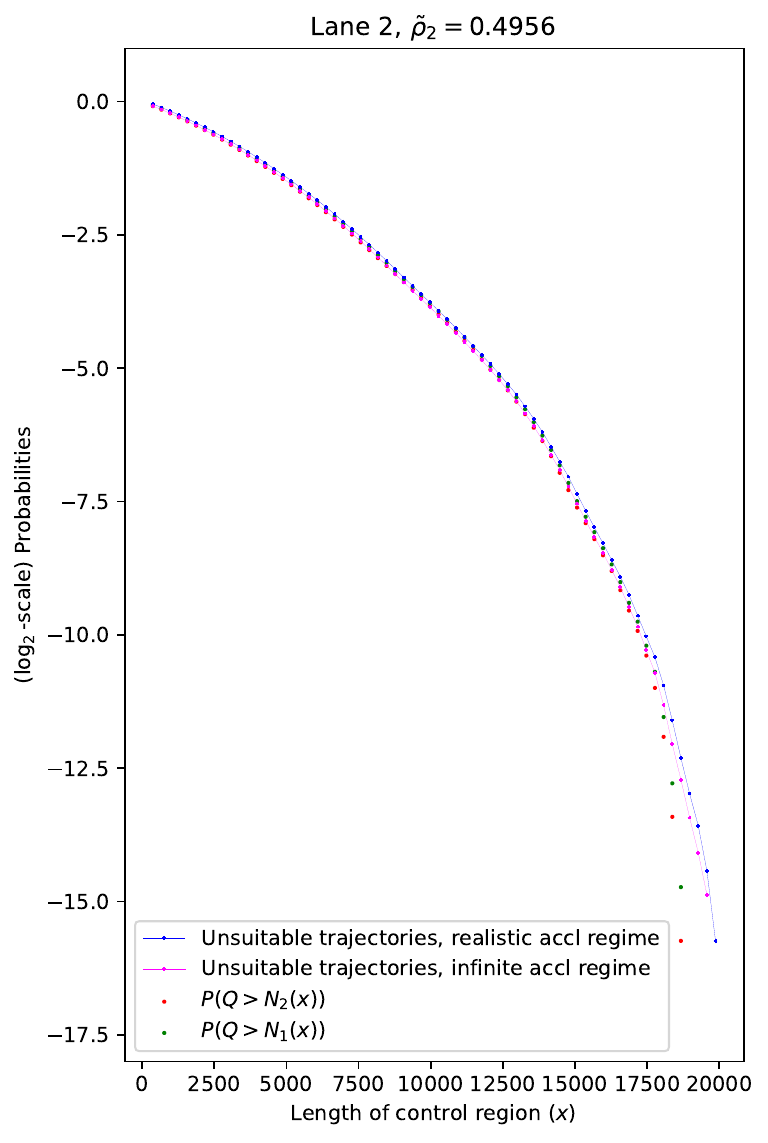}
    \end{minipage}
    \caption{Proportion of unsuitable trajectories compared to queue length probabilities, symmetric high-load case.}
    \label{fig:capacity_analysis_symm}
\end{figure}

We look at two situations -- a symmetric high-load scenario (where both lanes possess the same load), as well as an asymmetric high-load scenario, where we have one lane experiencing significantly more traffic than the other, reminiscent of a major lane--minor lane situation. The key parameters used for this analysis, along with their values, are presented in Table \ref{tab:capacity_analysis}. The control region was gradually reduced in steps of $10$ units. The relationship between the length of the control region and the proportion of unsuitable trajectories is examined for very large, as well as realistic accelerations, and is visualised in Figure~\ref{fig:capacity_analysis_symm} and~\ref{fig:capacity_analysis_asymm} for the symmetric and asymmetric high-load scenarios respectively.

\begin{table}[!t]
\centering
\begin{tabular}{|c|*{4}{p{11mm}}|}
\hline
\multirow{3}{*}{Parameters (Units)} & \multicolumn{4}{c|}{Values}                                                                                    \\ \cline{2-5} 
                            & \multicolumn{2}{c|}{Symmetric high-load scenario}         & \multicolumn{2}{c|}{Asymmetric high-load scenario} \\ \cline{2-5} 
                            & \multicolumn{1}{p{22mm}|}{Lane 1} & \multicolumn{1}{p{11mm}|}{Lane 2} & \multicolumn{1}{p{22mm}|}{Lane 1}        & Lane 2        \\ \hline
\multicolumn{1}{|c|}{$\lambda_\ell$ (arrivals/s)}      & \multicolumn{1}{p{11mm}|}{0.39}       & \multicolumn{1}{p{11mm}|}{0.39}       & \multicolumn{1}{p{11mm}|}{1.34}              &          0.06     \\ \hline
\multicolumn{1}{|c|}{$\tilde{\rho}_\ell$}      & \multicolumn{1}{p{11mm}|}{0.4956}       & \multicolumn{1}{p{11mm}|}{0.4956}       & \multicolumn{1}{p{11mm}|}{0.8997}              &      0.0895         \\ \hline
\multicolumn{1}{|c|}{Average number of delayed vehicles}      & \multicolumn{1}{p{11mm}|}{73.03}       & \multicolumn{1}{p{11mm}|}{72.88}       & \multicolumn{1}{p{11mm}|}{21.21}              &    19.31           \\ \hline
\multicolumn{1}{|c|}{Average delay (s)}      & \multicolumn{1}{p{11mm}|}{221.01}       & \multicolumn{1}{p{11mm}|}{220.89}       & \multicolumn{1}{p{11mm}|}{35.37}              &       325.69        \\ \hline
\end{tabular}
\caption{Parameters and key quantities for capacity analysis scenarios.}
\label{tab:capacity_analysis}
\end{table}

We observe the following:
\begin{enumerate}
    \item The probability estimates for $P(Q > N_1(x))$ and $P(Q > N_2(x))$ closely approximate the proportion of unsuitable trajectories in the realistic and infinite acceleration regime respectively, for high-load situations. As we discussed before, the quantities $N_1(\cdot)$ and $N_2(\cdot)$ approximate the average number of vehicles that can fit into a control region of given length, taking into account the required separation between pairs of vehicles and the frequency of vehicle types. Thus, in high-load situations, where the vehicle-to-capacity ratio is high, platoon sizes will also grow to be large due to an exhaustive service discipline. If an incoming vehicle has an optimal trajectory such that deceleration starts beyond the scope of the control region, then this is because it is not possible to start deceleration at a later point in time. This in turn means that the control region is (completely) occupied by other vehicles with speeds less than $v_{\textit{max}}$ when the vehicle in question enters the control region, and the situation becomes immediately unsafe, and the trajectory, unsuitable.
    \item Even though it is intuitive to see that the queue length probabilities approximate the probability of an unsuitable trajectory well, it is still remarkable because $P(Q > \alpha)$ for some $\alpha$ can be interpreted as the long-run fraction of time that the queue length is greater than $\alpha$. On the other hand, the probability estimate of an unsuitable trajectory for a given length of the control region was computed by counting the proportion of trajectories that were rendered unsuitable. We are essentially comparing time averages with proportions, and both follow the same trend as the length of the control region varies.
    \item For control region lengths $x$ such that $N_1(x) > 0$, i.e.\ for $x > \frac{v_{\textit{max}}^2}{2}\left( \frac{f_{\textit{tr}}}{a_{\textit{max},\textit{tr}}} + \frac{1 - f_{\textit{tr}}}{a_{\textit{max},c}}\right)$, we have:
    \begin{align*}
        N_2(x) - N_1(x) =&\ \frac{v_{\textit{max}}^2}{2D}\left( \frac{f_{\textit{tr}}}{a_{\textit{max},\textit{tr}}} + \frac{1 - f_{\textit{tr}}}{a_{\textit{max},c}}\right)\\
        =&\ \frac{v_{\textit{max}}}{2E[B_\ell]}\left( \frac{f_{\textit{tr}}}{a_{\textit{max},\textit{tr}}} + \frac{1 - f_{\textit{tr}}}{a_{\textit{max},c}}\right).
    \end{align*}
    This difference is exactly the gap in potential lane capacity that is `lost' due to vehicles accelerating to maximum speed $v_{\textit{max}}$ just before entering the intersection. Thus it might seem at first glance that lane capacity reduces when vehicles enter intersection at maximum speed $v_{\textit{max}}$. However, one must remember that if vehicle start from rest when entering the intersection, their travel time while crossing the intersection is much larger, and thus we would need to increase service times to reflect this change, which finally leads to reduced capacity. Thus from an efficiency perspective, it is prudent to let vehicles attain the maximum possible speed just before entering the intersection.
    \item If we have estimates for the (maximum) arrival rates of lanes, then running such an experiment could help guide to select control region lengths appropriately for each lane.
    \item The capacity of lane $\ell$ thus has an upper bound, which is equal to $N_1(x_\ell)$, where $x$ is the length of the control region in lane. The capacity refers to the number of vehicles that can cross the intersection from that lane per unit of time. From this, we can also obtain an estimate for the maximum capacity of the intersection by weighing the capacity of each lane with the fraction of time spent by the server in that lane ($\tilde{\rho}_\ell$):
    \begin{equation*}
        \textrm{Maximum Intersection Capacity} \approx \sum_{\ell = 1}^n \tilde{\rho}_\ell \cdot N_1(x_\ell)
    \end{equation*}
    The above expression is an approximation because we do not consider switchover times in our analysis.
\end{enumerate}

\begin{figure}[!t]
    \begin{minipage}{0.49\linewidth}
        \centering
        \includegraphics[width=0.9\linewidth]{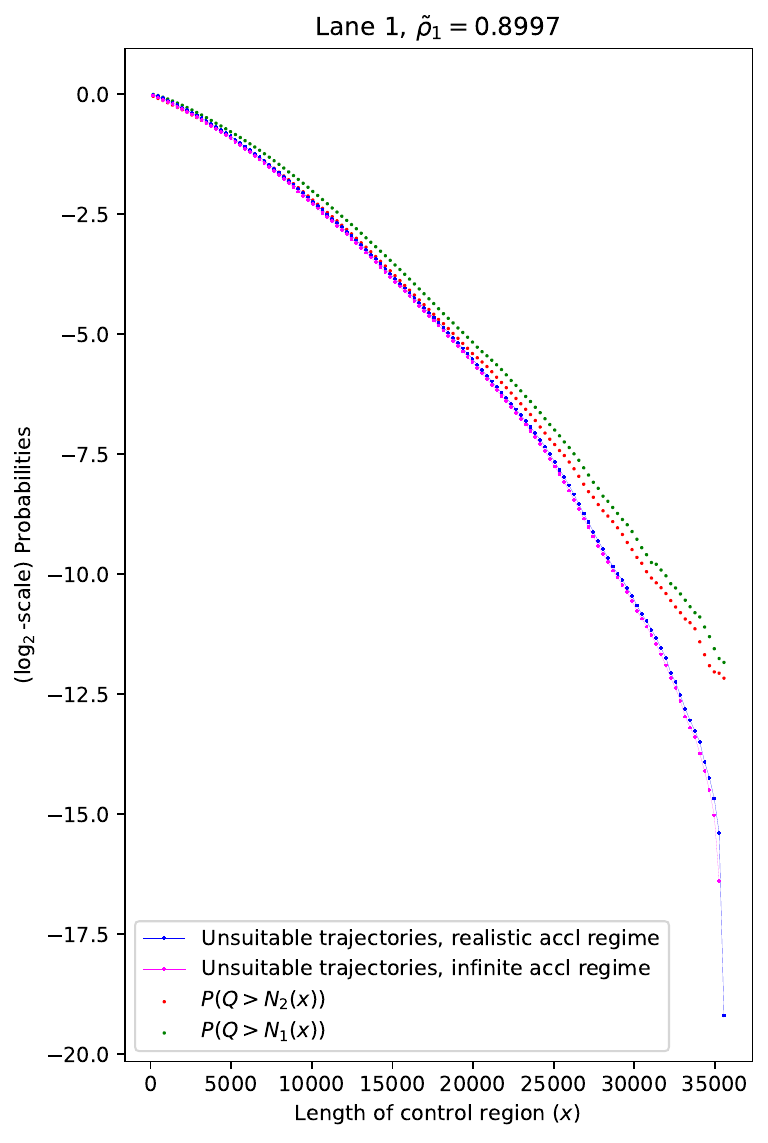}
    \end{minipage}%
    \hfill
    \begin{minipage}{0.49\linewidth}
        \centering
    \includegraphics[width=0.9\linewidth]{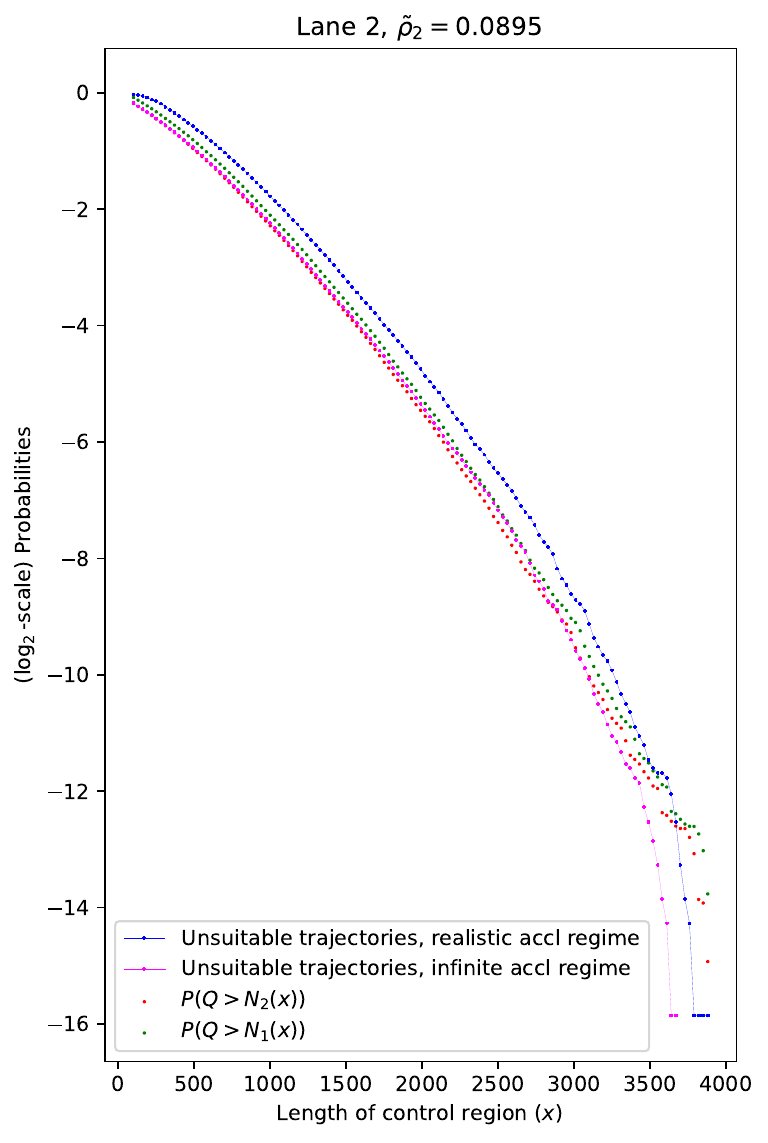}
    \end{minipage}
    \caption{Proportion of unsuitable trajectories compared to queue length probabilities, asymmetric high-load case.}
    \label{fig:capacity_analysis_asymm}
\end{figure}
\section{Conclusion}
In this paper, we have developed and analysed a comprehensive platooning-focused access system for an urban intersection with heterogeneous autonomous traffic. We further show that the capacity of a lane for a given length of the control region is approximately equal to a (modified) queue length tail probability in the polling model, thus providing a guide to select the appropriate control region lengths as well.

There are several interesting avenues for further research. We believe that it is possible to extend this analysis to accommodate trajectories involving turning at intersections, for e.g.\ by considering a modified intersection design as proposed by \textcite{Chen2023AVehicles}. 

The polling model and by extension the platoon-forming algorithm, currently involves an exhaustive service discipline, which minimises the average delay under certain assumptions, as shown in \cite{Timmerman2021PlatoonIntersections}. More work is needed to investigate the framework for service disciplines which might be perceived as more fair, such as $k$-limited or gated disciplines. However, studies have shown that the capacity significantly decreases when the service discipline is not exhaustive (see \cite{Tachet2016RevisitingSystems, Timmerman2021PlatoonIntersections}), making this a less attractive alternative.

It is also interesting to consider how our framework might be extended to network scenarios, to capture and efficiently manage the dynamics for an entire city, for instance. Furthermore, an urban area inevitably involves pedestrian movement. More work needs to be done to integrate pedestrian traffic flow in our framework, as approached for instance in \cite{Zhu2023} using learning methods.

Finally, it would be very interesting to study in more detail whether this framework could also be applied to a mixed-traffic setting, where human-driven vehicles share road space along with autonomous vehicles, as investigated in \cite{Gunther2016PlatooningLights, Moradi2021, Wu2022InfluenceTraffic, Wang2023SafetyBalanced, Yang2023, Zhang2022PlatoonCentered, Wang2022DigitalTwin}. Guaranteeing the safety in such a system might be significantly more challenging, most likely leading to larger safety margins and additional constraints.

\section*{Acknowledgements} The authors wish to thank Jeroen van Riel for his work on the animations, and to Sanne van Kempen for her helpful comments.

\printbibliography[heading=bibintoc, title = {References}]

\appendix

\section{3D Animation}\label{app:animation}
To effectively demonstrate the intersection access control system, we have made 3D animations of the integrated mechanism available as supplementary material. We present a scenario with low to medium traffic intensity as well as one with high traffic intensity. To fully appreciate the extended framework necessary to deal with traffic heterogeneity, we also have added scenarios with homogeneous traffic (i.e.\ only cars). The animations involve cars and trucks, and the cars are colour-coded to aid viewers to detect their speed: if a car is green, then it is travelling at speed $v_{\textit{max}}$, and if it is at rest (i.e.\ possessing zero speed), it takes on a red appearance, with in-between speeds colour-coded accordingly. 

The time separations used to make these animations are:
\begin{table}[!h]
\centering
\begin{tblr}{
  width = 0.6\linewidth,
  colspec = {Q[377,r,m]Q[200,c,m]Q[162,c,m]Q[162,c,m]},
  column{1} = {r},
  column{2} = {c},
  cell{1}{3} = {c=2}{0.1944\linewidth},
  cell{3}{1} = {r=2}{},
  vline{3-5} = {2,4}{},
  vline{2-5} = {3,4}{},
  hline{2} = {3-4}{},
  hline{3-5} = {2-4}{},
}
                  &       & Preceding Vehicle &       \\
                  &       & Car               & Truck \\
Following Vehicle & Car   & 0.65 / 0.8              & 0.8 / 1.5 \\
                  & Truck & 1.5 / 2.5              & 0.9 / 2.5
\end{tblr}
\caption{Time separations for the animation in various cases, (same/different) lane(s) (in seconds).}
\label{tab:animation_time_sep}
\end{table}

The rest of the vehicle parameters ($v_{\textit{max}},\ a_{\textit{max},c},\ a_{\textit{max},tr}$) are the same as in Table \ref{tab:parameters}. The scenario parameters are presented in Table \ref{tab:animation_parameters}.

We consider four instances -- two with medium traffic loads and two with heavy traffic loads. Out of the two instances for each traffic load regime, one instance is with only cars, as a standard of comparison. To help visualise these scenarios further, we also have plots with the specific trajectories seen in the corresponding animations in Figures \ref{fig:animation_cars_trucks} and \ref{fig:animation_cars}.

\begin{table}[!h]
\centering
\begin{tblr}{
  cells = {c},
  cell{6}{2} = {c=2}{},
  cell{6}{4} = {c=2}{},
  cell{7}{2} = {c=4}{},
  cell{8}{2} = {c=4}{},
  hlines,
  vlines,
}
Parameter (Unit) & {Medium load,\\trucks and cars} & {High load,\\trucks and cars} & {Medium load,\\only cars} & {High load,\\only cars}\\
{Arrival rate parameter\\lane 1 (s$^{-1}$)} & $0.41$ & $0.59$ & $0.59$ & $0.86$\\
Traffic load, lane 1 & $0.3562$ & $0.4826$ & $0.2581$ & $0.4943$ \\
{Arrival rate parameter\\lane 2 (s$^{-1}$)} & $0.42$ & $0.62$ & $0.62$ & $0.86$\\
Traffic load, lane 2 & $0.3638$ & $0.5016$ & $0.2640$ & $0.4943$ \\
Fraction of trucks & $0.4$ &  & $0$ & \\
Control region length (m) & $700$ &  &  &
\end{tblr}
\caption{Parameters for 3D animations with cars and trucks.}
\label{tab:animation_parameters}
\end{table}

\begin{figure}[!h]
    \centering
    \begin{subfigure}{0.47\linewidth}
        \centering
        \includegraphics[width = \linewidth]{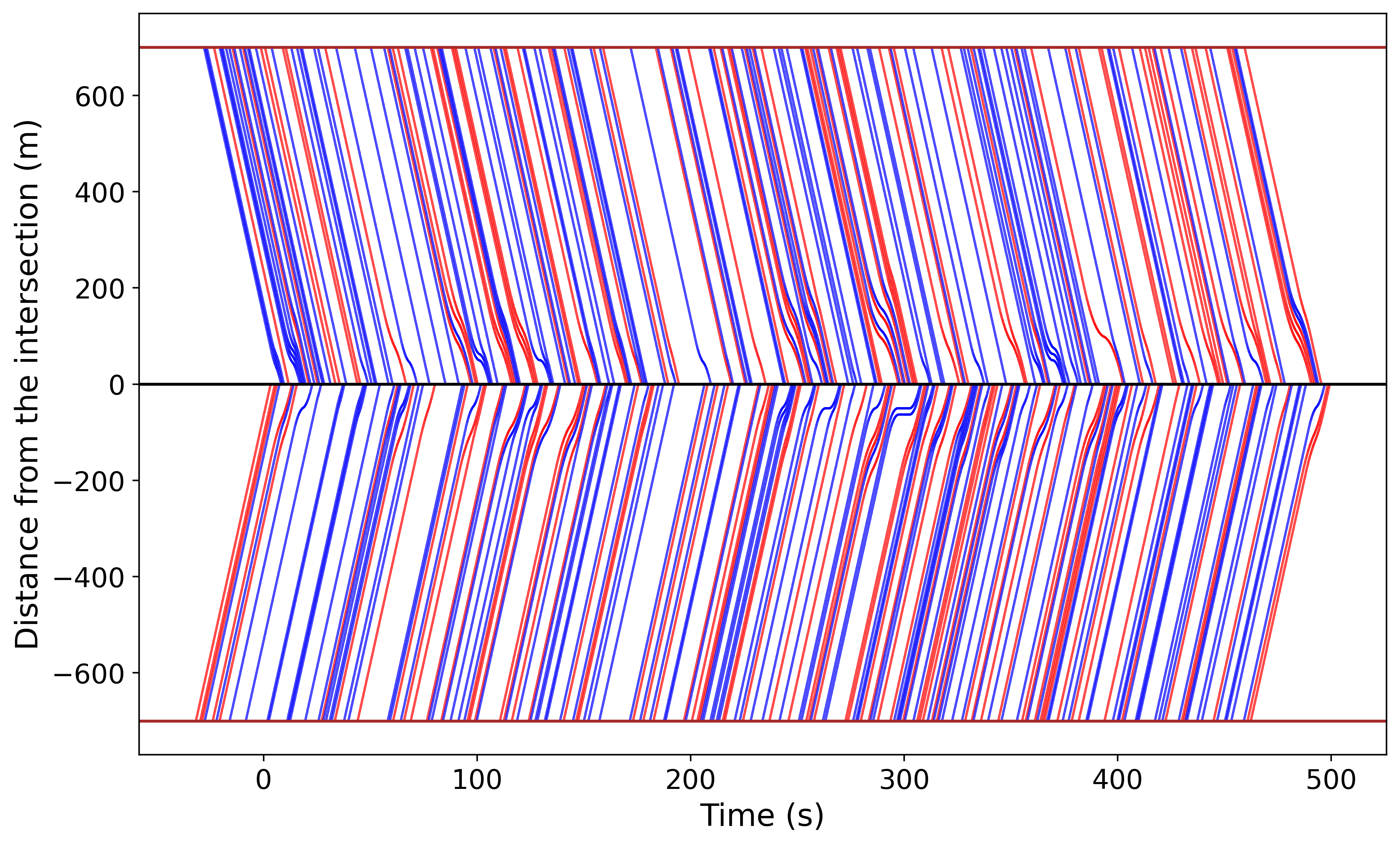}
        \caption{Medium traffic load situation}
    \end{subfigure}
    \begin{subfigure}{0.47\linewidth}
        \centering
        \includegraphics[width=\linewidth]{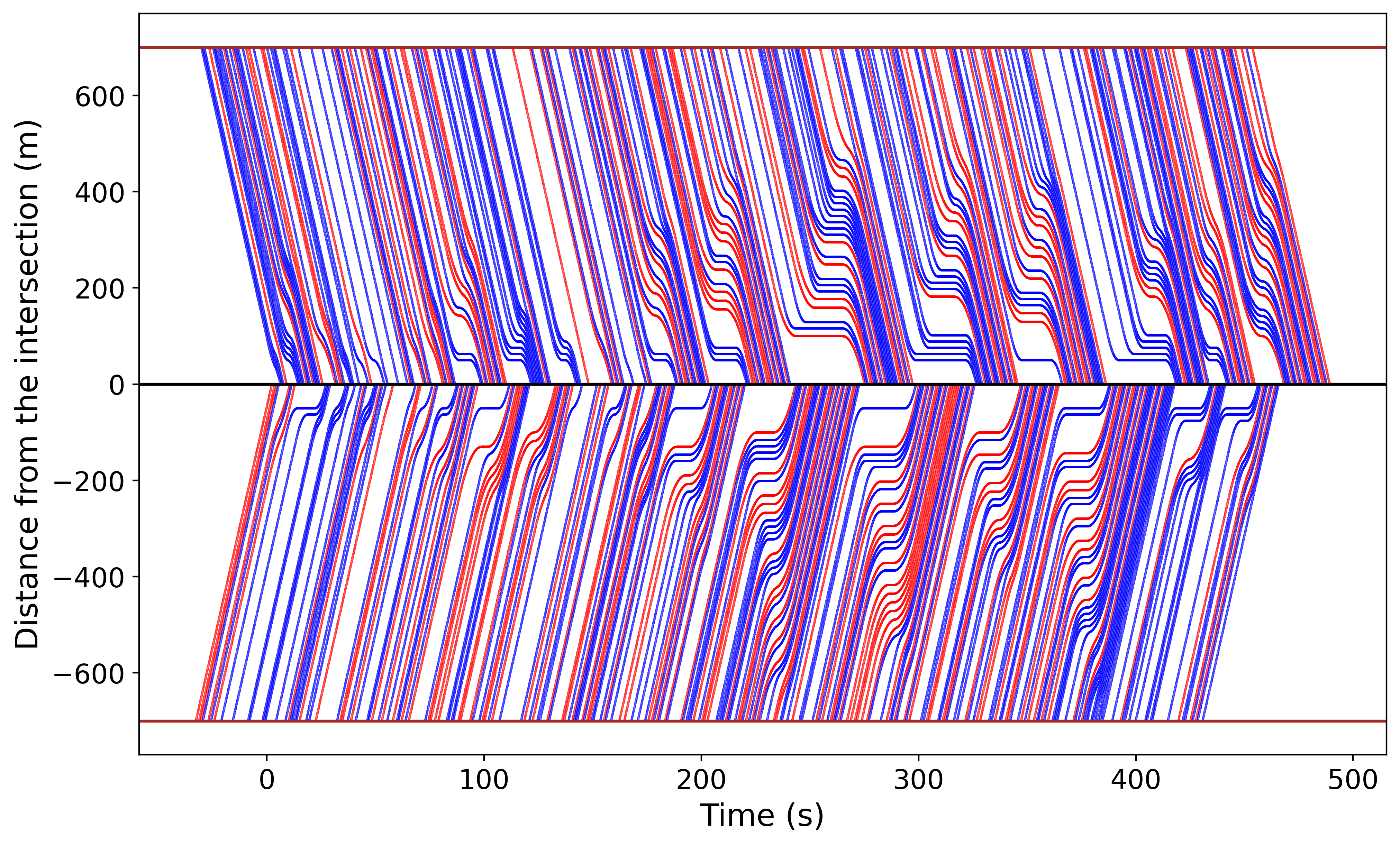}
        \caption{High traffic load situation}
    \end{subfigure}
    \caption{Platooning with cars and trucks}
    \label{fig:animation_cars_trucks}
\end{figure}
\begin{figure}[!t]
    \centering
    \begin{subfigure}{0.47\linewidth}
        \centering
        \includegraphics[width = \linewidth]{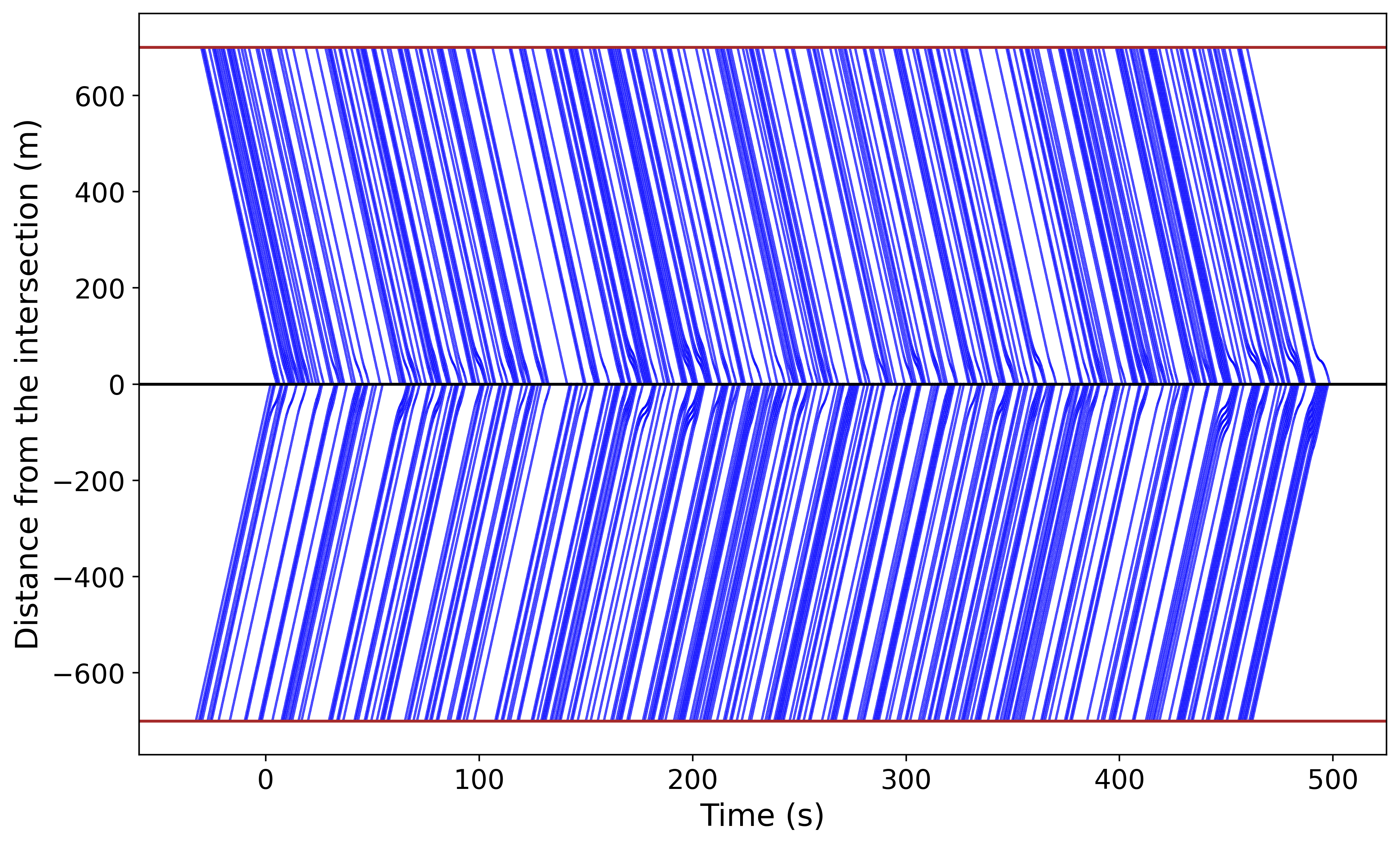}
        \caption{Medium traffic load situation}
    \end{subfigure}
    \begin{subfigure}{0.47\linewidth}
        \centering
        \includegraphics[width=\linewidth]{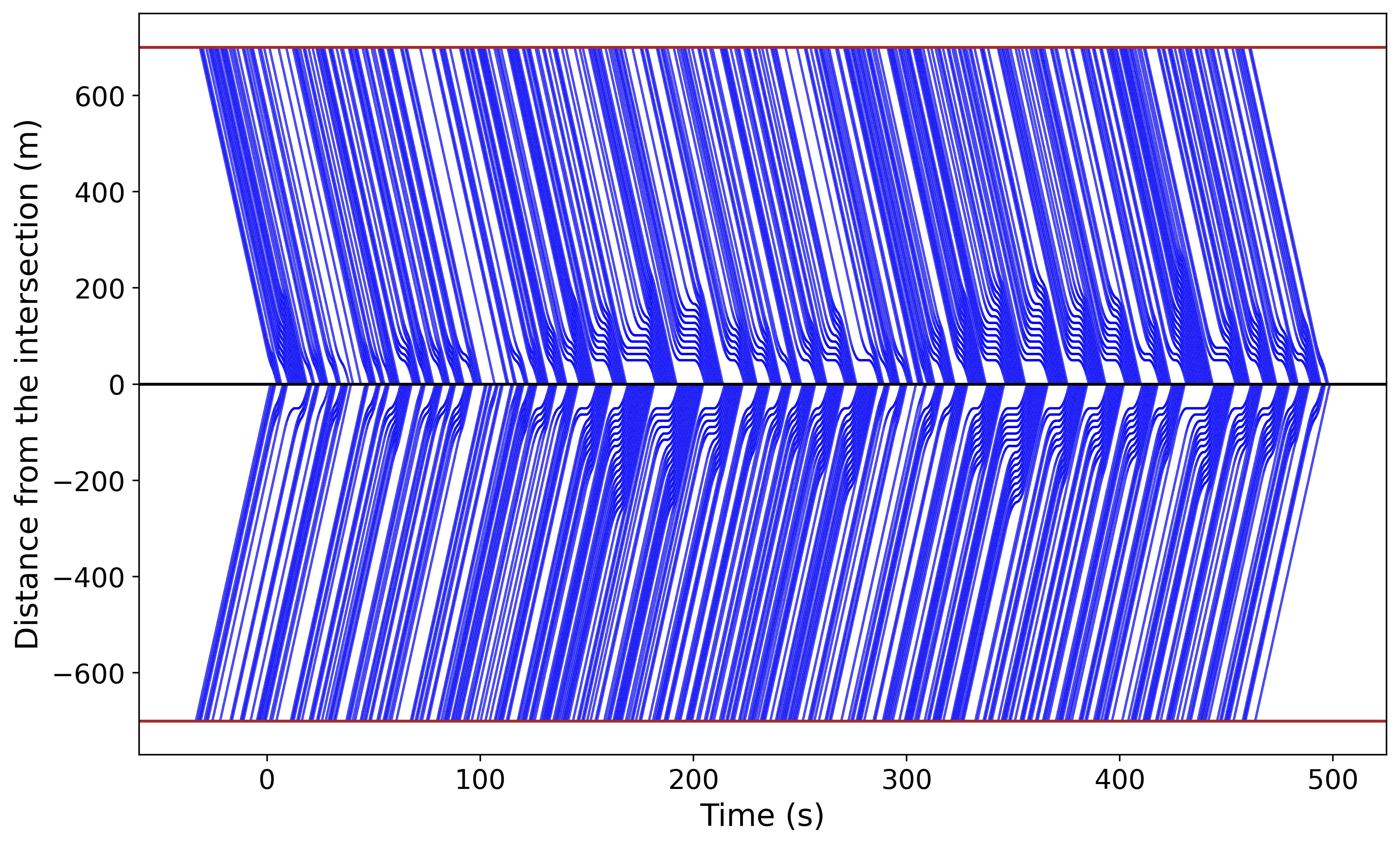}
        \caption{High traffic load situation}
    \end{subfigure}
    \caption{Platooning with only cars}
    \label{fig:animation_cars}
\end{figure}

As we can see, in the medium traffic load cases, the delay experienced vehicles is less, leading to smaller platoons. Also, the proportion of vehicles coming to a full stop in the control region is smaller compared to the high traffic load situation. Whereas in the situations with high traffic load, the platoons get larger in size, and the proportion of vehicles being delayed enough to warrant a full stop also increases. The figures moreover aid to appreciate the difference in the platooning and the trajectories with the addition of traffic heterogeneity. The load that can be sustained by the system changes dramatically, and longer control regions are also necessary with the addition of trucks to the vehicle pool.

\section{Speed-profiling algorithm: Closed-form expressions}

\subsection{Closest preceding truck does not decelerate to a full stop}\label{app:SPA_closed_form}

Similar to Section~\ref{sec:SPA_closed_form}, we wish to obtain the closed-form expressions for the trajectory of a car when its closest preceding truck does not decelerate to a full stop. Suppose that the car under consideration and its closest preceding truck occupy the $i^{th}$ and $j^{th}$ positions in a platoon, respectively, and let $u_2 > 0$ be the lowest speed attained by this truck in the control region. At the time of determining the trajectory of the current car, $u_2$ has already been calculated, and is equal to (also see \cite{Timmerman2021PlatoonIntersections}):
\begin{equation*}
    u_2 = v_{\textit{max}} - \sqrt{a_{\textit{max},tr}\left[v_{\textit{max}}(t_{f,j} - t_{0,j}) - x_0 \right]},
\end{equation*}
where $t_{f,j}$ and $t_{0,j}$ are the crossing and entering times of the closest preceding truck. Based on our observations of optimal trajectories obtained from the joint optimisation problem, we can classify the trajectory of the car into three categories, which are treated on a case-by-case basis below (also see Figure \ref{fig:SPA_Traj_truck_no_stop}).
\begin{figure}[!t]
    \centering
    \includegraphics[width = 0.8\linewidth]{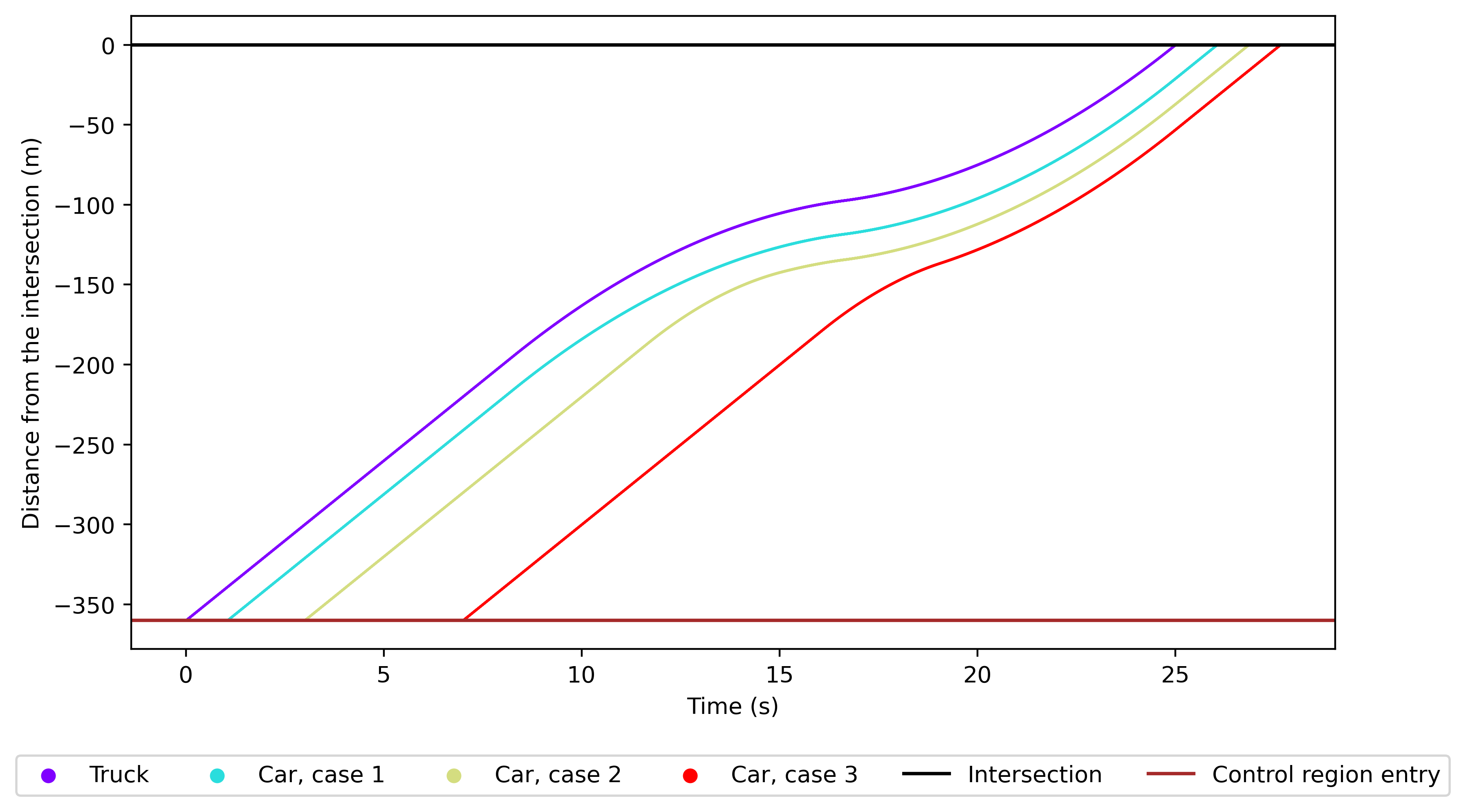}
    \caption{Trajectories of truck and car in various cases, given that the truck does not stop in the control region.}
    \label{fig:SPA_Traj_truck_no_stop}
\end{figure}

\textbf{1) Car follows truck exactly}
\begin{figure}
    \centering
    \includegraphics[width = 0.8\linewidth]{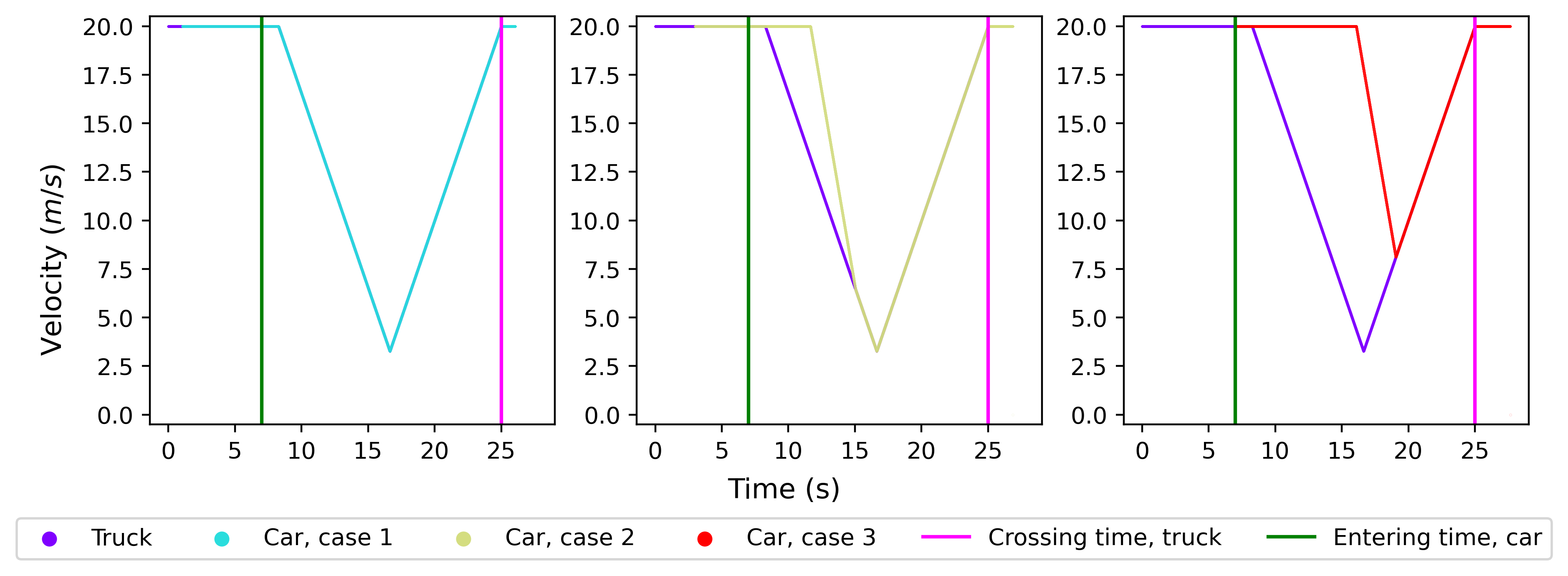}
    \caption{Velocity profile of car and truck in various cases, given that the truck does not stop in the control region.}
    \label{fig:SPA_Velo_truck_no_stop}
\end{figure}
When $t_{f,i} - t_{0,i} = t_{f,j} - t_{0,j}$, the velocity and acceleration functions of the car coincide with those of its closest preceding truck:\\
\begin{minipage}[!t]{0.45\linewidth}
    \centering
    \begin{equation*}
    \ddot{x}_i(s) = 
    \begin{dcases}
        \ddot{x}_j(s) \quad &s \in [t_{0,i}, t_{f,j}],\\
        0 \quad &s \in [t_{f,j}, t_{f,i}],
    \end{dcases}
\end{equation*}
\end{minipage}
\begin{minipage}[!t]{0.45\linewidth}
    \centering
    \begin{equation*}
        \dot{x}_i(s) =
        \begin{dcases}
            \dot{x}_j(s) \quad &s \in [t_{0,i}, t_{f,j}],\\
            v_{\textit{max}} \quad &s \in [t_{f,j}, t_{f,i}].
        \end{dcases}
    \end{equation*}
\end{minipage}

The velocity of the car is also represented on the leftmost plot in Figure \ref{fig:SPA_Velo_truck_no_stop}. The trajectory of the car can be obtained by integrating the velocity function $\dot{x}_i$ of the car, with the appropriate limits:
\begin{equation}\label{eq:spa_cs_truck_nostop_1}
    x_i(s) = 
    \begin{dcases}
        x_j(s) - v_{\textit{max}} \sum_{k = j}^{i-1} \tau_{\psi(k), \psi(k+1)} & s \in [t_{0,i}, t_{f,j}],\\
        -v_{\textit{max}} \sum_{k = j}^{i-1} \tau_{\psi(k), \psi(k+1)} + v_{\textit{max}}(t - t_{f,j}) & s \in [t_{f,j}, t_{f,i}].
    \end{dcases}
\end{equation}
Finally, we recall the definition of $T^*_i$ from Equation~\eqref{eq:t_star}. Here, $T^*_i = t_{0,i}$, as in Case 1 of Section~\ref{sec:SPA_closed_form}.

\textbf{2) Car switches decelerations}
As in Section~\ref{sec:SPA_closed_form}, the acceleration dynamics of the car moves away from that of the truck (and towards the optimal acceleration profile of a car), if the delay incurred by the car is less than that of the truck. In this situation (the exact conditions will be detailed at the end), the car switches decelerations, leading to the acceleration profile below. As before, the velocity and position function have been obtained by integrating the acceleration profile once and twice, respectively, and by substituting the proper limits:\\
\begin{minipage}[t!]{0.45\linewidth}
\begin{equation*}
    \ddot{x}_i(s) = 
    \begin{dcases}
    0 & s \in [t_{0,i}, t_{\textit{dec},i}),\\
    -a_{\textit{max},c} & s \in [t_{\textit{dec},i}, t_{\textit{sw},i})\\
    -a_{\textit{max},tr} & s \in [t_{\textit{sw},i}, t_{\textit{acc},i}),\\
    a_{\textit{max},tr} & s \in [t_{\textit{acc},i}, t_{f,1}),\\
    0 & s \in [t_{f,1}, t_{f,i}],
    \end{dcases}
\end{equation*}
\end{minipage}
\begin{minipage}[!t]{0.45\linewidth}
\vspace*{-\baselineskip}
\begin{equation*}
    \dot{x}_i(s) = 
    \begin{dcases}
    v_{\textit{max}} & s \in [t_{0,i}, t_{\textit{dec},i}),\\
    v_{\textit{max}} - a_{\textit{max},c}(t - t_{\textit{dec},i}) & s \in [t_{\textit{dec},i}, t_{\textit{sw},i})\\
    u_1 - a_{\textit{max},tr}(t - t_{\textit{sw},i}) & s \in [t_{\textit{sw},i}, t_{\textit{acc},i}),\\
    u_2 + a_{\textit{max},tr} & s \in [t_{\textit{acc},i}, t_{f,1}),\\
    v_{\textit{max}} & s \in [t_{f,1}, t_{f,i}],
    \end{dcases}
\end{equation*}
\end{minipage}

\begin{equation*}
    x_i(s) = 
    \begin{dcases}
    -x_0 + v_{\textit{max}} (t - t_{0,i}) & s \in [t_{0,i}, t_{\textit{dec},i}],\\
    -x_0 + v_{\textit{max}} (t - t_{0,i}) - \frac{1}{2} a_{\textit{max},c}(t - t_{\textit{dec},i})^2 & s \in [t_{\textit{dec},i}, t_{\textit{sw},i}],\\
    -x_0 + v_{\textit{max}}(t_{\textit{dec},i} - t_{0,i}) + \frac{v_{\textit{max}}^2 - u_1^2}{2a_{\textit{max},c}} + u_1(t - t_{\textit{sw},i}) - \frac{1}{2} a_{\textit{max},tr}(t - t_{\textit{sw},i})^2 & s \in [t_{\textit{sw},i}, t_{\textit{stop},i}],\\
    -x_0 + v_{\textit{max}}(t_{\textit{dec},i} - t_{0,i}) + \frac{v_{\textit{max}}^2 - u_1^2}{2a_{\textit{max},c}} + \frac{u_1^2 - u_2^2}{2a_{\textit{max},tr}} & s \in [t_{\textit{stop},i}, t_{\textit{acc},i}],\\
    -x_0 + v_{\textit{max}}(t_{\textit{dec},i} - t_{0,i}) + \frac{v_{\textit{max}}^2 - u_1^2}{2a_{\textit{max},c}} + \frac{u_1^2 - u_2^2}{2a_{\textit{max},tr}} + u_2(t - t_{\textit{acc},i})  & s \in [t_{\textit{acc},i}, t_{f,1}],\\
    \quad + \frac{1}{2} a_{\textit{max},tr}(t - t_{\textit{acc},i})^2 & \\
    -x_0 + v_{\textit{max}}(t_{\textit{dec},i} - t_{0,i}) + \frac{v_{\textit{max}}^2 - u_1^2}{2a_{\textit{max},c}} + \frac{u_1^2 + v_{\textit{max}}^2 - 2u_2^2}{2a_{\textit{max},tr}} + v_{\textit{max}}(t - t_{f,1}) & s \in [t_{f,1}, t_{f,i}],
    \end{dcases}
\end{equation*}
where $u_1$ is the speed of the car at time $t_{\textit{sw},i}$, i.e., while switching decelerations. Note that $T^*_i = t_{\textit{sw},i}$ in this case. Coupling the relation $t_{\textit{dec},i} = t_{f,1} - \frac{v_{\textit{max}} + u_1 - 2u_2}{a_{\textit{max},tr}} - \frac{v_{\textit{max}} - u_1}{a_{\textit{max},c}}$ with the positional terminal condition for the car, we obtain a quadratic equation in $u_1$, whose positive solution is:
\begin{equation*}
    \begin{aligned}
    u_1 &= v_{\textit{max}} - \sqrt{ \frac{2 a_{\textit{max},tr} a_{\textit{max},c} \left( x_0 - v_{\textit{max}} (t_{f,i} - t_{0,i}) + \frac{ (v_{\textit{max}} - u_2)^2}{a_{\textit{max},tr}} \right)}{a_{\textit{max},c} - a_{\textit{max},tr}} },\\
    &= v_{\textit{max}} - \sqrt{ \frac{2 a_{\textit{max},tr} a_{\textit{max},c} v_{\textit{max}}\Big[(t_{f,j} - t_{0,j}) - (t_{f,i} - t_{0,i})\Big]}{a_{\textit{max},c} - a_{\textit{max},tr}} }, \quad 0 < u_2 < u_1 < v_{\textit{max}}.
    \end{aligned}
\end{equation*}
At time $t_{\textit{sw},i}$, the car has caught up with the closest preceding truck and all of the vehicles in between, just as before in Section~\ref{sec:SPA_closed_form}. From then on, the distance between each consecutive pair of these vehicles is exactly the required minimum safety distance, until the vehicles exit the control region. This scenario is valid when $v_{\textit{max}} > u_1 > u_2$, leading to:
\begin{equation*}
    \boxed{ (t_{f,j} - t_{0,j})\frac{a_{\textit{max},c} + a_{\textit{max},tr}}{2a_{\textit{max},c}} < t_{f,i} - t_{0,i} < t_{f,j} - t_{0,j}.}
\end{equation*}

\textbf{3) Car catches up with truck during acceleration phase}
The final situation occurs when the car catches up with its closest preceding truck and the cars in between them, if any, during the acceleration phase. The acceleration function of the car can be written as:\\
\begin{minipage}[t!]{0.45\linewidth}
\begin{equation*}
    \ddot{x}_i(s) = 
    \begin{dcases}
    0 & s \in [t_{0,i}, t_{\textit{dec},i}),\\
    -a_{\textit{max},c} & s \in [t_{\textit{dec},i}, t_{\textit{acc},i}),\\
    a_{\textit{max},tr} & s \in [t_{\textit{acc},i}, t_{f,1}),\\
    0 & s \in [t_{f,1}, t_{f,i}],
    \end{dcases}
\end{equation*}
\end{minipage}
\begin{minipage}[!t]{0.45\linewidth}
\vspace*{-\baselineskip}
\begin{equation*}
    \text{and, } \dot{x}_i(s) = 
    \begin{dcases}
    v_{\textit{max}} & s \in [t_{0,i}, t_{\textit{dec},i}],\\
    v_{\textit{max}} - a_{\textit{max},c}(t - t_{\textit{dec},i}) & s \in [t_{\textit{dec},i}, t_{\textit{acc},i}],\\
    u_1 + a_{\textit{max},tr}(t - t_{\textit{acc},i}) & s \in [t_{\textit{acc},i}, t_{f,1}],\\
    v_{\textit{max}} & s \in [t_{f,1}, t_{f,i}],
    \end{dcases}
\end{equation*}
\end{minipage}
\begin{equation*}
    x_i(s) = 
    \begin{dcases}
    -x_0 + v_{\textit{max}} (t - t_{0,i}) & s \in [t_{0,i}, t_{\textit{dec},i}],\\
    -x_0 + v_{\textit{max}} (t - t_{0,i}) - \frac{1}{2} a_{\textit{max},c}(t - t_{\textit{dec},i})^2 & s \in [t_{\textit{dec},i}, t_{\textit{acc},i}],\\
    -x_0 + v_{\textit{max}} (t_{\textit{dec},i} - t_{0,i}) + \frac{v_{\textit{max}}^2 - u_1^2}{2a_{\textit{max},c}} + u_1(t - t_{\textit{acc},i}) + \frac{1}{2} a_{\textit{max},tr}(t - t_{\textit{acc},i})^2 & s \in [t_{\textit{acc},i}, t_{f,1}],\\
    -x_0 + v_{\textit{max}} (t_{\textit{dec},i} - t_{0,i}) + \frac{v_{\textit{max}}^2 - u_1^2}{2}\left(\frac{1}{a_{\textit{max},tr}} + \frac{1}{a_{\textit{max},c}} \right) + v_{\textit{max}}(t - t_{f,1}) & s \in [t_{f,1}, t_{f,i}].
    \end{dcases}
\end{equation*}
where $u_1$ is the lowest speed attained by the car while in the control region. To be specific, at time $t_{\textit{acc},i}$ the car has caught up with all its preceding vehicles up to the closest preceding truck, thus leading to the relation $T^*_i = t_{\textit{acc},i}$. By substituting $t_{\textit{dec},i} = t_{f,1} - (v_{\textit{max}} - u_1) \left( \frac{1}{a_{\textit{max},c}} + \frac{1}{a_{\textit{max},tr}}\right)$, and utilising $x_i(t_{f,i}) = 0$ (the initial positional condition for the car), we obtain a quadratic equation in $u_1$, which yields:
\begin{equation*}
    u_1 = v_{\textit{max}} - \sqrt{ \frac{2 a_{\textit{max},c} a_{\textit{max},tr} (v_{\textit{max}} (t_{f,i} - t_{0,i}) - x_0)}{a_{\textit{max},c} + a_{\textit{max},tr}} }.
\end{equation*}
This then describes the entire trajectory of the car. This situation occurs when the condition below is satisfied:
\begin{equation*}
    \boxed{ \frac{x_0}{v_{\textit{max}}} \leq t_{f,i} - t_{0,i} \leq (t_{f,j} - t_{0,j})\frac{a_{\textit{max},c} + a_{\textit{max},tr}}{2a_{\textit{max},c}}.}
\end{equation*}
This condition was obtained by requiring $ v_{\textit{max}} \geq u_1 \geq u_2$.

Again, we notice that the quantity $\Delta_i \coloneqq t_{f,i} - t_{0,i}$ for the car holding the $i^{th}$ position in the platoon decides the trajectory shape. We sketch the case distinction for the car trajectory for the situation when the closest preceding truck does not come to a stop in Figure \ref{fig:delay_cpbt_no_stop}. As before, the quantity $\Delta_i$ can only lie in the interval $[\frac{x_0}{v_{\textit{max}}}, \Delta_j]$.

\begin{figure}[t!]
    \centering
    \includegraphics[width = 0.9\linewidth]{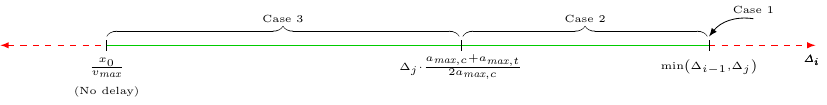}
    \caption{Case distinction for a car preceded by a truck in its platoon, given that the truck does not come to a stop in the control region.}
    \label{fig:delay_cpbt_no_stop}
\end{figure}

\subsection{Optimality of trajectories}\label{app:spa_optimality}
In Section~\ref{sec:SPA_closed_form} and Appendix~\ref{app:SPA_closed_form}, we obtained closed-form expressions for vehicle trajectories based on the solutions of the corresponding joint optimisation problems. However, it remains to be seen whether these expressions are in fact associated with optimal trajectories. Thus, in this section, we will show that closed-form expressions we have obtained are safe (according to \thref{def:safety}) and optimal (in the sense of Equations~\eqref{eq:opt_problem_general_form}-\eqref{eq:optimization_criterion_capacity}).

We prove that the safety constraint is satisfied for car trajectories when preceded in a platoon for a truck, since the other cases (truck preceded by car/truck, car preceded only by cars) have already been dealt with in (or can be easily extended from) \cite{Timmerman2021PlatoonIntersections}.

\begin{proposition}[Safety]\label{prop:safety_spa_cs}
    For every feasible instance of a car preceded by a truck in a platoon, the closed-form expressions for the car trajectory described in Section~\ref{sec:SPA_closed_form} or in Appendix~\ref{app:SPA_closed_form} above (as the case may be), satisfy the safety constraint stated in \thref{def:safety}.
\end{proposition}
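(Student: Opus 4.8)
The plan is to reduce the claim to a statement about how far the car lags behind a suitable ``maximal--following'' reference, and then to read off the required monotonicity from the closed forms case by case. Since \thref{def:safety} is a pairwise condition, I would fix the car (say at position $i$, with closest preceding truck at position $j\le i-1$) and only need to bound below the gap to its immediate predecessor $p:=i-1$; note that $x_p$ is either the truck's trajectory (if $p=j$) or, if $p>j$, again a car--preceded--by--a--truck trajectory covered by Section~\ref{sec:SPA_closed_form}/Appendix~\ref{app:SPA_closed_form}. For $j\le k\le i$ introduce the reference $r_k(s):=x_j(s)-v_{\textit{max}}\sum_{l=j}^{k-1}\tau_{\psi(l),\psi(l+1)}$ (so $r_j=x_j$) and the deficit $\phi_k:=x_k-r_k$; the closed forms (see the definition of $T^*_k$ in \eqref{eq:t_star} and the expressions \eqref{eq:spa_cs_truck_stop_1}, \eqref{eq:spa_cs_truck_nostop_1}) are built precisely so that $x_k\equiv r_k$, i.e.\ $\phi_k\equiv 0$, on $[T^*_k,t_{f,k}]$. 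Since $r_p-r_i\equiv v_{\textit{max}}\tau_{\psi(p),\psi(i)}$, the identity $x_p-x_i=v_{\textit{max}}\tau_{\psi(p),\psi(i)}+(\phi_p-\phi_i)$ shows that \thref{prop:safety_spa_cs} is equivalent to the single inequality $\phi_p(s)\ge\phi_i(s)$ on the relevant interval of \thref{def:safety} (which is covered below, as $T^*_i\le t_{\textit{acc},i}<t_{f,1}\le t_{f,i-1}$).

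The next step is to establish three structural facts, all to be checked directly from the closed-form expressions. (L1) For every $k$, $\dot x_k\ge\dot x_j$ on $[t_{0,k},T^*_k]$; hence $\phi_k$ is non-decreasing there with terminal value $\phi_k(T^*_k)=0$, so $\phi_k\le 0$ on all of $[t_{0,k},t_{f,k}]$. (L2) The catch-up times satisfy $T^*_j(:=t_{0,j})\le T^*_{j+1}\le\cdots\le T^*_i$: the scheduled control-region times $\Delta_k:=t_{f,k}-t_{0,k}$ are non-increasing along a platoon (Section~\ref{sec:SPA}), and as $\Delta_k$ decreases the sub-cases of Section~\ref{sec:SPA_closed_form}/Appendix~\ref{app:SPA_closed_form} are traversed from ``car follows truck exactly'' ($T^*_k=t_{0,k}$) to the acceleration--catch-up case ($T^*_k=t_{\textit{acc},k}$), with $T^*_k$ moving monotonically later. (L3) On $[t_{0,k},T^*_k)$ a car preceded by a truck has acceleration in $\{0,-a_{\textit{max},c}\}$ (inspection of all seven sub-cases), so its speed equals $v_{\textit{max}}$ up to $t_{\textit{dec},k}$ and then decreases at the constant rate $a_{\textit{max},c}$; moreover $t_{\textit{dec},k}$ is a decreasing function of $\Delta_k$ within each sub-case and the sub-case formulas agree at the thresholds, so $k\mapsto t_{\textit{dec},k}$ is non-decreasing.

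Granting (L1)--(L3), the assembly is short. Split the relevant $s$-interval at $T^*_p$ and $T^*_i$ (with $T^*_p\le T^*_i$ by (L2)). For $s\ge T^*_i$ both deficits vanish, so the gap equals $v_{\textit{max}}\tau_{\psi(p),\psi(i)}$ exactly; for $s\in[T^*_p,T^*_i)$ we have $\phi_p(s)=0\ge\phi_i(s)$ by (L1); the remaining interval $[t_{0,i},T^*_p)$ is empty when $p=j$ (since $t_{0,i}>t_{0,j}=T^*_j$), and otherwise $p$ is a car, so on $[t_{0,i},T^*_p)\subseteq[t_{0,p},T^*_p)\cap[t_{0,i},T^*_i)$ fact (L3) gives $\dot x_i(s)=v_{\textit{max}}-a_{\textit{max},c}(s-t_{\textit{dec},i})^+\ge v_{\textit{max}}-a_{\textit{max},c}(s-t_{\textit{dec},p})^+=\dot x_p(s)$ because $t_{\textit{dec},i}\ge t_{\textit{dec},p}$; thus $h:=x_p-x_i$ is non-increasing on $[t_{0,i},T^*_p]$, and since $h(T^*_p)=v_{\textit{max}}\tau_{\psi(p),\psi(i)}-\phi_i(T^*_p)\ge v_{\textit{max}}\tau_{\psi(p),\psi(i)}$ by (L1), we get $h\ge v_{\textit{max}}\tau_{\psi(p),\psi(i)}$ throughout, which is the claim.

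The hard part will be the case analysis behind (L1) and (L3); everything else is bookkeeping. For (L3) one must compare $t_{\textit{dec},i}$ and $t_{\textit{dec},i-1}$ when the two vehicles land in different sub-cases, and this is exactly where the boxed feasibility inequalities of Section~\ref{sec:SPA_closed_form} (and their counterparts in Appendix~\ref{app:SPA_closed_form}) enter; for (L1) one checks, along each segment of the truck's velocity profile (decelerating at $-a_{\textit{max},tr}$, at rest, or accelerating at $a_{\textit{max},tr}$), that the car's speed stays above it up to catch-up, the only delicate point being that the car's still-decreasing speed must not dip below the truck's already-recovering speed --- controlled because $\dot x_i$ meets $\dot x_j$ only at $T^*_i$ and approaches from above. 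Both arguments are elementary once one exploits $a_{\textit{max},tr}<a_{\textit{max},c}$ (\thref{assump:acc_relation}) and $\Delta_i\le\Delta_{i-1}$, but they are unavoidable.
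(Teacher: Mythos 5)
Your proposal is correct and follows essentially the same route as the paper's proof: the paper likewise reduces safety to the observation that the gap to the immediate predecessor equals exactly $v_{\textit{max}}\tau_{\psi(i-1),\psi(i)}$ at the catch-up time $T^*_i$ and that $\dot{x}_i - \dot{x}_{i-1} \geq 0$ on $[t_{0,i}, T^*_i]$ (your (L1) and (L3) combined), using the non-increasing delays to order the case indices (your (L2)), and then integrates the velocity difference. Your deficit functions $\phi_k$ and the extra split at $T^*_p$ are just a more explicit bookkeeping of the same argument, and both proofs defer the same case-by-case verification of the velocity ordering.
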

\begin{proof}
    Let us pick a car, such that its trajectory has a closed-form expression as in either Section~\ref{sec:SPA_closed_form} or Appendix~\ref{app:SPA_closed_form}. We need to show that the trajectory is safe with respect to its preceding vehicle. That is, if the car occupies position $i$ in its platoon:
    \begin{equation}\label{eq:safety_lemma_spa}
        x_{i-1}(s) - x_i(s) \geq v_{\textit{max}} \tau_{\psi(i-1),1}, \quad \forall\ s \in [t_{0,i},t_{f,i-1}],
    \end{equation}
    where $\psi$ (see Equation~\eqref{eq:type_function}) denotes the type of a vehicle. Note that the position function for vehicles takes values in the negative real line when they are in the control region, thus the absolute values simplify to lead to the expression above.  Recall that we are dealing with a car that is preceded in its platoon by a truck. Further, we use elements from the trajectory of the closest preceding truck in order to construct the trajectory of the car under consideration. Hence, the vehicle that immediately precedes the car can either be:
    \begin{enumerate}
        \item the closest preceding truck, or
        \item a car.
    \end{enumerate}
    
    1) We first look at the first possibility, that is, when the immediately preceding vehicle is a truck (at the $(i-1)^{th}$ position). Recalling the definition of $T^*_i$ from Equation~\eqref{eq:t_star}, we can easily infer that the trajectories of the car and truck are safe with respect to each other from time $T^*_i$ onward, since the velocity of the car exactly matches that of the truck for the interval $[T^*_i, t_{f,i-1}]$, and the distance between them is exactly $v_{\textit{max}}\tau_{1,0}$ at time $T^*_i$. Thus we only need to show that the trajectories of the car and truck are safe in the interval $[t_{0,i}, T^*_i)$.
    
    We can write the evolution of the position of the truck (and similarly for the car) as:
    \begin{equation*}
        x_{i-1}(T^*_i) = x_{i-1}(s) + \int_s^{T^*_i} \dot{x}_{i-1}(w) dw, \qquad x_{i}(T^*_i) = x_{i}(s) + \int_s^{T^*_i} \dot{x}_{i}(w) dw,
    \end{equation*}
    for $s$ in the interval $[t_{0,i}, T^*_i]$. Subtracting the second equation from the first leads to:
    \begin{equation*}
        x_{i-1}(T^*_i) -  x_{i}(T^*_i) = x_{i-1}(s) - x_{i}(s) + \int_s^{T^*_i} \left(\dot{x}_{i-1}(w) - \dot{x}_{i}(w)\right) dw.
    \end{equation*}
    Now, we know that $x_{i-1}(T^*_i) -  x_{i}(T^*_i)$ is exactly the required minimum safety distance $v_{\textit{max}} \tau_{1,0}$, since the truck is immediately preceding the car. Further, we observe that for each time instant $s$ in the interval $[t_{0,i}, T^*_i]$, the relation
    \begin{equation*}
        \dot{x}_{i}(s) - \dot{x}_{i-1}(s) \geq 0,
    \end{equation*}
    holds, no matter the case distinction used to construct the trajectory of the car. On rearranging terms, we see:
    \begin{equation*}
            x_{i-1}(s) - x_{i}(s) = v_{\textit{max}} \tau_{1,0} + \int_s^{T^*_i} \left(\dot{x}_{i}(w) - \dot{x}_{i-1}(w)\right) dw \geq  v_{\textit{max}} \tau_{1,0}  \quad \text{for } s \in [t_{0,i}, T^*_i],
    \end{equation*}
    thus satisfying Equation~\eqref{eq:safety_lemma_spa}.

   2) For the case when the vehicle immediately preceding the car under consideration is also a car, we can prove safety in a similar manner, with the help of one key observation. Because delays in a platoon are non-increasing from the head to the tail of the platoon, if the immediately preceding car has a trajectory falling in Case $k$ of Section~\ref{sec:SPA_closed_form} ($k = 1, 2, 3, 4$ for Section~\ref{sec:SPA_closed_form} and $k = 1, 2, 3$ for Appendix~\ref{app:SPA_closed_form}), then the trajectory of the following car can only fall within Case $k_1$, where $k_1 \geq k$. This means that here too, we have the relation:
   \begin{equation*}
        \dot{x}_{i}(s) - \dot{x}_{i-1}(s) \geq 0 \quad \text{for } s \in [t_{0,i}, T^*_i],
    \end{equation*}
    which means that the arguments used to prove safety in the first case can be directly used here as well.
\end{proof}

\paragraph{A stronger optimality criterion}\leavevmode

The vehicle trajectories generated through the joint optimisation problem in Section~\ref{sec:SPA_optimisation} were optimised for the objective function given in Equation~\eqref{eq:optimization_criterion_capacity}.
That is, the optimal trajectories collectively have the smallest area under the position curve in the interval $[t_{0,i}, t_{f,i}]$. Informally, the vehicles with these optimal trajectories will be closest to the intersection for the greatest amount of time in the interval $[t_{0,i}, t_{f,i}]$, compared to any other set of trajectories that are feasible (i.e.\ from the class $\Pi_c$ or $\Pi_{\textit{tr}}$, as the case may be) and safe. Having shown that the trajectories with the closed-form expressions are safe w.r.t.\ to their preceding vehicle, we now turn our attention to showing that the trajectories whose expressions we have obtained are optimal as well. We will show this in two steps. First, we will show that the trajectories for which we have obtained closed-form expressions are optimal if optimised in a sequential manner (described in more detail later) and then we will show that the trajectories that are optimal according to the sequential optimisation procedure are also optimal for the \emph{joint} optimisation problem. 

Accordingly, we first show that the trajectories whose closed-form expressions we have obtained are optimal for a multi-step, sequential optimisation procedure, as done in \cite{Miculescu2020Polling-Systems-BasedSignals} and \cite{Timmerman2021PlatoonIntersections}. In fact, it turns out that these trajectories are optimal in a stronger sense. Indeed, we claim that \emph{any} other feasible and safe trajectory can be no closer to the intersection than the trajectories described by the expressions in Section~\ref{sec:SPA_closed_form} and in Appendix~\ref{app:SPA_closed_form}, at every time instant in the interval $[t_{0,i}, t_{f,i}]$, given the trajectory of the preceding vehicle.

Our closed-form expressions align very well with the sequential process of obtaining optimal trajectories, especially while obtaining the trajectory of a car preceded by a truck in the platoon -- the expressions for the car trajectory involve elements from the trajectory of this closest preceding truck. Apart from such cars however, the rest of the trajectories can be obtained independently of the trajectories of other vehicles, and only information of the delay of preceding vehicles is usually necessary.

Since this criterion is different than the one considered before by \textcite{Miculescu2020Polling-Systems-BasedSignals}, and \textcite{Timmerman2021PlatoonIntersections}, we still need to show that the trajectories are optimal also for a truck or for the case when a car is preceded in the platoon by only cars, in addition to the trajectories from Section~\ref{sec:SPA_closed_form} and Appendix~\ref{app:SPA_closed_form}. Recall that these are the same trajectories with the closed-form expressions described by \cite{Timmerman2021PlatoonIntersections}. For the sake of completeness, we write down the expressions for the trajectory of a vehicle in either of these cases, by following the same procedure as before. First, if the vehicle makes a full stop in the control region (see \cite{Timmerman2021PlatoonIntersections}), the acceleration, velocity and position functions can be expressed as:\\
\begin{minipage}{0.43\linewidth}
\centering
\begin{equation*}
    \ddot{x}_i(s) = 
    \begin{dcases}
        0 & s \in [t_{0,i}, t_{\textit{dec},i}),\\
        -a_{\textit{max},\dagger} & s \in [t_{\textit{dec},i}, t_{\textit{stop},i}),\\
        0 & s \in [t_{\textit{stop},i}, t_{\textit{acc},i}),\\
        a_{\textit{max},\dagger} & s \in [t_{\textit{acc},i}, t_{f,1}),\\
        0 & s \in [t_{f,1}, t_{f,i}],
    \end{dcases}
\end{equation*}
\end{minipage}
\begin{minipage}{0.49\linewidth}
    \centering
    \begin{equation*}
        \dot{x}_i(s) = 
    \begin{dcases}
        v_{\textit{max}} & s \in [t_{0,i}, t_{\textit{dec},i}),\\
        v_{\textit{max}} -a_{\textit{max},\dagger}(t - t_{\textit{dec},i}) & s \in [t_{\textit{dec},i}, t_{\textit{stop},i}),\\
        0 & s \in [t_{\textit{stop},i}, t_{\textit{acc},i}),\\
        a_{\textit{max},\dagger}(t - t_{\textit{acc},i}) & s \in [t_{\textit{acc},i}, t_{f,1}),\\
        v_{\textit{max}} & s \in [t_{f,1}, t_{f,i}],
    \end{dcases}
    \end{equation*}
\end{minipage}
\begin{equation}\label{eq:homogeneous_stop}
    x_i(s) = 
    \begin{dcases}
        -x_0 + v_{\textit{max}}(t - t_{0,i}) & s \in [t_{0,i}, t_{\textit{dec},i}),\\
        -x_0 + v_{\textit{max}}(t - t_{0,i}) -\frac{1}{2}a_{\textit{max},\dagger}(t - t_{\textit{dec},i})^2 & s \in [t_{\textit{dec},i}, t_{\textit{stop},i}),\\
        -x_0 + v_{\textit{max}}(t_{\textit{stop},i} - t_{0,i}) -\frac{v_{\textit{max}}^2}{2a_{\textit{max},\dagger}} & s \in [t_{\textit{stop},i}, t_{\textit{acc},i}),\\
        -x_0 + v_{\textit{max}}(t_{\textit{stop},i} - t_{0,i}) -\frac{v_{\textit{max}}^2}{2a_{\textit{max},\dagger}} + \frac{1}{2}a_{\textit{max},\dagger}(t - t_{\textit{acc},i})^2 & s \in [t_{\textit{acc},i}, t_{f,1}),\\
        -x_0 + v_{\textit{max}}(t_{\textit{stop},i} - t_{0,i}) + v_{\textit{max}}(t - t_{f,1}) & s \in [t_{f,1}, t_{f,i}],
    \end{dcases}
\end{equation}
where $\dagger = c$ if $\psi(i) = 0$ and $\dagger = t$ otherwise. The positional terminal condition leads to an expression for $t_{\textit{stop},i}$, which can be used to compute $t_{\textit{dec},i}$:
\begin{equation*}
    t_{\textit{stop},i} = \frac{x_0}{v_{\textit{max}}} - (t_{f,i} - t_{f,1}) + t_{0,i}.
\end{equation*}
This trajectory results if the following conditions are satisfied:
\begin{equation*}
    t_{\textit{stop},i} \leq t_{\textit{acc},i} \quad \text{and} \quad t_{\textit{dec},i} \geq t_{0,i},
\end{equation*}
leading to,
\begin{equation}\label{eq:homogeneous_stop_conditions}
    \quad \boxed{\frac{x_0}{v_{\textit{max}}} + \frac{v_{\textit{max}}}{a_{\textit{max},\dagger}} \leq t_{f,i} - t_{0,i}} \quad \text{and} \quad \boxed{ t_{f,i} - t_{f,1} \leq \frac{x_0}{v_{\textit{max}}} - \frac{v_{\textit{max}}}{a_{\textit{max},\dagger}}.}
\end{equation}
Also, as we observed that the delay in the polling model is non-increasing in a platoon as we go from the head to the tail, we have $t_{f,i-1}-t_{0,i-1}$ as an upper bound for $t_{f,i} - t_{0,i}$.

Similarly, if the vehicle does not make a stop, but instead decelerates to some minimum speed $u$ and then accelerates to the maximum permissible speed $v_{\textit{max}}$ while in the control region, we have:\\
\begin{minipage}{0.43\linewidth}
\centering
\begin{equation*}
    \ddot{x}_i(s) = 
    \begin{dcases}
        0 & s \in [t_{0,i}, t_{\textit{dec},i}),\\
        -a_{\textit{max},\dagger} & s \in [t_{\textit{dec},i}, t_{\textit{acc},i}),\\
        a_{\textit{max},\dagger} & s \in [t_{\textit{acc},i}, t_{f,1}),\\
        0 & s \in [t_{f,1}, t_{f,i}],
    \end{dcases}
\end{equation*}
\end{minipage}
\begin{minipage}{0.49\linewidth}
    \centering
    \begin{equation*}
        \dot{x}_i(s) = 
    \begin{dcases}
        v_{\textit{max}} & s \in [t_{0,i}, t_{\textit{dec},i}),\\
        v_{\textit{max}} -a_{\textit{max},\dagger}(t - t_{\textit{dec},i}) & s \in [t_{\textit{dec},i}, t_{\textit{acc},i}),\\
        u + a_{\textit{max},\dagger}(t - t_{\textit{acc},i}) & s \in [t_{\textit{acc},i}, t_{f,1}),\\
        v_{\textit{max}} & s \in [t_{f,1}, t_{f,i}],
    \end{dcases}
    \end{equation*}
\end{minipage}

\begin{equation}\label{eq:homogeneous_no_stop}
    x_i(s) = 
    \begin{dcases}
        -x_0 + v_{\textit{max}}(t - t_{0,i}) & s \in [t_{0,i}, t_{\textit{dec},i}),\\
        -x_0 + v_{\textit{max}}(t - t_{0,i}) -\frac{1}{2}a_{\textit{max},\dagger}(t - t_{\textit{dec},i})^2 & s \in [t_{\textit{dec},i}, t_{\textit{acc},i}),\\
        -x_0 + v_{\textit{max}}(t_{\textit{acc},i} - t_{0,i}) -\frac{(v_{\textit{max}} - u)^2}{2a_{\textit{max},\dagger}} + u(t - t_{\textit{acc},i}) + \frac{1}{2}a_{\textit{max},\dagger}(t - t_{\textit{acc},i})^2 & s \in [t_{\textit{acc},i}, t_{f,1}),\\
        -x_0 + v_{\textit{max}}(t_{\textit{acc},i} - t_{0,i}) + u(t_{f,1} - t_{\textit{acc},i}) + v_{\textit{max}}(t - t_{f,1}) & s \in [t_{f,1}, t_{f,i}],
    \end{dcases}
\end{equation}
where $u$ is the lowest speed attained by vehicle $i$ as indicated before. By using the supplementary relation $t_{\textit{acc},i} = t_{f,1} - \frac{v_{\textit{max}} - u}{a_{\textit{max},tr}}$ and the positional terminal condition, we can obtain a quadratic equation in $u$, which leads to:
\begin{equation}\label{eq:lowest_speed_homogeneous}
    u = v_{\textit{max}} - \sqrt{ a_{\textit{max},\dagger}[v_{\textit{max}}(t_{f,i} - t_{0,i}) - x_0] }.
\end{equation}
Finally, this case occurs by restricting $u$ to be non-negative, as well as (upper) bounded by $v_{\textit{max}}$:
\begin{equation*}
    \boxed{\frac{x_0}{v_{\textit{max}}} \leq t_{f,i} - t_{0,i} < \frac{x_0}{v_{\textit{max}}} + \frac{v_{\textit{max}}}{a_{\textit{max},\dagger}}.}
\end{equation*}
Note that, when $t_{f,i} - t_{0,i}$ attains its lower bound above, the instance reduces to the vehicle driving at speed $v_{\textit{max}}$ throughout the control region. Recalling that we set $\Delta_i = t_{f,i} - t_{0,i}$, we can observe in Figure \ref{fig:delay_truck_or_car} how $\Delta_i$ affects the trajectory of the $i^{th}$ vehicle in a platoon, if it is a truck or if it is a car preceded only by cars.

We remarked in Section \ref{sec:SPA_closed_form} that the trajectories of trucks or of cars preceded by only cars in their platoon can be computed independently of other trajectories. This can be verified by observing the closed-form trajectory expressions above -- the crossing time $t_{f,1}$ is the only parameter/ quantity that does not depend on vehicle $i$. It might seem surprising that no other knowledge about any preceding vehicles is necessary, once the crossing time $t_{f,i}$ has been determined, but this can be explained as follows -- the difference $t_{f,i} - t_{f,1}$ carries implicit information about the vehicles in between the $i^{th}$ vehicle in the platoon and the head. This is because $t_{f,i}$ is calculated by all the minimum time separations required between each pair of vehicles from the head of the platoon up to the $i^{th}$ vehicle. If we know $t_{f,i}$ and $t_{f,1}$ then we know the location of the $i^{th}$ vehicle at time $t_{f,1}$ (at a distance of $v_{\textit{max}}(t_{f,i} - t_{f,1})$ from the intersection). This in turn determines the entire trajectory -- whether the vehicle will come to a stop or not and when to start decelerating, especially because these vehicles only change speed with maximum acceleration/deceleration.

\begin{figure}[!t]
    \centering
    \includegraphics[width = 0.9\linewidth]{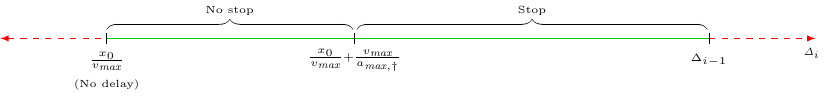}
    \caption{Case distinction for a truck or a car preceded by only cars in its platoon, where $\dagger = c$ if the associated vehicle is a car and $\dagger = t$ otherwise.}
    \label{fig:delay_truck_or_car}
\end{figure}

We now prove that the vehicle trajectories described above are optimal according to the new, stronger criterion. Consider the optimisation problems formulated by \cite{Miculescu2020Polling-Systems-BasedSignals}, which generated optimal trajectories in a sequential manner, given the optimal trajectory $x^*_{i-1}$ of the preceding vehicle:
\begin{equation}
\begin{aligned}\label{eq:miculescu_opt_prob}
\argmin_{ x_i:[t_{0,i},t_{f,i}] \to [-x_0, 0]} & \int_{t_{0,i}}^{t_{f,i}}|x_i(s)|\ ds \\
\textrm{subject to: } & x_i \in 
\begin{dcases}
\Pi_c, & \text{ if } \psi(i) = 0\\
\Pi_{\textit{tr}}, & \text{ if } \psi(i) = 1
\end{dcases} \\
& x^*_{i-1}(s) - x_i(s) \geq v_{\textit{max}} \tau_{\psi(i-1),\psi(i)} \quad \textrm{ for } s \in [t_{0,i} , t_{f,i-1}].
\end{aligned}
\end{equation}

In \thref{prop:optimal_homogeneous} and \thref{cor:optimal_inhomogeneous}, we will prove that the proposed trajectories are optimal when optimised in a sequential manner, but for a stronger optimality criterion. Then, in \thref{prop:jointly_optimal}, we show that our proposed trajectories are also an optimal solution to the optimisation problem formulated in Section \ref{sec:SPA_optimisation}.

\begin{proposition}[Optimality part 1]\label{prop:optimal_homogeneous}
    Suppose the $i^{th}$ vehicle in a platoon has a trajectory described by the position functions in either Equation~\eqref{eq:homogeneous_stop}, or Equation~\eqref{eq:homogeneous_no_stop}, henceforth denoted by $x^*_i$. Then, $x^*_i$ keeps the associated vehicle closest to the intersection compared to any other compatible trajectory, at every instant of its journey through the control region, given that the (optimal) trajectory $x^*_{i-1}$ of the preceding vehicle has been fixed. That is, the trajectory $x^*_i$ satisfies, for every trajectory $x^\pi_i \in \Pi_\dagger$ (where $\dagger = c$ if $\psi(i) = 0$ and $\textit{tr}$ otherwise), we have:    \begin{equation}\label{eq:optimality_strong_criterion_cot}
    \begin{aligned}
        x^*_{i}(s) \geq&\ x^\pi_i(s) \quad \textrm{ for all } s \in [t_{0,i} , t_{f,i-1}],\\
        \text{such that } x^*_{i-1} - y \geq&\ v_{\textit{max}} \tau_{\psi(i-1),\psi(i)}, \quad y \in \{x^*_i, x^\pi_i\}
    \end{aligned}
    \end{equation}
\end{proposition}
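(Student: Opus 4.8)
The plan is to fix an arbitrary competitor $x^\pi_i \in \Pi_\dagger$ that is safe with respect to the already-fixed preceding trajectory $x^*_{i-1}$, and to prove that $g \coloneqq x^*_i - x^\pi_i$ is non-negative throughout $[t_{0,i}, t_{f,i-1}]$ (recall positions are $\leq 0$, so ``closest to the intersection'' means ``largest $x$''). If $x^*_i$ has no braking phase the claim is immediate, so assume $x^*_i$ is delayed and decompose $[t_{0,i}, t_{f,i-1}]$ into: the initial cruising phase $[t_{0,i}, t_{\textit{dec},i}]$, on which $\dot{x}^*_i \equiv v_{\textit{max}}$; the braking-and-recovery phase $[t_{\textit{dec},i}, t_{f,1}]$, consisting of a maximal-rate deceleration, possibly a full stop, then a maximal-rate acceleration back to $v_{\textit{max}}$ at the head's crossing time $t_{f,1}$; and the final cruising phase $[t_{f,1}, t_{f,i-1}]$, on which again $\dot{x}^*_i \equiv v_{\textit{max}}$.

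On the initial cruising phase the comparison is trivial: $x^*_i$ and $x^\pi_i$ start at the same point $-x_0$ and $\dot{x}^\pi_i \leq v_{\textit{max}} = \dot{x}^*_i$ there, so $x^\pi_i \leq x^*_i$ on $[t_{0,i}, t_{\textit{dec},i}]$ and, by continuity, $g(t_{\textit{dec},i}) \geq 0$. For the rest of the interval I would use the key structural fact --- obtained by comparing the closed-form expressions for two consecutive optimal trajectories --- that from the instant $x^*_i$ first reaches rest (in the no-stop case, its minimum speed), call it $t_{\textit{stop},i}$ (resp.\ $t_{\textit{acc},i}$), onwards, $x^*_i$ sits \emph{exactly} at the minimum safe distance behind $x^*_{i-1}$: their deceleration points are spaced by precisely $v_{\textit{max}}\tau_{\psi(i-1),\psi(i)}$, and they decelerate, idle and accelerate in lockstep (both regaining $v_{\textit{max}}$ at $t_{f,1}$ and then cruising, with $t_{f,i}-t_{f,i-1}$ equal to the service time $\tau_{\psi(i-1),\psi(i)}$), so that $x^*_{i-1}(s) - x^*_i(s) = v_{\textit{max}}\tau_{\psi(i-1),\psi(i)}$ for every $s \in [t_{\textit{stop},i}, t_{f,i-1}]$. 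On that whole tail the safety constraint on $x^\pi_i$ reads $x^\pi_i(s) \leq x^*_{i-1}(s) - v_{\textit{max}}\tau_{\psi(i-1),\psi(i)} = x^*_i(s)$, which gives $g \geq 0$ there and, in particular, $g(t_{\textit{stop},i}) \geq 0$.

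The only remaining piece is the strict-deceleration sub-interval $[t_{\textit{dec},i}, t_{\textit{stop},i}]$ (resp.\ $[t_{\textit{dec},i}, t_{\textit{acc},i}]$) --- and this is the one place where safety alone is insufficient, since $x^*_i$ carries genuine slack relative to $x^*_{i-1}$ there. Here I would exploit that $\ddot{x}^*_i \equiv -a_{\textit{max},\dagger}$ is the extreme admissible acceleration, so that $\ddot{g} = -a_{\textit{max},\dagger} - \ddot{x}^\pi_i \leq 0$ almost everywhere (using $|\ddot{x}^\pi_i| \leq a_{\textit{max},\dagger}$ and the Lipschitz regularity of $\dot{x}^\pi_i$ forced by membership in $\Pi_\dagger$); hence $g$ is concave on this sub-interval. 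A concave function dominates the chord through its endpoint values, so $g \geq \min\{g(t_{\textit{dec},i}), g(t_{\textit{stop},i})\} \geq 0$ on $[t_{\textit{dec},i}, t_{\textit{stop},i}]$ by the two endpoint bounds established above. Piecing the three parts together yields $g \geq 0$ on all of $[t_{0,i}, t_{f,i-1}]$, which is the assertion.

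The step I expect to be the genuine work is the structural fact in the middle paragraph: identifying the ``catch-up'' instant, verifying that it is no later than the end of the strict-deceleration phase, and checking that from then on consecutive optimal trajectories are rigidly offset by $v_{\textit{max}}\tau_{\psi(i-1),\psi(i)}$. This must be verified from the closed-form formulas in all the relevant configurations (the stop and no-stop shapes of $x^*_i$, and the admissible shapes of $x^*_{i-1}$, including when $i-1$ is the platoon head or a car preceded by a truck), though the task is eased by monotonicity of delays along the platoon --- if $x^*_i$ comes to a full stop then so does $x^*_{i-1}$, and $x^*_i$ brakes no earlier than $x^*_{i-1}$. The two cruising-phase comparisons and the concavity argument are routine.
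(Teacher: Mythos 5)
Your decomposition and the concavity argument for the strict-deceleration phase are attractive (the latter is arguably cleaner than the paper's contradiction argument there), but the tail argument contains a genuine gap, and it is exactly the step you flagged as "the genuine work". The structural fact you rely on --- that from the catch-up instant onward $x^*_{i-1}(s)-x^*_i(s)=v_{\textit{max}}\tau_{\psi(i-1),\psi(i)}$, so that the safety constraint alone pins the competitor below $x^*_i$ on the whole tail --- holds only when vehicles $i-1$ and $i$ have the same acceleration bound. The proposition, however, must cover a truck at position $i$ preceded by a car at position $i-1$: a truck's trajectory is always of the form \eqref{eq:homogeneous_stop} or \eqref{eq:homogeneous_no_stop} regardless of what precedes it, and the later corollary for cars behind trucks invokes this proposition for precisely such trucks. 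In that configuration the car brakes and accelerates at $a_{\textit{max},c}>a_{\textit{max},tr}$, so the two optimal trajectories are not in lockstep: during the common rest phase the gap equals $v_{\textit{max}}\tau_{0,1}+\frac{v_{\textit{max}}^2}{2}\bigl(\frac{1}{a_{\textit{max},tr}}-\frac{1}{a_{\textit{max},c}}\bigr)>v_{\textit{max}}\tau_{0,1}$, and it only shrinks to the minimum at $t_{f,1}$. Hence on the tail before $t_{f,1}$ the chain $x^\pi_i(s)\le x^*_{i-1}(s)-v_{\textit{max}}\tau_{\psi(i-1),\psi(i)}=x^*_i(s)$ fails: the middle quantity strictly exceeds $x^*_i(s)$, and pointwise safety does not exclude a competitor sitting in that slack. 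Since the endpoint bound $g(t_{\textit{stop},i})\ge 0$ feeding your concavity step is obtained from this tail argument, the middle piece is left unsupported as well.

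The correct way to close the tail --- and what the paper does --- is to use safety only at the single instant $t_{f,1}$: combined with $x_i(t_{f,i})=0$ and $\dot{x}_i\le v_{\textit{max}}$, the constraint $x_i(t_{f,1})\le x^*_{i-1}(t_{f,1})-v_{\textit{max}}\tau_{\psi(i-1),\psi(i)}=-v_{\textit{max}}(t_{f,i}-t_{f,1})$ forces any feasible competitor to satisfy $x_i(t_{f,1})=-v_{\textit{max}}(t_{f,i}-t_{f,1})$ and $\dot{x}_i\equiv v_{\textit{max}}$ on $[t_{f,1},t_{f,i}]$; one then propagates backward from $t_{f,1}$ using the vehicle's \emph{own} acceleration bound, $\dot{x}_i(s)\ge v_{\textit{max}}-a_{\textit{max},\dagger}(t_{f,1}-s)$, to obtain a position upper bound that $x^*_i$ attains throughout its acceleration phase; the rest phase then follows from $\dot{x}^\pi_i\ge 0$ and the bound at $t_{\textit{acc},i}$. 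With the right endpoint re-established this way, your concavity argument on $[t_{\textit{dec},i},t_{\textit{stop},i}]$ (respectively $[t_{\textit{dec},i},t_{\textit{acc},i}]$) goes through and yields a valid, and somewhat tidier, proof of the deceleration sub-interval than the paper's.
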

\begin{proof}
    The proof works by partitioning the interval $[t_{0,i},t_{f,i}]$ into four sub-intervals:
    \begin{enumerate}
        \item $[t_{0,i}, t_{\textit{dec},i})$,
        \item $[t_{\textit{dec},i}, t_{\textit{acc},i})$,
        \item $[t_{\textit{acc},i}, t_{f,1})$, and
        \item $[t_{f,1}, t_{f,i}]$,
    \end{enumerate}
    and by choosing an arbitrary \emph{compatible} trajectory $x^\pi_i$, (belonging to the same class $\Pi_c$ or $\Pi_{\textit{tr}}$ as vehicle $i$), and showing that the trajectory $x^*_i$ satisfies the condition in Equation~\eqref{eq:optimality_strong_criterion_cot} in each sub-interval, but in the order -- 1, 4, 3, 2. Further, we do not consider vehicles for which the only compatible trajectory is to drive at speed $v_{\textit{max}}$ towards the intersection throughout the control region, i.e.\ vehicles at position $i$ in a platoon such that $t_{f,i} - t_{0,i} = \frac{x_0}{v_{\textit{max}}}$.

    \textbf{1. In the sub-interval $[t_{0,i}, t_{\textit{dec},i})$:} During this sub-interval, the vehicle drives towards the intersection at speed $v_{\textit{max}}$, regardless of whether it comes to a stop later or not. Since the maximum permissible speed for both kinds of vehicles is $v_{\textit{max}}$, there is no other trajectory that can perform better in terms of getting closer to the intersection in this sub-interval. Thus, the condition in Equation~\eqref{eq:optimality_strong_criterion_cot} trivially holds for this sub-interval.

    \textbf{4. In the sub-interval $[t_{f,1}, t_{f,i}]$:} ($i>1$) At every instant while in the control region, the $i^{th}$ vehicle must be at a distance of at least $v_{\textit{max}} \tau_{\psi(i-1), \psi(i)}$ from the immediately preceding vehicle, and thus at a distance of $\sum_{k=0}^{i-1} v_{\textit{max}} \tau_{\psi(k), \psi(k+1)}$ from the head of the platoon. Thus, at time $t_{f,1}$, when the head of the platoon is exiting the control region, the $i^{th}$ vehicle must be at a distance of at least $\sum_{k=0}^{i-1} v_{\textit{max}} \tau_{\psi(k), \psi(k+1)}$ from the intersection. This is the closest position vehicle $i$ can attain at time $t_{f,1}$ while satisfying the safety constraints. In all of our optimal trajectories, we see that each vehicle attains this closest position at the crossing time of the head of its platoon (see Equations~\eqref{eq:homogeneous_stop} and \eqref{eq:homogeneous_no_stop}).
    
    Furthermore, at the instant $t_{f,i}$, vehicle $i$ must be at the intersection, and the difference between $t_{f,1}$ and $t_{f,i}$ is exactly $\sum_{k=0}^{i-1} \tau_{\psi(k), \psi(k+1)}$. Thus, if vehicle $i$ attains its closest possible position at time $t_{f,1}$, the only possible trajectory for this sub-interval involves driving at constant speed $v_{\textit{max}}$. No other trajectory $x^\pi$ can be closer to the intersection at any instant in $[t_{f,1}, t_{f,i}]$ without either violating the safety constraint or exceeding the maximum permissible speed.

    \textbf{3. In the sub-interval $[t_{\textit{acc},i}, t_{f,1})$:} We have already established that every vehicle in a platoon needs to be driving at speed $v_{\textit{max}}$ from time $t_{f,1}$ up to its own crossing time. Before time $t_{f,1}$ however, if a vehicle possesses the maximum permissible speed $v_{\textit{max}}$, then it would have to be further away from the intersection compared to the case where the velocity was less than $v_{\textit{max}}$. Hence, according to the optimality criterion, each vehicle in a platoon needs to attain speed $v_{\textit{max}}$ at $t_{f,1}$, while maintaining a slower speed prior to $t_{f,1}$. This can be achieved if vehicles undergo acceleration. For an arbitrary time point $s$ just before $t_{f,1}$, the velocity of vehicle $i$ corresponding to an arbitrary trajectory $x_i$ can be expressed by using the continuity of $\dot{x}_i$:
\begin{equation}\label{eq:lower_bound_velo_homogeneous}
\begin{aligned}
   v_{\textit{max}} = \dot{x}_i(t_{f,1})  &=  \dot{x}_i(s) + \int_s^{t_{f,1}} \ddot{x}_i(w)\ dw \\
   &\leq  \dot{x}_i(s) + a_{\textit{max},\dagger}(t_{f,1} - s), \quad s \leq t_{f,1},\ x_i \in \Pi_\dagger,
\end{aligned}
\end{equation}
where $\dagger = c$ if $\psi(i) = 0$ (car), and $\dagger = t$ if $\psi(i) = 1$ (truck). Thus there is a lower bound on the velocity of vehicle $i$ in the period just before $t_{f,1}$. Further, this lower bound is meaningful (i.e.\ non-negative) only during the time interval $[t_{f,1} - \frac{v_{\textit{max}}}{a_{\textit{max},\dagger}} , t_{f,1}] $.

Similarly, as every trajectory $x_i$ in $\Pi_c$ or $\Pi_{\textit{tr}}$ is continuous, we can write, using the bound in Equation~\eqref{eq:lower_bound_velo_homogeneous}, for $x_i \in \Pi_\dagger $ and $s$ such that $ t_{f,1} - \frac{v_{\textit{max}}}{a_{\textit{max},\dagger}} \leq s \leq t_{f,1}$:
\begin{align}
-v_{\textit{max}}\sum_{k=0}^{i-1} \tau_{\psi(k), \psi(k+1)}  =  x_i(t_{f,1}) =&\ x_i(s) + \int_s^{t_{f,1}} \dot{x}_i(w)\ dw\nonumber \\
  \geq &\ x_i(s) + \int_s^{t_{f,1}} \left[v_{\textit{max}} - a_{\textit{max},\dagger}(t_{f,1} - w)\right]\ dw,\no \\
 \implies  x_i(s) \leq&\ -v_{\textit{max}}\sum_{k=0}^{i-1} \tau_{\psi(k), \psi(k+1)} -v_{\textit{max}}(t_{f,1} - s) + \frac{1}{2}a_{\textit{max},\dagger}(t_{f,1} - s)^2 . \no
\end{align}

The upper bound for $x_i(\cdot)$ is actually attained by the trajectories described in Equations~\eqref{eq:homogeneous_stop},~\eqref{eq:homogeneous_no_stop} for the entirety of the time interval $[t_{\textit{acc},i}, t_{f,1}]$. This can be seen by first expressing the velocity of vehicle $i$ as follows: 
\begin{align*}
    v_{\textit{max}} = \dot{x}_i(t_{f,1}) =&\ \dot{x}_i(s) + a_{\textit{max},\dagger}(t_{f,1} - t), \quad s \in [t_{\textit{acc},i}, t_{f,1}],\\
    \implies \dot{x}_i(s) =&\ v_{\textit{max}} - a_{\textit{max},\dagger}(t_{f,1} - t),
\end{align*}
and then repeating the same procedure to obtain an expression for the position $x_i$.

Thus, our optimal trajectory ensures that each vehicle achieves the closest position to the intersection at every time instant in the acceleration phase.

\textbf{2. In the sub-interval $[t_{\textit{dec},i}, t_{\textit{acc},i})$:} In the final part of this proof, we consider the sub-interval $[t_{\textit{dec},i}, t_{\textit{acc},i})$. Now depending on the delay experienced by vehicle $i$, this sub-interval may be composed of just a deceleration phase (see the required condition in Equation~\eqref{eq:homogeneous_no_stop}), or a deceleration phase followed by a phase of rest (see Equation~\eqref{eq:homogeneous_stop}). 
\begin{figure}[!t]
    \centering
    \begin{tikzpicture}
        \node[anchor=south west,inner sep=0] (image) at (0,0) {\includegraphics[width=0.8\textwidth]{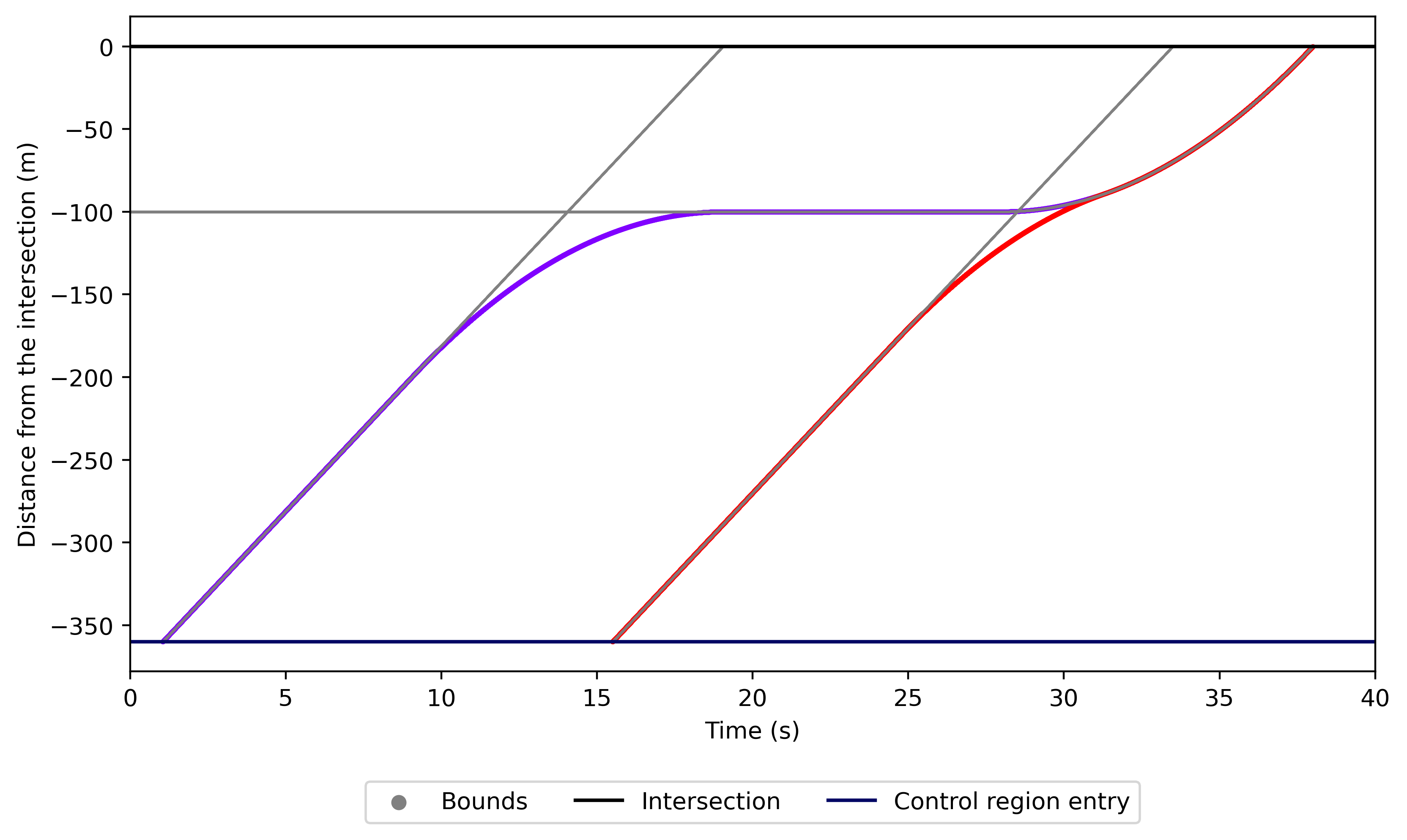}};
        \begin{scope}[x={(image.south east)},y={(image.north west)}]
            \fill (0.403, 0.747) circle[radius=1.8pt];
            \fill (0.723, 0.747) circle[radius=1.8pt];      
            \node at (0.6, 0.84) {Velocity not continuous};
            \draw[->] (0.48, 0.835)--(0.42,0.76);
            \draw[->] (0.71, 0.815)--(0.72,0.76);
        \end{scope}
    \end{tikzpicture}
    \caption{Optimal trajectory construction.}
    \label{fig:optimal_traj_construction}
\end{figure}

The construction of an optimal trajectory is demonstrated in Figure \ref{fig:optimal_traj_construction}. We have already determined the optimal trajectory around the start of the control region, as well as near the intersection (or the end of the control region). However, depending on the difference between the entering and crossing times of a vehicle, the middle part of the optimal trajectory changes. To see this, we have kept $t_{f,i}$ fixed for vehicle $i$ in Figure \ref{fig:optimal_traj_construction}, and we vary the entering time $t_{0,i}$ -- once to have a trajectory with a full stop in the control region, and once to have a trajectory with no stop. The shape of the optimal trajectory in the middle depends on where the optimal trajectory parts before the deceleration phase and from the acceleration phase onward intersect. At this point of intersection (marked in Figure \ref{fig:optimal_traj_construction}) however, the velocity function is not continuous. To be more precise, at such intersection points, the velocity jumps from the maximum permissible speed $v_{\textit{max}}$ to 0 if the trajectory was defined to be exactly as per the bounds obtained (in grey). Thus to maintain compatibility (i.e.\ belong to the classes $\Pi_c$ or $\Pi_{\textit{tr}}$), the trajectory must be `smoothed' sufficiently in the neighbourhood of this intersection point, which is where the deceleration (and an optional rest phase, depending on the delay of the vehicle) plays a role. In intuitive terms, it is clear that the optimal trajectory needs to start the deceleration phase as late as possible in order to stay as close to the intersection as possible. However, it is still not clear as to when is the latest time at which deceleration can start for the trajectory to be optimal. We claim that:
\begin{enumerate}
    \item if $t_{f,i} - t_{0,i} \geq \frac{x_0}{v_{\textit{max}}} + \frac{v_{\textit{max}}}{a_{\textit{max},\dagger}}$, choosing the start of deceleration process to be at the instant $\frac{x_0}{v_{\textit{max}}}-(t_{f,i}-t_{f,1}) + t_{0,i} - \frac{v_{\textit{max}}}{a_{\textit{max},\dagger}}$ along with decelerating at $-a_{\textit{max},\dagger}$ is the optimal decision, and
    \item if $\frac{x_0}{v_{\textit{max}}} \leq t_{f,i} - t_{0,i} < \frac{x_0}{v_{\textit{max}}} + \frac{v_{\textit{max}}}{a_{\textit{max},\dagger}}$, then starting the deceleration phase at time $t_{f,1} - 2\frac{v_{\textit{max}} - u}{a_{\textit{max},\dagger}}$ where $u$ is as in Equation~\eqref{eq:lowest_speed_homogeneous} along with decelerating at $-a_{\textit{max},\dagger}$ is optimal.
\end{enumerate}
Note that the claimed optimal start-of-deceleration time points exactly match those in the trajectories described before, in Equations~\eqref{eq:homogeneous_stop} and \eqref{eq:homogeneous_no_stop}.

Suppose that there exists a trajectory $x^\pi_i$ for vehicle $i$, that manages to get closer to the intersection at some time $t^\pi$ lying between $t_{\textit{dec},i}$ and $t_{\textit{acc},i}$. That is:
\allowdisplaybreaks
\begin{align*}
    x^\pi_i(t^\pi) >&\ x^*_i(t^\pi)\\
    \implies x^\pi_i(t_{\textit{dec},i}) + \int_{t_{\textit{dec},i}}^{t^\pi} \dot{x}^\pi_i(w) dw >&\ x^*_i(t_{\textit{dec},i}) + \int_{t_{\textit{dec},i}}^{t^\pi} \dot{x}^*_i(w) dw\\
    \implies \int_{t_{\textit{dec},i}}^{t^\pi} \dot{x}^\pi_i(w) - \dot{x}^*_i(w) dw >&\ x^*_i(t_{\textit{dec},i}) - x^\pi_i(t_{\textit{dec},i}) \geq 0,\\
    \implies \dot{x}^\pi_i(s) - \dot{x}^*_i(s) >&\ 0 \text{ for } s \in S \subseteq (t_{\textit{dec},i}, t_{\textit{acc},i}].
\end{align*}
That is, if vehicle $i$ attains a position closer to the intersection during the deceleration phase of $x^*_i$, then that implies that the velocity of vehicle $i$ under trajectory $x^\pi_i$ was higher than $x^*_i$ for some time therein. This could be the result of either decelerating later than optimal, or slower (on average). Consequently, vehicle $i$ attains a position under trajectory $x^\pi_i$ that is closer to the intersection than with trajectory $x^*_i$ at time $t_{\textit{acc},i}$, which means that:
\begin{enumerate}
    \item if vehicle $i$ is the head of a platoon, then it cannot satisfy its terminal condition for velocity, i.e.\ it cannot attain speed $v_{\textit{max}}$ when exiting the intersection, or
    \item if vehicle $i$ is not the head of the platoon, then covering a greater distance than under $x^*_i$ during the deceleration phase means that the safety constraint is violated for vehicles $i$ and $i-1$ (at least) at time $t_{f,1}$.
\end{enumerate}
In other words, there is no \emph{compatible} trajectory $x^\pi_i$ that can get closer to the intersection during the deceleration phase than $x^*_i$ and still satisfy the safety constraint, which means that $x^*_i$ is optimal in this sub-interval as well, thus completing the proof.
\end{proof}

Thus, we proved that if vehicle $i$ in a platoon is a truck, or if it is preceded by vehicles of the same type as itself, then the trajectories with the expressions \eqref{eq:homogeneous_stop} and \eqref{eq:homogeneous_no_stop} are optimal
in keeping the vehicle close to the intersection at all times in the control region. In fact, the arguments we used to prove optimality of the trajectory $x^*_i$ for the deceleration phase can be generalised:
\begin{corollary}\label{cor:optimality_deceleration_section}
Suppose, for a trajectory $x:[t_0, t_f] \to \R$, there exist two time points $t_1$ and $t_2$ in its domain with $t_1 < t_2$ such that the associated vehicle has a greater velocity at $t_1$ than at $t_2$ ($\dot{x}(t_1) > \dot{x}(t_2)$), and during the interval $(t_1,\ t_2)$, the velocity of the vehicle does not match $\dot{x}(t_1)$ and $\dot{x}(t_2)$. Further suppose that it can be shown that the optimal trajectory that keeps the vehicle closest to the intersection (given the trajectory of the preceding vehicle) coincides with $x$ for the interval $[t_0,\ t_1] \cup [t_2,\ t_f]$. Then, $x$ is the optimal trajectory if the associated vehicle decelerates maximally during the interval $[t_1, t_2)$, provided that the safety constraint is satisfied.
\end{corollary}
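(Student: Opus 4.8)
\noindent\emph{Proof strategy.}
The plan is to lift, to this abstract setting, the argument used for the deceleration sub-interval in the proof of \thref{prop:optimal_homogeneous}, with $t_1$ playing the role of $t_{\textit{dec},i}$ and $t_2$ the role of $t_{\textit{acc},i}$ (or $t_{\textit{stop},i}$). Let $x^*$ denote the trajectory that, for the given fixed trajectory of the preceding vehicle, keeps the vehicle closest to the intersection at every instant, and let $a_{\textit{max},\dagger}$ be the maximal feasible deceleration of the vehicle in question. By hypothesis $x^*$ coincides with $x$ on $[t_0,t_1]\cup[t_2,t_f]$, so $x$ is already optimal there; what remains is to show that no compatible trajectory $y$ that is safe with respect to the preceding vehicle can have $y(s)>x(s)$ for some $s\in(t_1,t_2)$. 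Since $x$ is itself assumed safe, this will make $x$ the optimal trajectory on all of $[t_0,t_f]$.

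First I would extract the structural content of the two standing hypotheses. Because the vehicle decelerates maximally on $[t_1,t_2)$ and its velocity never re-attains $\dot x(t_1)$ or $\dot x(t_2)$ in the open interval, the velocity there is the single uninterrupted linear segment $\dot x(s)=\dot x(t_1)-a_{\textit{max},\dagger}(s-t_1)$ — in particular it remains nonnegative up to $t_2$ and contains no rest plateau — so that for every $\sigma\in[t_1,t_2)$ we have $\dot x(s)=\dot x(\sigma)-a_{\textit{max},\dagger}(s-\sigma)$ on $[\sigma,t_2]$. The decisive ingredient is then the ``once ahead, always ahead'' lemma already implicit in \thref{prop:optimal_homogeneous}: if $y$ is compatible and $\dot y(\sigma)>\dot x(\sigma)$ for some $\sigma\in[t_1,t_2)$, then $\dot y(s)>\dot x(s)$ for all $s\in(\sigma,t_2]$. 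This is immediate from $|\ddot y|\le a_{\textit{max},\dagger}$, since $\dot y(s)\ge\dot y(\sigma)-a_{\textit{max},\dagger}(s-\sigma)>\dot x(\sigma)-a_{\textit{max},\dagger}(s-\sigma)=\dot x(s)$: once $y$ overtakes $x$ in speed it cannot fall back, because $x$ is already braking at the maximal rate.

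Next I would run the contradiction argument verbatim from \thref{prop:optimal_homogeneous}. Assume $y$ is compatible and safe and $y(t^\pi)>x(t^\pi)$ for some $t^\pi\in(t_1,t_2)$. Using $x(t_1)=x^*(t_1)\ge y(t_1)$ (optimality of $x^*$ at $t_1$) and integrating velocities on $[t_1,t^\pi]$,
\begin{equation*}
\int_{t_1}^{t^\pi}\bigl(\dot y(w)-\dot x(w)\bigr)\,dw=\bigl(y(t^\pi)-x(t^\pi)\bigr)+\bigl(x(t_1)-y(t_1)\bigr)>0,
\end{equation*}
so $\dot y(\sigma)>\dot x(\sigma)$ for some $\sigma\in(t_1,t^\pi)$. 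The lemma then gives $\dot y>\dot x$ on $(\sigma,t_2]\supseteq[t^\pi,t_2]$, hence $y-x$ is increasing on $[t^\pi,t_2]$ and $y(t_2)>x(t_2)=x^*(t_2)$. Since $y$ is safe and compatible, this contradicts the optimality of $x^*$ at time $t_2$. Therefore $x(s)\ge y(s)$ throughout $(t_1,t_2)$ for every admissible $y$, and $x$ is optimal.

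The step I expect to be the main obstacle is the first one: confirming that the two hypotheses — maximal deceleration on $[t_1,t_2)$ together with the interior velocity avoiding the two endpoint speeds — genuinely force $\dot x$ to be one unbroken maximal-braking segment, which is exactly what lets the ``once ahead, always ahead'' lemma be applied across the whole interval. (Dropping the second hypothesis would permit a rest plateau at speed $0=\dot x(t_2)$ to sit inside $(t_1,t_2)$; such a plateau, together with the following acceleration, would then have to be folded into the ``known-optimal'' piece $[t_2,t_f]$ instead.) Everything after that reduces to the integral comparison already carried out for \thref{prop:optimal_homogeneous}, and the degenerate cases $t_1=t_0$ or $t_2=t_f$ are handled in the same way.
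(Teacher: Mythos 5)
Your proof is correct and follows essentially the same route as the paper, which simply invokes the deceleration-phase argument from \thref{prop:optimal_homogeneous} with $t_1$, $t_2$ playing the roles of $t_{\textit{dec},i}$, $t_{\textit{acc},i}$. Your explicit ``once ahead, always ahead'' step --- that $\dot y(\sigma)>\dot x(\sigma)$ propagates to all of $(\sigma,t_2]$ because $x$ is already braking at the maximal rate, forcing $y(t_2)>x(t_2)=x^*(t_2)$ --- makes rigorous the passage from ``higher velocity on some set'' to ``strictly closer position at the end of the interval'' that the paper's version of the argument leaves implicit.
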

\begin{proof}
The proof follows from the same arguments used to show that the trajectory $x^*_i$ is optimal during the interval $[t_{\textit{dec},i},\ t_{\textit{acc},i}]$ in the proof of \thref{prop:optimal_homogeneous}. 
\end{proof}

Having established the optimality of the trajectories of trucks, it is very easy to show that the car trajectories described in Section~\ref{sec:SPA_closed_form} and Appendix~\ref{app:SPA_closed_form} are optimal. The following corollary establishes their optimality:
\begin{corollary}[Optimality part 2]\label{cor:optimal_inhomogeneous}
    Suppose a car holds the $i^{th}$ position in a platoon, and further suppose that there is at least one truck in the first $i-1$ vehicles in the same platoon, resulting in the trajectory of car $x^*_i$ described in either Section~\ref{sec:SPA_closed_form} or Appendix~\ref{app:SPA_closed_form}, depending on whether the last truck in the first $i-1$ vehicles makes a stop in the control region or not. Then, the trajectory $x^*_i$ keeps the corresponding car closest to the intersection at every instant of its journey throughout the control region, while satisfying the safety constraint w.r.t.\ the (optimal) trajectory $x^*_{i-1}$ of vehicle $i-1$. Thus:
    \begin{equation}\label{eq:sequential_opt_problem_strong_cfbt}
    \begin{aligned}
        x^\pi_i (s) \leq x^*_i(s) &\text{ for all } s \in [t_{0,i}, t_{f,i}], \text{ for every } x^\pi_i \in \begin{dcases}
            \Pi_c, &\text{ if } \psi(i) = 0,\\
            \Pi_{\textit{tr}}, &\text{ otherwise},
        \end{dcases}\\
        x^*_{i-1} (s) - y(s) \geq v_{\textit{max}} \tau_{\psi(i-1), \psi(i)}, &\text{ for all } s \in [t_{0,i}, t_{f,i-1}],\ y \in \{ x^\pi_i, x^*_i\} .
    \end{aligned}
    \end{equation}
\end{corollary}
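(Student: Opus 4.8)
The plan is to exploit the fact that every closed-form car trajectory $x^*_i$ from Section~\ref{sec:SPA_closed_form} and Appendix~\ref{app:SPA_closed_form} is assembled from three structural pieces: a cruising piece on $[t_{0,i}, t_{\textit{dec},i}]$ at speed $v_{\textit{max}}$; a maximal-car-deceleration piece at rate $-a_{\textit{max},c}$ on $[t_{\textit{dec},i}, T^*_i)$ (which is empty precisely in the ``car follows truck exactly'' case, where $T^*_i = t_{0,i}$, and in the trivial instance $t_{f,i}-t_{0,i} = x_0/v_{\textit{max}}$); and a shadowing piece on $[T^*_i, t_{f,i}]$, on which $T^*_i$ is defined in \eqref{eq:t_star} so that $x^*_i(s) = x^*_j(s) - v_{\textit{max}}\sum_{k=j}^{i-1}\tau_{\psi(k),\psi(k+1)}$ for $s \in [T^*_i, t_{f,j}]$, with $j$ the index of the closest preceding truck, and $x^*_i$ cruising at $v_{\textit{max}}$ on $[t_{f,j}, t_{f,i}]$. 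I would prove $x^\pi_i(s) \le x^*_i(s)$ piece by piece, handling the cruising and shadowing pieces first and the deceleration piece last, in the spirit of the proof of \thref{prop:optimal_homogeneous}.

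On the cruising piece the inequality is immediate: $\dot{x}^\pi_i \le v_{\textit{max}} = \dot{x}^*_i$ and both trajectories start at $-x_0$, so $x^\pi_i(s) \le -x_0 + v_{\textit{max}}(s - t_{0,i}) = x^*_i(s)$. For the shadowing piece, I would iterate the pairwise safety guarantee of \thref{prop:safety_spa_cs} (together with its counterpart from \cite{Timmerman2021PlatoonIntersections} for the intermediate cars) along $x^*_j, x^*_{j+1}, \dots, x^*_{i-1}$ to obtain $x^*_j(s) - x^*_{i-1}(s) \ge v_{\textit{max}}\sum_{k=j}^{i-2}\tau_{\psi(k),\psi(k+1)}$ for $s \in [t_{0,i-1}, t_{f,j}]$, and then chain this with the hypothesis $x^*_{i-1}(s) - x^\pi_i(s) \ge v_{\textit{max}}\tau_{\psi(i-1),\psi(i)}$ on $[t_{0,i}, t_{f,i-1}]$. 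This yields $x^\pi_i(s) \le x^*_j(s) - v_{\textit{max}}\sum_{k=j}^{i-1}\tau_{\psi(k),\psi(k+1)}$ on the overlap $[t_{0,i}, t_{f,j}]$, hence $x^\pi_i(s) \le x^*_i(s)$ on $[T^*_i, t_{f,j}]$, where the shadowing identity makes this bound tight (note $T^*_i \le t_{f,1} \le t_{f,j}$ since $T^*_i \in \{t_{0,i}, t_{\textit{sw},i}, t_{\textit{stop},i}, t_{\textit{acc},i}\}$ in all cases). The leftover stretch $[t_{f,j}, t_{f,i}]$ then follows from $\dot{x}^\pi_i \le v_{\textit{max}}$ applied from $t_{f,j}$, using that $x^\pi_i(t_{f,j}) \le x^*_i(t_{f,j})$ has just been established and that $x^*_i$ cruises there.

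For the deceleration piece $[t_{\textit{dec},i}, T^*_i)$ I would appeal to \thref{cor:optimality_deceleration_section} with $t_1 = t_{\textit{dec},i}$, $t_2 = T^*_i$: on this interval the car decelerates at the maximal rate $-a_{\textit{max},c}$ (the lower rate $-a_{\textit{max},tr}$ only appears after $T^*_i$, within the shadowing piece), $\dot{x}^*_i$ decreases strictly from $v_{\textit{max}}$ to $\dot{x}^*_i(T^*_i)$ without meeting either endpoint value in between, the optimal trajectory coincides with $x^*_i$ on $[t_{0,i}, t_{\textit{dec},i}] \cup [T^*_i, t_{f,i}]$ by the previous paragraph, and $x^*_i$ is safe by \thref{prop:safety_spa_cs}; the corollary then delivers $x^\pi_i \le x^*_i$ on $[t_{\textit{dec},i}, T^*_i)$ too. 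Assembling the three pieces (with the constant-speed instance absorbed into the cruising piece) gives \eqref{eq:sequential_opt_problem_strong_cfbt}, and since $x^*_i$ itself belongs to the appropriate class $\Pi_c$ and is safe, it is indeed the optimum.

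I expect the main obstacle to be purely verificational rather than conceptual: one must check, reading the closed-form expressions of Section~\ref{sec:SPA_closed_form} and Appendix~\ref{app:SPA_closed_form} case by case, that every case genuinely decomposes into these three pieces with the stated deceleration rate and the stated shadowing identity $x^*_i = x^*_j - v_{\textit{max}}\sum_{k=j}^{i-1}\tau_{\psi(k),\psi(k+1)}$ on $[T^*_i, t_{f,j}]$, and that the closest-preceding-truck index $j$ is well defined and shared by the chain $x^*_j, \dots, x^*_{i-1}$ entering the iterated safety bound (the sub-case $i-1 = j$, where the car immediately trails the truck, is immediate from the closed forms). Once this bookkeeping is done, the pieces concatenate routinely.
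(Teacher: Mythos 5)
Your proof is correct and follows essentially the same route as the paper's: split the trajectory at $T^*_i$, dispose of the cruising and post-catch-up (shadowing) portions via the maximum-speed bound and the truck's optimality from \thref{prop:optimal_homogeneous}, and invoke \thref{cor:optimality_deceleration_section} with $t_1 = t_{\textit{dec},i}$ and $t_2 = T^*_i$ for the maximal-deceleration stretch. The only substantive difference is that you make explicit the chaining of pairwise safety constraints through the intermediate cars to bound $x^\pi_i$ by the shifted truck trajectory on $[T^*_i, t_{f,j}]$, a step the paper's proof leaves implicit.
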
 
\begin{proof}
    Since we are dealing with a car which is preceded by a truck, we denote the position of the truck in the platoon by the index $j$, where $j \in \{0, \cdots, i-1\}$. Now, notice that the trajectory of the car can be divided broadly into two parts:
    \begin{enumerate}
        \item the trajectory of the car until the time it catches up with its closest preceding truck, i.e.\ in the interval $[t_{0,i}, T^*_i)$ (see the definition in \eqref{eq:t_star}), and
        \item the trajectory of the car after it has caught up with its closest preceding truck, i.e.\ during the interval $[T^*_i, t_{f,i}]$. 
    \end{enumerate}
    With this partition of the trajectory of the car, we can consider each part separately to see if the trajectories we described are optimal.

    Firstly, we consider the second part of the car's trajectory, i.e.\ in the interval $[T^*_i, t_{f,i}]$, after it has caught up with the closest preceding truck. In this time interval, the car follows the exact trajectory of the truck, and the distance between them is always the minimum distance required to satisfy the safety constraint, $v_{\textit{max}}\sum_{k=j}^{i-1}\tau_{\psi(k), \psi(k+1)}$. In other words, there is no possible trajectory that can perform better in terms of being close to the intersection at any instant in this time interval without violating the safety constraint. Since we already proved in \thref{prop:optimal_homogeneous} that our proposed trajectory for a truck is optimal, it follows that the proposed trajectory $x^*_i$ for the car is also optimal in this time interval. Note that this completely proves optimality of $x^*_i$ for the case when the car follows the truck exactly throughout the control region, i.e.\ when $T^*_i = t_{0,i}$. Thus for the next part, $T^*_i \neq t_{0,i}$.

    We now turn to the other part of the trajectory of the car, covering the duration before it has caught up with the truck. Here, we observe that, regardless of the trajectory of the truck and the relative difference between the delay of car and the truck, the car always has exactly two phases in this part of the trajectory: travelling at (the maximum permissible) constant speed $v_{\textit{max}}$ and thereafter decelerating at $-a_{\textit{max},c}$.
    
    We have established before that there is no other compatible trajectory with better performance than $x^*_i$ during the phase involving travelling at speed $v_{\textit{max}}$, i.e.\ during $[t_{0,i}, t_{\textit{dec},i}]$, which means that the trajectory $x^*_i$ is optimal for the interval $[t_{0,i}, t_{\textit{dec},i}]$ as well. So far, we have established that the car trajectory is optimal in the time period $[t_{0,i}, t_{\textit{dec},i}] \cup [T^*_i, t_{f,i}]$. We can directly apply \thref{cor:optimality_deceleration_section}, with $t_1 = t_{\textit{dec},i}$ and $t_2 = T^*_i$, to obtain that $x^*_i$ is also optimal for the (sub-)interval $(t_{\textit{dec},i},T^*_i)$, thus completing the proof.
    \end{proof}

Finally, we now show how the trajectories that are optimal for the sequential optimisation procedure are also optimal for the joint optimisation problem. 
\begin{proposition}[Joint optimisation]\label{prop:jointly_optimal}
Consider a platoon of $m$ vehicles, labelled from $1$ to $m$. Suppose that the trajectory of each of those vehicles (denoted by $x^*_i$) is obtained either from Section \ref{sec:SPA_closed_form} or from Appendix \ref{app:SPA_closed_form}, by following the appropriate case distinction, and by taking into account the type of the current vehicle as well as the type of preceding vehicles in its platoon. Then, those vehicle trajectories form a solution to the optimisation problem \eqref{eq:opt_problem_general_form}, with $f$ given by Equation~\eqref{eq:optimization_criterion_capacity}.
\end{proposition}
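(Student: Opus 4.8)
The plan is to derive the joint optimality from the \emph{pointwise} optimality of the sequential construction that is already established in \thref{prop:optimal_homogeneous} and \thref{cor:optimal_inhomogeneous}, by a short induction along the platoon. The starting observation is that every position function takes values in $[-x_0,0]$ while in the control region, so the objective \eqref{eq:optimization_criterion_capacity} rewrites as $f(x_1,\dots,x_m) = -\sum_{i=1}^m \int_{t_{0,i}}^{t_{f,i}} x_i(s)\,ds$; minimising $f$ thus means pushing every $x_i$ up (as close to the intersection) as possible, all at once. Consequently it suffices to prove that for \emph{any} tuple $(x^\pi_1,\dots,x^\pi_m)$ that is feasible for \eqref{eq:opt_problem_general_form} one has $x^*_i(s)\ge x^\pi_i(s)$ for all $i$ and all $s\in[t_{0,i},t_{f,i}]$; integrating and summing then yields $f(x^*)\le f(x^\pi)$. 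Feasibility of $x^*$ itself is immediate: each $x^*_i$ belongs to the appropriate class $\Pi_c$ or $\Pi_{\textit{tr}}$ by the (boxed) feasibility conditions recorded with each case, and the inter-vehicle safety constraints hold by \thref{prop:safety_spa_cs} together with the corresponding result of \cite{Timmerman2021PlatoonIntersections} for the remaining vehicle types.

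For the base case $i=1$, the head of the platoon is subject only to the membership constraint $x^\pi_1\in\Pi_\dagger$, and the argument in the proof of \thref{prop:optimal_homogeneous} applies (it is in fact simpler, since there is no preceding vehicle and hence no safety constraint), giving $x^*_1(s)\ge x^\pi_1(s)$ on $[t_{0,1},t_{f,1}]$. For the inductive step, assume $x^*_{i-1}(s)\ge x^\pi_{i-1}(s)$ for all $s$. The key point is a monotonicity of the admissible set: since $(x^\pi_1,\dots,x^\pi_m)$ is feasible, $x^\pi_i(s)\le x^\pi_{i-1}(s)-v_{\textit{max}}\tau_{\psi(i-1),\psi(i)}$ on the overlap interval $[t_{0,i},t_{f,i-1}]$, and by the inductive hypothesis the right-hand side is at most $x^*_{i-1}(s)-v_{\textit{max}}\tau_{\psi(i-1),\psi(i)}$. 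Hence $x^\pi_i$ is a compatible trajectory that is also safe with respect to the fixed optimal preceding trajectory $x^*_{i-1}$, which is exactly the hypothesis of \thref{prop:optimal_homogeneous} (when vehicle $i$ is a truck, or a car preceded only by cars) or of \thref{cor:optimal_inhomogeneous} (when vehicle $i$ is a car preceded somewhere by a truck). Applying the relevant statement gives $x^*_i(s)\ge x^\pi_i(s)$ on $[t_{0,i},t_{f,i}]$, closing the induction, and the integration step above then finishes the proof.

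The step I expect to need the most care is the monotonicity argument: one must check that lowering the preceding vehicle's trajectory genuinely only tightens the following vehicle's constraint (so that a competitor in the joint problem is automatically admissible in the sequential problem \eqref{eq:miculescu_opt_prob} against $x^*_{i-1}$), and that the time interval on which the safety constraint in \eqref{eq:opt_problem_general_form} is imposed for the pair $(i-1,i)$ contains the interval $[t_{0,i},t_{f,i-1}]$ appearing in the hypotheses of \thref{prop:optimal_homogeneous} and \thref{cor:optimal_inhomogeneous}, so that $x^\pi_i$ really does meet those hypotheses. Once this bookkeeping is settled, everything else is routine. Incidentally, the argument shows that $x^*$ is the pointwise-largest feasible tuple, so it is the essentially unique minimiser, although the statement only claims that it is \emph{a} solution.
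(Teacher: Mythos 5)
Your proposal is correct and follows essentially the same route as the paper: an induction along the platoon in which the pointwise optimality of $x^*_{i-1}$ (from \thref{prop:optimal_homogeneous} and \thref{cor:optimal_inhomogeneous}) guarantees that any competitor $x^\pi_i$ from the joint problem is automatically safe with respect to $x^*_{i-1}$, hence dominated pointwise by $x^*_i$, after which integration and summation give joint optimality. The only cosmetic difference is that you phrase it as direct pointwise domination of an arbitrary feasible tuple, whereas the paper compares against an assumed optimal solution $\hat{x}$ and concludes equality of the objective values; the substance is identical.
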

\begin{proof}
Suppose that some \emph{compatible} trajectories $\hat{x}_i$, $i \in \{1, \cdots, m\}$, form an optimal solution to the joint optimisation problem \eqref{eq:opt_problem_general_form}-\eqref{eq:optimization_criterion_capacity}. Then, we have the following relation:
\begin{equation}\label{eq:joint_opt_proof_step_0}
   \sum_{i=1}^m \int_{t_{0,i}}^{t_{f,i}} |\hat{x}_i(w)|\ dw \leq \sum_{i=1}^m \int_{t_{0,i}}^{t_{f,i}}  |x^*_i(w)|\ dw ,
\end{equation}
since the trajectories $x^*_i$ individually keep the vehicles closest to the intersection, but it is possible that the combined distances to the intersection may be minimised by a different set of trajectories, namely by $\hat{x}_i$. 

Now, we, know that the trajectory $x^*_1$ keeps the first vehicle in a platoon closest to the intersection at every instant while in the control region. Thus, it must follow that:
\begin{equation}
\begin{aligned}\label{eq:joint_opt_proof_step_1}
    x^*_1(s) &\geq \hat{x}_1(s), \text{ for all } s \in [t_{0,1},t_{f,1}],\\
    \implies \int_{t_{0,1}}^{t_{f,1}} |x^*_1(w)|\ dw &\leq  \int_{t_{0,1}}^{t_{f,1}} |\hat{x}_1(w)|\ dw,
\end{aligned}
\end{equation}
recalling that all trajectories have the range $[-x_0, 0]$, in the negative half of the real line.

Similarly, comparing the constraints on $x_2$ in the joint optimisation problem to the individual optimisation problem leads to the conclusion that the set of constraints is identical except for the safety constraint - for the individual problem \eqref{eq:miculescu_opt_prob} (similar to the problem formulated by \cite{Miculescu2020Polling-Systems-BasedSignals}) and \eqref{eq:optimality_strong_criterion_cot} or \eqref{eq:sequential_opt_problem_strong_cfbt} (depending on the situation), the safety condition is w.r.t.\ the optimal trajectory $x^*_1$ of the preceding vehicle, whereas for the joint optimisation problem \eqref{eq:opt_problem_general_form}-\eqref{eq:optimization_criterion_capacity}, the safety constraint is for a general trajectory $x_1$. It is easy to show that the trajectory $\hat{x}_2$ is also safe w.r.t.\ to the trajectory $x^*_1$:
\begin{equation*}
\begin{aligned}
    v_{\textit{max}}\tau_{\psi(1),\psi(2)} &\leq \hat{x}_1(s) - \hat{x}_2(s) \text{ for all } s \in [t_{0,2}, t_{f,1}],\\
    \implies v_{\textit{max}}\tau_{\psi(1),\psi(2)} + \hat{x}_2(s) &\leq \hat{x}_1(s) \leq x^*_1(s) \text{ (from \eqref{eq:joint_opt_proof_step_1})} \\
    \implies  v_{\textit{max}}\tau_{\psi(1),\psi(2)} &\leq x^*_1(s) - \hat{x}_2(s) \text{ for all } s \in [t_{0,2}, t_{f,1}].
\end{aligned}
\end{equation*}
The above statement also implies that $\hat{x}_2$ is a \emph{feasible} solution to the individual optimisation problem of the second vehicle, and hence we can write, similarly to \eqref{eq:joint_opt_proof_step_1}:
\begin{equation*}
    \int_{t_{0,2}}^{t_{f,2}} |x^*_2(w)|\ dw \leq  \int_{t_{0,2}}^{t_{f,2}} |\hat{x}_2(w)|\ dw.
\end{equation*}

By repeating the argument used for the trajectory of the second vehicle for all the vehicles that follow, we can write, for vehicle $i$ in the platoon:
\begin{equation}\label{eq:joint_opt_proof_step_2}
    \int_{t_{0,i}}^{t_{f,i}} |x^*_i(w)|\ dw \leq  \int_{t_{0,i}}^{t_{f,i}} |\hat{x}_i(w)|\ dw.
\end{equation}
By combining the inequalities in \eqref{eq:joint_opt_proof_step_1}-\eqref{eq:joint_opt_proof_step_2}, we get:
\begin{equation*}
     \sum_{i=1}^m \int_{t_{0,i}}^{t_{f,i}} |x^*_i(w)|\ dw \leq \sum_{i=1}^m \int_{t_{0,i}}^{t_{f,i}}  |\hat{x}_i(w)|\ dw .
\end{equation*}
The above inequality, in conjunction with the relation \eqref{eq:joint_opt_proof_step_0}, implies:
\begin{equation}\label{eq:joint_opt_proof_1A_step_3}
    \sum_{i=1}^m \int_{t_{0,i}}^{t_{f,i}} |x^*_i(w)|\ dw = \sum_{i=1}^m \int_{t_{0,i}}^{t_{f,i}}  |\hat{x}_i(w)|\ dw .
\end{equation}
In other words, the trajectories $x^*_i$, $i \in \{1, \cdots, m\}$ also form an optimal solution to the joint optimisation problem \eqref{eq:opt_problem_general_form}-\eqref{eq:optimization_criterion_capacity}.
\end{proof}

\section{Time separations}\label{app:time_separations}

Safety plays an important role in our framework - it is ensured via service times in the polling model used to schedule crossing times and through constraints in the joint optimisation problem. A unique feature of our framework is that the time separation, which is used to determine the service times and the safety constraints, depends on the vehicles involved. This makes the framework more flexible and scalable to greater heterogeneity in traffic. Here we briefly conceptualise how to compute the safety distance and time separation in various situations.

\subsection{Time separation for vehicles in the same lane}
According to \thref{def:safety}, the minimum required distance between two consecutive vehicles on the same lane depends on the maximum allowed speed and the corresponding time separation. To compute this separation, it is necessary to consider the worst-case scenario. Consider two vehicles - vehicle A of type $i$ followed by vehicle B of type $j$, both at maximum allowed speed $v_{\textit{max}}$ in the same lane, and separated by a distance of $v_{\textit{max}} \tau^{\ell_i, \ell_i}_{i, j}$. Now suppose that vehicle A starts decelerating suddenly with delayed/no communication. In the worst-case scenario, vehicle A would decelerate maximally, to a complete stop. Then, to avoid a collision, vehicle B would also need to decelerate after a brief response time $t_{\textit{res}}$, so that it comes to a complete stop $d_i + \delta$ units of distance away from the front of vehicle A, where $d_i$ is the length of vehicle A and $\delta~(>0)$ is a tolerance parameter that can be tuned depending on the vehicles involved. Thus, the time separation would need to be:
\begin{equation}\label{eq:time_separation_same_lane}
    \tau^{\ell_i, \ell_i}_{i, j} = t_{\textit{res}} + \frac{d_i + \delta}{v_{\textit{max}}} + \max\left[0, \frac{v_{\textit{max}}}{2} \left( \frac{1}{a_i} - \frac{1}{a_{j}} \right) \right],
\end{equation}
where $a_i$ and $a_j$ are the maximum feasible decelerations for vehicles A and B, given by:
\begin{equation*}
    a_k = 
    \begin{dcases}
        -a_{\textit{max},c}, \quad \text{ if vehicle of type $k$ is a car,} \\
        -a_{\textit{max},tr}, \quad \text{ if vehicle of type $k$ is a truck}.
    \end{dcases}
\end{equation*}

From Equation~\eqref{eq:time_separation_same_lane}, we see that the time separation is always at least the response time $t_{\textit{res}}$ plus the time required to cover the distance $d_i + \delta$ at speed $v_{\textit{max}}$. Additionally, depending on the type of vehicles, we have an extra term depending on the decelerations. If the two vehicles involved are of the same type, then that term vanishes. Also, if the following vehicle decelerates quicker than the preceding vehicle (i.e., when a truck is followed by a car), then again the last term does not contribute to the time separation. In fact, in this framework, the last term only contributes for the situation when a car is followed by a truck - the truck decelerates slower than the car, and so during the deceleration process, the truck will be at higher speeds for longer than the car, thus covering more distance than the car does. To avoid the possibility of a collision, the time separation is larger in this case, as can be seen in Table~\ref{tab:time_separation_same_lane}.

For our framework, the response time $t_{\textit{res}}$ is set to 0.5 seconds, which is the commonly assumed estimate for autonomous vehicles \cite{Feng2019}, and with the length of a car to be $5$ m and the length of a truck to be $10$ m, $v_{\textit{max}} = 20$ m/s, $\delta = 1$ m, $-a_{\textit{max},c} = -4$ m/s$^2$ and $-a_{\textit{max},tr} = -2$ m/s$^2$, we get:
\begin{table}[h!]
\centering
\begin{tabular}{c c *{2}{p{5mm}}}
&                            & \multicolumn{2}{p{27mm}}{{\centering Preceding Vehicle}}            \\ \cline{3-4} 
& \multicolumn{1}{c|}{}      & \multicolumn{1}{p{10mm}|}{Car} & \multicolumn{1}{p{10mm}|}{Truck} \\ \cline{2-4} 
\multicolumn{1}{c|}{\multirow{2}{*}{\parbox{1.6cm}{\centering Following Vehicle}}} & \multicolumn{1}{c|}{Car}   & \multicolumn{1}{p{5mm}|}{0.8}    & \multicolumn{1}{p{5mm}|}{1.05}      \\ \cline{2-4} 
\multicolumn{1}{c|}{}  & \multicolumn{1}{c|}{Truck} & \multicolumn{1}{p{5mm}|}{3.3} & \multicolumn{1}{p{5mm}|}{1.05}  \\ \cline{2-4} 
\end{tabular}
\caption{Time separation in various cases, same lane (in seconds).}
\label{tab:time_separation_same_lane}
\end{table}

\subsection{Time separation between vehicles travelling in different lanes}
In order to calculate the time separation between vehicles in different lanes, we refer to \textcite{Tachet2016RevisitingSystems}. This separation can be used to compute how far the first vehicle from a new platoon should be from the intersection area when the last vehicle from the current platoon is just starting to cross. The formula from \cite{Tachet2016RevisitingSystems} for the separation $\tau^{\ell_i, \ell_j}_{i,j}$ between preceding vehicle $i$ and following vehicle $j$ is slightly adapted to accommodate the heterogeneity of traffic, and is presented here below:
\begin{equation}\label{eq:time_separation_diff_lane}
   \tau^{\ell_i, \ell_j}_{i, j} = t_{\textit{res}} - \frac{v_{\textit{max}}}{2a_j} + \frac{s + d_i}{v_{\textit{max}}}, \quad \ell_i \neq \ell_j,
\end{equation}
where $d_i$ is the length of vehicle $i$ and $a_j$ is the maximum feasible deceleration for vehicle $j$ (negative, as before). Clearance times are implicitly absorbed into this time separation, in the last term $(\frac{s}{v_{\textit{max}}})$, which depends on the width $s$ of the intersection. The time separation values in Table~\ref{tab:time_separation_diff_lane} are computed assuming the lengths for vehicles as before, and the width of an intersection to be $8$ m. 
\begin{table}[h!]
\centering
\begin{tabular}{c c *{2}{p{5mm}}}
&                            & \multicolumn{2}{p{27mm}}{{\centering Preceding Vehicle}}            \\ \cline{3-4} 
& \multicolumn{1}{c|}{}      & \multicolumn{1}{p{10mm}|}{Car} & \multicolumn{1}{p{10mm}|}{Truck} \\ \cline{2-4} 
\multicolumn{1}{c|}{\multirow{2}{*}{\parbox{1.6cm}{\centering Following Vehicle}}} & \multicolumn{1}{c|}{Car}   & \multicolumn{1}{p{5mm}|}{3.65}    & \multicolumn{1}{p{5mm}|}{3.9}      \\ \cline{2-4} 
\multicolumn{1}{c|}{}  & \multicolumn{1}{c|}{Truck} & \multicolumn{1}{p{5mm}|}{6.15} & \multicolumn{1}{p{5mm}|}{6.4}  \\ \cline{2-4} 
\end{tabular}
\caption{Time separation in various cases, different lanes (in seconds).}
\label{tab:time_separation_diff_lane}
\end{table}

Note that the right-hand-side of Equation~\eqref{eq:time_separation_diff_lane} does not depend on the lanes in which vehicles $i$ and $j$ are travelling, a choice made here for convenience and simplicity. The platoon-forming algorithm is already implemented to handle lane-dependent time separations, and its speed-profiling counterpart can easily be extended to do so. 

A final remark concerns the concept behind time separations. As discussed before, a time separation between two vehicles is computed so as to allow the following vehicle to come to a complete stop in case of an unsafe situation. A possible extension could be to include some sort of a swerving manoeuvre to avoid obstacles/ unsafe situations, as discussed in \cite{Urmson2006}.

\end{document}